\newlist{abbrv}{itemize}{1}
\setlist[abbrv,1]{label=,labelwidth=1.2in,align=parleft,itemsep=0.1\baselineskip,leftmargin=!}
\newcommand{\abs}[1]{\left| #1 \right|}
\newcommand{\ups}{\Upsilon_s}
\newcommand{\upt}{\Upsilon_t}
\newcommand{\R}{\mathbb{R}}
\newcommand{\scp}[2]{\big\langle #1 , #2 \big\rangle}
\newcommand{\bra}[1]{\langle #1 |}
\newcommand{\ket}[1]{| #1 \rangle}
\newcommand{\norm}[1]{\left\| #1 \right\| }
\renewcommand{\Re}{\mathrm{Re}}
\renewcommand{\Im}{\mathrm{Im}}
\newcommand{\op}{\mathrm{op}}
\newcommand{\CL}{T}
\newcommand\ackname{Acknowledgements}
   \newenvironment{acknowledgements}{%
       \titlepage
       \null\vfil
       \@beginparpenalty\@lowpenalty
       \begin{center}%
         \bfseries \ackname
         \@endparpenalty\@M
       \end{center}}%
      {\par\vfil\null\endtitlepage}
\newcommand{\be}{\begin{equation}}
\newcommand{\ee}{\end{equation}}
\newtheorem{theorem}{Theorem}[section]
\newtheorem{lemma}[theorem]{Lemma}
\newtheorem{corollary}[theorem]{Corollary}
\theoremstyle{definition}
\newtheorem{definition}[theorem]{Definition}
\newtheorem{assumption}[theorem]{Assumption}
\newtheorem{remark}[theorem]{Remark}
\theoremstyle{plain}
\numberwithin{equation}{section}
\title{Landau--Pekar equations and quantum fluctuations for the dynamics of a strongly coupled polaron}
\author[1]{Nikolai Leopold}
\author[2]{David Mitrouskas}
\author[3]{Simone Rademacher} 
\author[4]{Benjamin Schlein}
\author[5]{Robert Seiringer}
\affil[1]{University of Basel, Department of Mathematics and Computer Science, Spiegelgasse 1, 4051 Basel, Switzerland, nikolai.leopold@unibas.ch}
\affil[2]{Universit\"at Stuttgart, Fachbereich Mathematik, Pfaffenwaldring 57,70569 Stuttgart, Germany, mitrouskas@mathematik.uni-stuttgart.de}
\affil[3,5]{Institute of Science and Technology Austria, Am Campus 1, 3400 Klosterneuburg, Austria, simone.rademacher@ist.ac.at, robert.seiringer@ist.ac.at}
\affil[4]{Institute of Mathematics, University of Zurich, Winterthurerstrasse 190, 8057 Zurich, Switzerland, benjamin.schlein@math.uzh.ch}
\begin{document}

\maketitle

\frenchspacing

\begin{abstract}
We consider the Fr\"ohlich Hamiltonian with large coupling constant $\alpha$. For initial data of Pekar product form 
with coherent phonon field and with the electron minimizing the corresponding energy, we provide a norm approximation 
of the evolution, valid up to times of order $\alpha^2$. The approximation is given in terms of a Pekar product state, evolved 
through the Landau--Pekar equations, corrected by a Bogoliubov dynamics taking quantum fluctuations into account. This allows us to show that the Landau-Pekar equations approximately describe the evolution of the electron- and one-phonon reduced density matrices under the Fr\"ohlich dynamics up to times of order $\alpha^2$. \\
\textbf{Keywords:} polaron dynamics, Landau-Pekar equations, quantum fluctuations, Bogoliubov dynamics. \\
\textbf{Mathematics Subject Classification:}  35Q40, 	46N50. 

\end{abstract}

\section{Introduction}

We are interested in the evolution of an electron in an ionic crystal. The electric charge of the electron creates a polarization field in the crystal, which acts back on the electron and modifies its physical properties. In situations in which the extension of the electron is much larger than the lattice spacing, the system can be described by the Fr\"ohlich model \cite{Fr}, which treats the crystal as a continuous medium and describes the polarization of the lattice as excitations (called phonons) of a quantum field. For a review on the current status of some of the mathematical results concerning the Fr\"ohlich polaron, we refer to \cite{FLST,JSM,S}.

We consider the dynamics of the Fr\"ohlich polaron  in the strong coupling limit $\alpha \gg 1$. In particular, we study the evolution of product initial data, describing a coherent phonon field and an electron minimizing the corresponding energy. Approximating the phonons through a classical field, we are led to a system of two coupled nonlinear partial differential equations, known as the Landau--Pekar equations \cite{LaPe}. The coupling parameter $\alpha$ enters the Landau--Pekar equations and makes the electron move much faster than the phonon field, producing a separation of scales, often referred to as adiabatic decoupling \cite{T};  while the electron wave function changes on a time-scale of order one, non-trivial variations of the phonon field only happen over times of order $\alpha^2$. 
Results on adiabatic theorems of the Landau--Pekar equations in one and three spatial dimensions can be found in \cite{FG2} and \cite{LRSS}, respectively.

The goal of this paper is to provide a norm approximation to the microscopic dynamics, valid up to times of order $\alpha^2$, allowing therefore for a non-trivial variation of the phonon field. To reach this goal, the classical evolution predicted by the Landau--Pekar equations has to be modified, taking into account quantum fluctuations that can be described by a time-dependent family of Bogoliubov transformations.  As a corollary of the norm approximation of the many-body dynamics, 
we also prove that the classical Landau--Pekar equations remain valid, up to times of order $\alpha^2$, if we only look at the time-evolution of the electron- and  one-phonon reduced density matrices, without the need of the Bogoliubov modification.  Previous results \cite{FS, FG, G, LRSS} justified the use of the Landau--Pekar equations at most for times small compared with $\alpha^2$, excluding therefore substantial changes of the phonon field. Recently in \cite{M}  a norm-approximation of the dynamics was obtained up to times of order $\alpha^2$, but only for initial data minimizing the Pekar energy functional, leading to a stationary solution of the Landau--Pekar equations. In the following, we will consider a larger class of initial data, given by the product of a general coherent phonon field $\varphi$ with an electron wave function minimizing the energy associated with the field $\varphi$. In particular, this produces non-trivial solutions of the Landau--Pekar equations. An important ingredient of our analysis is the adiabatic theorem proved in \cite{LRSS}. 
Since this theorem requires a gap in the spectrum of the electron Hamiltonian, our results are restricted to times $\vert t \vert \le  \CL \alpha^2$, where $\CL >0$ is a suitably chosen $\alpha$-independent constant, for which the existence of such a gap can be proved. If the existence of a spectral gap of order one were known for longer times, our results would hold for all times of order $\alpha^2$.

\subsection{Model and results}

We consider the Fr\"ohlich model, consisting of an electron with corresponding Hilbert space $L^2 (\mathbb{R}^3)$ coupled to a phonon field, described via the bosonic Fock space $\mathcal{F} = \bigoplus_{n\geq 0} L^2(\mathbb{R}^3)^{\otimes_s^n}$, where the subscript $s$ indicates symmetry under the interchange of variables. The Hilbert space 
of the full system is $\mathcal{H} = L^2(\mathbb{R}^3) \otimes \mathcal{F}$. To study the limit of large coupling $\alpha \gg 1$, it is useful to switch to strong coupling units and to introduce creation and annihilation operators satisfying the canonical commutation relations (CCR)
\begin{align}
\label{eq: commutation relations}
\left[ a_k, a^*_{k'} \right] = \alpha^{-2}  \delta(k - k') \, , \, 
\left[ a_k, a_{k'} \right] = \left[ a^*_{k}, a^*_{k'} \right] = 0
\end{align}
for all $k, k' \in \mathbb{R}^3$. The Fr\"ohlich Hamiltonian then takes the form 
\begin{align}
\label{def:ham}
H = - \Delta + \mathcal{N} +  \phi(G_x) ,
\end{align}
where the Laplacian is acting on the electron, $\mathcal{N}= \int dk\, a_k^* a_k $ denotes the number operator (which equals $\alpha^{-2}$ times the number of phonons) on $\mathcal F$ and
\begin{align}
\label{def:G}
\phi(G_x) =\int d k \, \left( G_x(k) a_k^* + \overline{G_x(k)} a_k \right), \quad G_x (k) =\frac{1}{\abs{k}} e^{- i k \cdot x} .
\end{align}
This Hamiltonian is obtained from the standard form of the Fr\"ohlich Hamiltonian by a suitable change of variables (see, e.g., \cite[Appendix A]{FS}).

In the limit of large $\alpha$, the CCR  (\ref{eq: commutation relations}) suggest that the quantized radiation field approaches 
a classical limit, and to approximate the full evolution generated by the Hamiltonian (\ref{def:ham}) by the corresponding classical Landau--Pekar equations 
\begin{align}
\label{eq:LP}
\begin{cases}
i \partial_t \psi_t = h_{\varphi_t} \psi_t , \\[1mm]
i \alpha^2 \partial_t \varphi_t = \varphi_t + \sigma_{\psi_t} 
\end{cases}
\end{align}
for the electron wave function $\psi_t \in H^1(\R^3)$ and  the classical field $\varphi_t \in L^2(\R^3)$. Here $h_{\varphi} = - \Delta + V_\varphi$ 
and\footnote{The Fourier transform  $\widehat{\cdot}$ is defined for $f \in L^1(\mathbb{R}^3)$ through $\widehat{f}(k) = (2 \pi)^{-3/2} \int d k \, e^{-ik \cdot x} f(x)$.}
\begin{align}
\label{def:pot,sigma}
V_\varphi  (x) = \int \frac{dk }{|k|} \left[ e^{ik \cdot x} \varphi (k) + e^{-ik \cdot x} \overline{\varphi} (k) \right] , \quad \sigma_{\psi}  (k) = (2 \pi)^{3/2} \frac{1}{|k|} \widehat{\vert \psi \vert^2 } (k) .
\end{align}
The well-posedness of the Landau--Pekar equations in the energy space $H^1( \mathbb{R}^3) \times L^2( \mathbb{R}^3)$ 
is shown in \cite[Lemma 2.1]{FG}.

Let us remark that the strong coupling limit is not only a semi-classical limit of the quantum field but corresponds also to an adiabatic limit, which refers to the separation of time-scales in \eqref{eq:LP} as $\alpha \to \infty$. This causes many additional obstacles in the analysis of the strong coupling limit and is also the reason for the particular form of the quantum fluctuations that are introduced below (see Definition \eqref{def:bogo}).

We are interested in the time-evolution generated by the Fr\"ohlich Hamiltonian \eqref{def:ham}, for initial data of Pekar product form 
\begin{align}
\label{eq:initial_state}
\psi_{0} \otimes W ( \alpha^2 \varphi_0 ) \Omega,
\end{align}
where $W( \alpha^2 \varphi_0 )$ is the Weyl operator defined by
\begin{align}
W( f) = e^{a^* ( f) - a(f)}, \quad f \in L^2(\mathbb R^3)
\end{align} 
and  $\Omega$ is the vacuum in the Fock space (but our results also apply more generally, to states in $\mathcal{F}$ having only few phonos). 
We will assume that the initial electron wave function $\psi_0$ is a ground state  of the Schr\"odinger operator $h_{\varphi_0}$ associated with the initial field $\varphi_0$,  introduced in (\ref{eq:LP}). For this reason, we will need the following assumption on $\varphi_0$.   
\begin{assumption}
\label{assumptions}
Let $\varphi_0 \in L^2(\mathbb{R}^3)$ such that
\begin{align}
e(\varphi_0) \coloneqq  \inf \lbrace\langle \psi, h_{\varphi_0} \psi \rangle : \psi \in H^1( \mathbb{R}^3), \| \psi \|_{L^2(\mathbb{R}^3)} =1 \rbrace <0 .
\end{align}
\end{assumption}
This assumption guarantees the existence of a unique positive ground state $\psi_{\varphi_0}$ of $h_{\varphi_0}$ with eigenvalue $e( \varphi_0)$ separated from the rest of the spectrum by a spectral gap 
\begin{align}
\Lambda_0 := \inf_{\substack{\lambda \in \mathrm{spec}( h_{\varphi_0})\\ \lambda \not= e(\varphi_0)}} \vert e(\varphi_0 ) - \lambda \vert > 0.
\end{align} 
Let now $(\psi_t,\varphi_t)$ be the solution of \eqref{eq:LP} with initial data $(\psi_{\varphi_0}, \varphi_0)$. 
As shown in Lemma \ref{lemma:resolvent}, there exists a constant $\CL > 0$ such that, for all times $|t| \leq \CL \alpha^2$, the operator $h_{\varphi_t}$ continues to have a unique positive ground state $\psi_{\varphi_t}$, with eigenvalue $e (\varphi_t) < 0$, separated from the rest of its spectrum by a gap $\Lambda_t$ of order one, independent of $\alpha$. 

The Landau--Pekar equations define an approximation of the evolution of \eqref{eq:initial_state} through product states having the same form as \eqref{eq:initial_state}, with $(\psi_0, \varphi_0)$  replaced by the solution $(\psi_t, \varphi_t)$ of \eqref{eq:LP}. It turns out, however, that in order to obtain a norm-approximation we have to modify this ansatz, implementing non-trivial correlations among phonons. This is achieved via a time-dependent family of Bogoliubov transformations. 

\begin{definition} 
\label{def:bogo}
Let $\varphi_0 $ satisfy Assumption \ref{assumptions} and let $( \psi_t, \varphi_t) \in H^1( \mathbb{R}^3) \times L^2( \mathbb{R}^3)$ denote the solution of the Landau--Pekar equations \eqref{eq:LP} with initial data  $( \psi_{\varphi_0}, \varphi_0 )$. Let $\Upsilon \in \mathcal{F}$ be in the quadratic form domain $\mathcal{Q}(\mathcal{N})$ with $\norm{\Upsilon}_{\mathcal{F}} = 1$. For $|t| \leq \CL \alpha^2$, we define the Bogoliubov dynamics $\upt$ as the solution to
\begin{align}
\label{eq: Bogoliubov dynamics}
\begin{cases}
i \partial_t \upt &= \left( \mathcal{N} - \mathcal{A}_t \right) \upt, \\[1mm]
\Upsilon_0 &= \Upsilon,
\end{cases}
\end{align}
where $\mathcal{A}_t$ is the quadratic operator on $\mathcal{F}$ 
\begin{equation}\label{eq:At}
\begin{split}
\mathcal{A}_t &= \langle \psi_{\varphi_t}, \, \phi (G_{\,\cdot\,}) \,  R_t \,  \phi ( G_{\,\cdot \,} ) \psi_{\varphi_t} \rangle_{L^2(\mathbb{R}^3)}  \\ &= \int \frac{dk}{|k|} \frac{dk'}{|k'|} \langle \psi_{\varphi_t} , e^{-ik \,\cdot \, } R_t e^{ik' \,\cdot \, } \psi_{\varphi_t} \rangle_{L^2(\mathbb{R}^3)}   \left( a_k^* + a_{-k} \right) \left( a_{-k'}^* +   a_{k'} \right)  .
\end{split}\end{equation}
Here, $R_t = q_t \left( h_{\varphi_t} - e( \varphi_t ) \right)^{-1} q_t$ with $q_t = 1- p_t= 1- \vert \psi_{\varphi_t} \rangle \langle \psi_{\varphi_t} \vert $. 
\end{definition}
It follows from Lemma  \ref{lemma:bogo} below that, for all $|t| \leq \CL \alpha^2$, Eq. (\ref{eq:At}) defines $\mathcal{A}_t$ as a self-adjoint operator on the domain of the number  operator $\mathcal{N}$. The well-posedness of the Bogoliubov dynamics \eqref{eq: Bogoliubov dynamics} is shown in Lemma \ref{lemma:bogo2}.

We are now ready to state our main results.

\begin{theorem}\label{thm:main}
Let $\varphi_0$ satisfy Assumption \ref{assumptions} and let $( \psi_t, \varphi_t) \in H^1( \mathbb{R}^3) \times L^2( \mathbb{R}^3)$ denote the solution of the Landau--Pekar equations \eqref{eq:LP} with initial data  $( \psi_{\varphi_0}, \varphi_0 )$. Let $\Upsilon \in \mathcal{F}$ satisfy
$\langle \Upsilon, \mathcal{N}^5 \Upsilon \rangle_{\mathcal{F}} \leq c \alpha^{-10}$ 
for a constant $c>0$, and let $\upt$ denote the solution to \eqref{eq: Bogoliubov dynamics} with initial data $\Upsilon_0=\Upsilon$. Moreover, let 
\begin{equation}
\omega (s) = \alpha^2 \Im \langle \varphi_s,  \partial_s \varphi_s \rangle_{L^2(\mathbb{R}^3)} + \| \varphi_s \|_{L^2(\mathbb{R}^3)}^2.
\end{equation}
Then, there exist $C,\CL>0$ such that 
\begin{equation}\label{eq:thm1}
\left\| e^{-iHt}  \left( \psi_{\varphi_0} \otimes W( \alpha^2 \varphi_0 ) \Upsilon \right)  - e^{-i \int_0^t ds \, \omega (s)} \psi_t  \otimes W( \alpha^2 \varphi_t ) \upt \right\|_{\mathcal{H}} \leq C \alpha^{-1}   
\end{equation}
for all $|t| \leq \CL \alpha^2$. 
\end{theorem}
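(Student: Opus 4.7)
Since the norm in \eqref{eq:thm1} is unitarily invariant, I would set
\begin{equation*}
\widetilde{\Psi}_t := W^*(\alpha^2 \varphi_t)\, e^{-iHt} \big( \psi_{\varphi_0} \otimes W(\alpha^2 \varphi_0) \Upsilon \big), \qquad
\widetilde{\Psi}_t^{\mathrm{app}} := e^{-i \int_0^t \omega(s)\,ds}\, \psi_t \otimes \upt
\end{equation*}
and estimate $\norm{\widetilde{\Psi}_t - \widetilde{\Psi}_t^{\mathrm{app}}}_{\mathcal{H}}$. The shift identity $W^*(\alpha^2 \varphi_t) a_k W(\alpha^2 \varphi_t) = a_k + \varphi_t(k)$, combined with the second Landau--Pekar equation $i \alpha^2 \partial_t \varphi_t = \varphi_t + \sigma_{\psi_t}$ and the choice of $\omega$, yields
\begin{equation*}
i \partial_t \widetilde{\Psi}_t = \big( h_{\varphi_t} + \mathcal{N} + \phi(G_x - \sigma_{\psi_t}) + \omega(t) \big) \widetilde{\Psi}_t,
\end{equation*}
while $\widetilde{\Psi}_t^{\mathrm{app}}$ is driven by $h_{\varphi_t} + \mathcal{N} - \mathcal{A}_t + \omega(t)$. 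Hence the generator of the error is $\Delta H_t := \phi(G_x - \sigma_{\psi_t}) + \mathcal{A}_t$, and $\widetilde{\Psi}_0 = \widetilde{\Psi}_0^{\mathrm{app}} = \psi_{\varphi_0} \otimes \Upsilon$. Since $\Delta H_t$ is self-adjoint, $\partial_t \norm{\widetilde{\Psi}_t - \widetilde{\Psi}_t^{\mathrm{app}}}^2 = 2 \Im \langle \widetilde{\Psi}_t, \Delta H_t \widetilde{\Psi}_t^{\mathrm{app}} \rangle$, so the theorem reduces to proving $\big| \int_0^t \Im \langle \widetilde{\Psi}_s, \Delta H_s \widetilde{\Psi}_s^{\mathrm{app}} \rangle\, ds \big| \leq C \alpha^{-2}$ uniformly for $|t| \leq \CL \alpha^2$.

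\textbf{Integration by parts in time.} A pointwise bound on the integrand is only of order $\alpha^{-1}$, which yields $O(\alpha)$ over time-scale $\CL \alpha^2$ and is useless. The saving must come from oscillation in $s$. The crucial structural fact is $p_t (G_x - \sigma_{\psi_{\varphi_t}}) \psi_{\varphi_t} = 0$: the vector $(G_x - \sigma_{\psi_{\varphi_t}}) \psi_{\varphi_t}$ lies in the range of $q_t$, so
\begin{equation*}
(G_x - \sigma_{\psi_{\varphi_t}}) \psi_{\varphi_t} = (h_{\varphi_t} - e(\varphi_t))\, R_t\, (G_x - \sigma_{\psi_{\varphi_t}}) \psi_{\varphi_t},
\end{equation*}
with $\norm{R_t}_{\op} \leq \Lambda_t^{-1}$ uniformly in $|t| \leq \CL \alpha^2$ by Lemma~\ref{lemma:resolvent}. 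Combined with $i \partial_s \psi_s = h_{\varphi_s} \psi_s$, this lets one rewrite the linear-in-$(a,a^*)$ part of $\Delta H_s \widetilde{\Psi}_s^{\mathrm{app}}$ as $i \partial_s \big( \phi\big( R_s (G_x - \sigma_{\psi_s}) \psi_s \big)\, \widetilde{\Psi}_s^{\mathrm{app}} \big)$ plus commutator terms that either gain an extra factor $\alpha^{-2}$ from the slow equation $\alpha^2 \partial_s \varphi_s = -i(\varphi_s + \sigma_{\psi_s})$, or, after a second application of the resolvent, reproduce $\mathcal{A}_s \widetilde{\Psi}_s^{\mathrm{app}}$ with the opposite sign. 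This second-order cancellation is precisely the reason for the Bogoliubov counterterm in \eqref{eq: Bogoliubov dynamics}. Integrating in $s$ converts the leading contribution into a boundary term of size $O(\alpha^{-1})$, controlled via the moment bounds $\langle \upt, \mathcal{N}^k \upt \rangle \leq C \alpha^{-2k}$ for $k \leq 5$ propagated by Lemma~\ref{lemma:bogo2}; the replacement of $\psi_{\varphi_s}$ by the Landau--Pekar solution $\psi_s$ in the intermediate formulas is quantified by the adiabatic theorem of \cite{LRSS}.

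\textbf{Principal difficulty.} The central technical obstacle is that $\phi(G_x)$ is not relatively bounded by $\mathcal{N}$, because $G_x \notin L^2(\R^3)$: every $L^2_k$-norm of $G_x$ has to be obtained from the $x$-averaging that produces the subtraction $\sigma_{\psi_s}$, or from absorbing high-momentum factors into the resolvent $R_t$, or from borrowing electron kinetic energy via form bounds. Propagating the required $\mathcal{N}$-moments along \eqref{eq: Bogoliubov dynamics} relies on a quadratic bound of the type $\pm \mathcal{A}_t \leq C(\mathcal{N} + \alpha^{-2})$ together with a Grönwall argument, and the constants remain uniform in $t$ precisely because Lemma~\ref{lemma:resolvent} supplies an $\alpha$-independent spectral gap for $|t| \leq \CL \alpha^2$. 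This gap restriction is the sole reason the norm approximation is proven up to $\CL \alpha^2$ rather than all times of order $\alpha^2$. Conceptually the argument is a quantitative second-order perturbation in $\alpha^{-1}$ around the adiabatic ground-state subspace, in which $\mathcal{A}_t$ realises the diagonal part of the second-order correction on that subspace.
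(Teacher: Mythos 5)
Your proposal is correct and follows essentially the same route as the paper: Weyl-conjugate to the fluctuation dynamics, compare with the product ansatz via Duhamel/Grönwall, exploit the spectral gap by inserting the resolvent $R_s$ and integrating by parts in time, observe that the resulting second-order term $p_s\,\phi(\delta_s G_x) R_s \phi(G_x)\psi_{\varphi_s}$ is exactly $\mathcal{A}_s\psi_{\varphi_s}$ and cancels the Bogoliubov counterterm, and control the remainders with the $\mathcal{N}$-moment bounds of Lemma \ref{lemma:bogo2}, the adiabatic theorem, and Lieb--Yamazaki-type bounds for $\phi(G_x)$. The only cosmetic difference is that the paper first replaces $\psi_t$ by $\psi_{\varphi_t}$ and moves the time derivative onto the exact solution via $q_s\xi_s = R_s(i\partial_s - \mathcal{N} - \phi(\delta_s G_x))\xi_s$, whereas you construct the total derivative on the approximate-state side; these are equivalent forms of the same argument.
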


\begin{remark}
For the proof of the theorem, the persistence of a spectral gap 
\begin{equation}\label{sp-gap}
\Lambda_t := \inf_{\substack{\lambda \in \mathrm{spec}( h_{\varphi_t})\\ \lambda \not= e(\varphi_t)}} \vert e(\varphi_t ) - \lambda \vert
\end{equation} 
of order one is crucial. For this reason, our result \eqref{eq:thm1} is restricted to times $|t| \leq T \alpha^2 $ for which such a gap can be proven, with a constant $T$ depending on the initial field  $\varphi_0$. In fact, let $\widetilde{T} >0$ and assume that there exists  a $\Lambda>0$ such that  $\Lambda_t > \Lambda$ for all $|t| \leq \alpha^2 \widetilde{T}$. Then our proof shows that 
\begin{equation}
\label{eq:thm1new}
\left\| e^{-iHt}  \left( \psi_{\varphi_0} \otimes W( \alpha^2 \varphi_0 ) \Upsilon \right)  - e^{-i \int_0^t ds \, \omega (s)} \psi_t  \otimes W( \alpha^2 \varphi_t ) \upt \right\|_{\mathcal{H}} \leq C \alpha^{-1}   e^{ |t| / \alpha^2} 
\end{equation}
for all $|t| \leq \alpha^2 \widetilde{T}$. 
\end{remark}

With the aid of Theorem \ref{thm:main}, we are able to obtain an approximation for the one-particle reduced density matrices of the electron resp. the phonons in terms of the solution of the Landau--Pekar equations up to times of order $\alpha^2$. The next statement provides a rigorous justification of the time-dependent Landau--Pekar equations starting from the microscopic dynamics generated by the Fr\"ohlich Hamiltonian in the strong coupling limit.

\begin{theorem} \label{corollary:reduced densities} Under the same assumptions as in Theorem \ref{thm:main}, let 
$\Psi_0 = \psi_{\varphi_0} \otimes W( \alpha^2 \varphi_0 ) \Upsilon$, and define the electron reduced density matrix
\begin{equation}
 \gamma^\textnormal{el}_t = \textnormal{Tr}_{\mathcal{F}} \ket{e^{- i Ht} \Psi_0} \bra{e^{- i Ht} \Psi_0} . 
\end{equation}
Then there exist constants $C,\CL > 0$ such that 
\begin{align}
\label{eq: bdelectron density}
\left\| \gamma^\textnormal{el}_t - |\psi_t \rangle \langle \psi_t| \right\|_\textnormal{tr} \leq C \alpha^{-1} 
\end{align}
for all $|t| \leq \CL \alpha^2$, where $\| . \|_\textnormal{tr}$ denotes the trace norm. 

If we additionally assume that $\varphi_0 \in L^2 (\mathbb{R}^3, |k|^{1/2} dk)$, we also find that for all $|t| \leq \CL \alpha^2$
\begin{align}
\label{eq: bdphonon density}
 \left\| \gamma^\textnormal{ph}_t  - |\varphi_t \rangle \langle \varphi_t | \right\|_\textnormal{tr} \leq  C \big( \alpha^{-1/4} +\alpha^{-2}\big) ,
\end{align}
where $\gamma^\textnormal{ph}_t$ is the one-phonon reduced density matrix defined through its integral kernel 
\begin{equation}
\gamma^\textnormal{ph}_t (k , k') = \langle e^{-i Ht} \Psi_0 , a_{k'}^* a_{k} e^{-i Ht} \Psi_0 \rangle_{\mathcal H} 
\end{equation}
for $k,k' \in \mathbb{R}^3$.
\end{theorem}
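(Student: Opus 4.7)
The plan is to derive both bounds from Theorem \ref{thm:main}, which gives norm-closeness of $e^{-iHt}\Psi_0$ to the approximating state
\[ \Phi_t := e^{-i\int_0^t \omega(s)\, ds}\, \psi_t \otimes W(\alpha^2\varphi_t)\Upsilon_t, \]
combined with moment bounds on the number operator $\mathcal{N}$ for the Bogoliubov evolution $\Upsilon_t$.

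For the electron density \eqref{eq: bdelectron density}, the argument is immediate. Since $W(\alpha^2\varphi_t)\Upsilon_t$ has unit norm in $\mathcal{F}$, the Fock partial trace of $|\Phi_t\rangle\langle\Phi_t|$ is exactly $|\psi_t\rangle\langle\psi_t|$. Combining the norm bound from Theorem \ref{thm:main} with the two standard facts $\||\Phi\rangle\langle\Phi|-|\Psi\rangle\langle\Psi|\|_\textnormal{tr} \leq 2\|\Phi-\Psi\|$ for unit vectors and that the partial trace is a contraction in trace norm yields \eqref{eq: bdelectron density}.

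For the phonon density \eqref{eq: bdphonon density} I split
\[ \gamma^\textnormal{ph}_t - |\varphi_t\rangle\langle\varphi_t| = \big(\gamma^\textnormal{ph}_t - \gamma^{\textnormal{ph},\Phi}_t\big) + \big(\gamma^{\textnormal{ph},\Phi}_t - |\varphi_t\rangle\langle\varphi_t|\big), \]
where $\gamma^{\textnormal{ph},\Phi}_t$ denotes the phonon reduced density matrix of $\Phi_t$. For the first summand I use duality $\|\cdot\|_\textnormal{tr} = \sup_{\|J\|_\textnormal{op}\leq 1}|\textnormal{Tr}(J\,\cdot\,)|$, writing the pairing as $\langle e^{-iHt}\Psi_0, A_J e^{-iHt}\Psi_0\rangle - \langle\Phi_t, A_J \Phi_t\rangle$ with $A_J := \int J(k,k')\, a_k^* a_{k'}\, dk\, dk'$. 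The bound $\|A_J\Psi\| \leq C\|J\|_\textnormal{op}\|\mathcal{N}\Psi\|$, combined with uniform control of $\|\mathcal{N}\Psi\|$ on $e^{-iHt}\Psi_0$ (via energy conservation for $H$) and on $\Phi_t$ (via the intertwining $W(\alpha^2\varphi_t)\mathcal{N}W^*(\alpha^2\varphi_t) = \mathcal{N} - a^*(\varphi_t) - a(\varphi_t) + \|\varphi_t\|^2$), reduces the trace norm of the first summand to an $O(\alpha^{-1})$ contribution via Theorem \ref{thm:main}. For the second summand I compute directly using the Weyl conjugation $W^*(\alpha^2\varphi_t) a_k W(\alpha^2\varphi_t) = a_k + \varphi_t(k)$ to obtain
\[ \gamma^{\textnormal{ph},\Phi}_t - |\varphi_t\rangle\langle\varphi_t| = \gamma^{\Upsilon_t} + |f_t\rangle\langle\varphi_t| + |\varphi_t\rangle\langle f_t|, \]
where $\gamma^{\Upsilon_t}(k,k') := \langle\Upsilon_t, a_{k'}^* a_k\Upsilon_t\rangle$ and $f_t(k) := \langle\Upsilon_t, a_k\Upsilon_t\rangle$. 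Since $\|\gamma^{\Upsilon_t}\|_\textnormal{tr} = \langle\Upsilon_t, \mathcal{N}\Upsilon_t\rangle$ and $\|f_t\|_{L^2} \leq \sqrt{\langle\Upsilon_t,\mathcal{N}\Upsilon_t\rangle}$ by Cauchy--Schwarz, the trace norm of the second summand is bounded by $\langle\Upsilon_t,\mathcal{N}\Upsilon_t\rangle + 2\|\varphi_t\|\sqrt{\langle\Upsilon_t,\mathcal{N}\Upsilon_t\rangle}$.

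The main obstacle is thus to bound $\langle\Upsilon_t,\mathcal{N}\Upsilon_t\rangle$ throughout the interval $|t|\leq\CL\alpha^2$. Starting from the initial bound $\langle\Upsilon,\mathcal{N}\Upsilon\rangle\leq c\alpha^{-2}$ (a consequence of the $\mathcal{N}^5$-moment assumption), I would propagate via Duhamel: $\partial_t\langle\Upsilon_t,\mathcal{N}\Upsilon_t\rangle = i\langle\Upsilon_t,[\mathcal{N},\mathcal{A}_t]\Upsilon_t\rangle$. The scaled CCR \eqref{eq: commutation relations} make $[\mathcal{N},\mathcal{A}_t]$ carry a factor $\alpha^{-2}$ and pick out only the pair-creating/annihilating part of $\mathcal{A}_t$; its expectation is controlled by Cauchy--Schwarz, using Hilbert--Schmidt bounds on the kernel $(k,k')\mapsto \langle\psi_{\varphi_t}, e^{-ik \,\cdot \,} R_t e^{ik' \,\cdot \,}\psi_{\varphi_t}\rangle/(|k||k'|)$. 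This is where the additional hypothesis $\varphi_0 \in L^2(\mathbb{R}^3,|k|^{1/2}\, dk)$ enters, providing the weighted estimates on $\sigma_{\psi_t}$ and on $\varphi_t$ needed to integrate against $|k|^{-1}$ factors along the Landau--Pekar flow. A Gronwall argument then yields $\langle\Upsilon_t,\mathcal{N}\Upsilon_t\rangle = O(\alpha^{-1/2})$, whose square root produces the rate $\alpha^{-1/4}$ in \eqref{eq: bdphonon density}; the subleading $\alpha^{-2}$ term reflects the initial data bound.
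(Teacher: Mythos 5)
Your argument for the electron reduced density matrix is correct and coincides with the paper's. For the phonon density matrix, however, there is a genuine gap in the first summand of your decomposition. The duality step requires the bound $\|A_J\Psi\|\leq C\|J\|_{\rm op}\|\mathcal{N}\Psi\|$ applied to $\Psi=e^{-iHt}\Psi_0$, i.e.\ a uniform-in-time bound on the \emph{second} moment $\langle e^{-iHt}\Psi_0,\mathcal{N}^2\,e^{-iHt}\Psi_0\rangle$. Energy conservation together with $\mathcal{N}\leq C(H+C)$ only controls the first moment; operator inequalities do not square, and a direct Gr\"onwall propagation of $\|\mathcal{N}\xi_t\|$ in the fluctuation frame fails because the commutator $[\mathcal{L}_s,\mathcal{N}]$ contains $a^*(G_x)$ with $G_x\notin L^2$, so it is not bounded by $\mathcal{N}^{1/2}$ alone. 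The paper avoids second moments entirely: it conjugates $\gamma^{\rm ph}_t$ by $W(\alpha^2\varphi_t)$ and observes that the resulting difference is a \emph{positive} operator (whose trace norm equals its trace, the first moment $\|\mathcal{N}^{1/2}W^*(\alpha^2\varphi_t)e^{-iHt}\Psi_0\|^2$) plus two rank-one terms controlled by the square root of the same quantity. That first moment is then estimated by splitting $\mathcal{N}=\mathcal{N}_\leq+\mathcal{N}_>$ at momentum $K$: the low-momentum part is $O(\alpha^{-1}K^{1/2}+\alpha^{-2})$ via Theorem \ref{thm:main} combined with a Gr\"onwall bound on $\|\mathcal{N}_\leq\xi_t\|$, and the high-momentum part is $O(K^{-1/2}+\alpha^{-2})$ via the Lieb--Yamazaki commutator bound $i[H,\mathcal{N}_>]\leq C\alpha^{-2}K^{-1/2}(H+C)$ together with the weighted bound on $\varphi_t$; the choice $K=\alpha$ yields $O(\alpha^{-1/2})$, whose square root is the $\alpha^{-1/4}$ in the statement.

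Your accounting of the rate and of the role of the hypothesis $\varphi_0\in L^2(\mathbb{R}^3,|k|^{1/2}dk)$ is also off. The Bogoliubov flow preserves $\langle\Upsilon_t,\mathcal{N}\Upsilon_t\rangle=O(\alpha^{-2})$ for all $|t|\leq \CL\alpha^2$ with no weighted assumption whatsoever (Lemma \ref{lemma:bogo2} together with $\langle\Upsilon,\mathcal{N}^5\Upsilon\rangle\leq c\alpha^{-10}$), not $O(\alpha^{-1/2})$; your second summand is therefore $O(\alpha^{-1})$, and if your scheme closed it would prove a rate $\alpha^{-1}$, strictly better than the theorem --- a sign that the loss must occur elsewhere, namely in comparing the full dynamics with the approximating state at the level of the phonon number. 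The extra regularity of $\varphi_0$ enters only there: it controls the ultraviolet tail $\|\chi(|k|\geq K)\varphi_t\|_2\leq K^{-1/4}\|\varphi_t\|_{L^2_{1/4}}$ of the coherent field (propagated along the Landau--Pekar flow) and the initial high-momentum content of $W(\alpha^2\varphi_0)\Upsilon$, not the Bogoliubov dynamics.
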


\begin{remark} While \eqref{eq: bdelectron density} is an almost immediate consequence of the norm approximation \eqref{eq:thm1}, the derivation of the bound for the phonons \eqref{eq: bdphonon density} is more elaborate. For its proof, we first derive an estimate for the number of phonons outside the coherent state, of the form
\begin{equation}\label{eq:bdN} 
\| \mathcal{N}^{1/2} W^* (\alpha^2 \varphi_t) e^{-i Ht} \Psi_0 \|_\mathcal{H}^2 \leq C  \big( \alpha^{-1/2} + \alpha^{-2} \big) ,
\end{equation}
which is then used to obtain \eqref{eq: bdphonon density}. Let us remark that without the additional regularity assumption $\varphi_0 \in L^2 (\mathbb{R}^3, |k|^{1/2} dk)$, we would obtain only a bound for the expectation value of $\mathcal N^{1/2}$, namely\footnote{The bound (\ref{eq:bdN4}) follows by proceeding as in (\ref{eq:N12-1}), (\ref{eq:N12-2}), with $\mathcal{N}_\leq$ replaced by $\mathcal{N}^{1/2}$, and applying Lemma \ref{lemma:N}.}
\begin{equation}\label{eq:bdN4} 
\| \mathcal{N}^{1/4} W^* (\alpha^2 \varphi_t) e^{-i Ht} \Psi_0 \|_\mathcal{H}^2 \leq C  \big(\alpha^{-1} + \alpha^{-2} \big).
\end{equation}
\end{remark}

While Theorem \ref{corollary:reduced densities} shows that the Landau--Pekar equations provide a good approximation for the one-particle reduced densities associated with the dynamics generated by the Fr\"ohlich Hamiltonian,  the introduction of the Bogoliubov dynamics (\ref{eq: Bogoliubov dynamics}) to capture quantum fluctuations is  crucial  to obtain a  norm-approximation of the full wave function in $\mathcal{H}$ for times of order $\alpha^2$. 
This is quantified in the following remark (whose proof is postponed to Section \ref{rmk:bogo}).

\begin{remark}
\label{rmk:B}
Under the same assumptions as in Theorem \ref{thm:main}, with $\Upsilon = \Omega$, and for $\delta >0$ sufficiently small, there exists a constant $C_\delta >0 $ such that for $ t = \delta \alpha^2$
\begin{align}
\| e^{-iHt}  \left( \psi_{\varphi_0} \otimes W( \alpha^2 \varphi_0 ) \Omega \right)  - e^{-i \int_0^t ds \, \omega (s)} \psi_t  \otimes W( \alpha^2 \varphi_t )  \Omega \|_{\mathcal{H}}^2  &\geq C_\delta 
\end{align}
for large $\alpha$. 
\end{remark}

The rest of the article is organized as follows. 
In Section \ref{section: preliminaries} we introduce some relevant notation, recall known properties of the Landau--Pekar equations and prove well-posedness of the Bogoliubov dynamics \eqref{eq: Bogoliubov dynamics} together with helpful bounds involving the operator $\mathcal A_t$. Theorem \ref{thm:main}, Theorem \ref{corollary:reduced densities}  and Remark \ref{rmk:B} are proven in Sections \ref{subsection: proof main theorem}, \ref{subsection: proof corollary reduced densities} and \ref{subsection: proof of remark bogo}, respectively.
 

\paragraph{Comparison with the literature.} 
The mathematically rigorous derivation of the Landau--Pekar equations from the Fr\"ohlich model in the strong coupling limit was initiated in \cite{FS}, where product states with stationary phonon field have been used to approximate the evolution of Pekar product states. Taking into account the evolution of the phonons, this result was improved in \cite{FG}, where $\psi_t \otimes W(\alpha^2 \varphi_t) \Omega$, with $(\psi_t,\varphi_t)$ solving the Landau--Pekar equations, was proven to approximate the many-body evolution up to times $|t| \ll \alpha$.

For the minimizer $(\psi^{\rm P}, \varphi^{\rm P})$ of the Pekar functional (which is a stationary solution of \eqref{eq:LP}, up to a phase) 
as particular initial state the validity of the Landau--Pekar equations was proven for times $t\ll \alpha^2$ in \cite{G}. An important observation in \cite{G}, which also plays an essential role in this work, is that the separation of time scales in the Landau--Pekar equations and the spectral gap of $h_{\varphi^{\rm P}}$ between its lowest eigenvalue and the rest of the spectrum give rise to an oscillatory phase,  which effectively keeps the electron  from leaving the ground state. Later, in \cite{LRSS} an adiabatic theorem for the Landau--Pekar equations was proven and used to show their accuracy for $t\ll \alpha^2$ for initial states of the form \eqref{eq:initial_state} with $\varphi_0$ satisfying Assumption \ref{assumptions} and $\psi_0 = \psi_{\varphi_0}$. 

The importance of the adiabatic theorem for the derivation of the Landau--Pekar equations was already realized in \cite{F} and an adiabatic theorem in one spatial dimension was proved in \cite{FG2}. The stationary case was revisited in \cite{M} where the norm approximation \eqref{eq:thm1} was proven for the particular initial state $\psi^{\rm P}\otimes W(\alpha^2 \varphi^{\rm P})\Upsilon$ for all times of order $\alpha^2$.

We note that the Landau--Pekar equations can also be derived in a many-body mean-field limit \cite{LMS}, where there is no separation of time scales, however. Similar results for related models are obtained in \cite{Fa, LPe,AF,LP,GNV, CCFO, CFO}.

\section{Preliminaries}

\label{section: preliminaries}

\subsection{Notation}
In the following, the letter $C$ is used as a generic constant independent of $t$ and $\alpha$. The $L^p$-norm of a function $f \in L^p(\mathbb{R}^3, \mathbb{C})$ with $0 < p \leq \infty$ is denoted by $\norm{f}_p$. We use
$\| \cdot \|$ and $\langle \cdot, \cdot \rangle$ for the norm and inner product on $\mathcal{H}= L^2(\mathbb{R}^3) \otimes \mathcal{F}$. Norms and inner products on different Hilbert spaces will always be indicated with the corresponding subscript.
To simplify the notation, we define for $f \in L^2(\mathbb{R}^3)$ the creation operator $a^*(f)$, the annihilation operator $a(f)$ and the field operator $\phi(f)$ by
\begin{align}
a^*(f) = \int d k \, f(k) a_k^* \, , \, \,
a(f) = \int d k \, \overline{f(k)} a_k
\, , \, \,
\phi(f) = a(f) + a^*(f) .
\end{align}
They are bounded with respect to the number  operator, i.e.,
\begin{align}
\label{eq: bounds for the creation and annihilation operators}
\| a(f) \xi \|_{\mathcal{F}} &\leq \| f \|_2 \| \mathcal{N}^{1/2} \xi \|_{\mathcal{F}}
\, , \,
\| a^*(f) \xi \|_{\mathcal{F}} \leq \| f \|_2 \| \left( \mathcal{N}+ \alpha^{-2} \right)^{1/2} \xi \|_{\mathcal{F}}
\quad \text{for all} \,
\xi \in \mathcal{F}.
\end{align}

\subsection{Properties of the Landau--Pekar equations}

In this section,  we collect useful properties of the Landau--Pekar equations. Their well-posed\-ness is shown in the following lemma.

\begin{lemma}[\cite{FG}, Lemma 2.1]
\label{lemma:LP}
For any $ ( \psi_0, \varphi_0) \in H^1( \mathbb{R}^3) \times L^2( \mathbb{R}^3)$, there is a unique global solution $( \psi_t, \varphi_t)$ of the Landau--Pekar equations \eqref{eq:LP}. The $L^2$-norm of the electron wave function is conserved,
$
\| \psi_t \|_{2} = \| \psi_0 \|_{2}  
$
for all $t \in \mathbb{R}^3$, and there exists a constant $C$ such that 
\begin{align}
\| \psi_t \|_{H^1( \mathbb{R}^3)} \leq C, \hspace{0.3cm} \| \varphi_t \|_{2} \leq C 
\end{align}
for all $\alpha >0$ and all $t \in \mathbb{R}$.
\end{lemma}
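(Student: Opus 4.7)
The plan is a classical four-step argument: set up an integral formulation, establish local well-posedness by a contraction argument in $C([0,T]; H^1(\mathbb{R}^3) \times L^2(\mathbb{R}^3))$, derive conservation laws, and conclude global existence together with uniform bounds from an a priori estimate.

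\textbf{Local well-posedness.} First I would decouple the system. The field equation is linear in $\varphi_t$ with a source $\sigma_{\psi_t}$, so given $\psi \in C([0,T]; H^1)$ it can be solved explicitly as
\begin{equation*}
\widetilde\varphi_t = e^{-it/\alpha^2} \varphi_0 - i\alpha^{-2} \int_0^t e^{-i(t-s)/\alpha^2} \sigma_{\psi_s}\, ds .
\end{equation*}
For the Schr\"odinger part, given $\varphi \in C([0,T]; L^2)$ I would check that $V_{\varphi_t}$ is real-valued and satisfies a bound of the type $\|V_\varphi\|_{L^2 + L^\infty} \leq C\|\varphi\|_2$ (by splitting the $k$-integral in the definition \eqref{def:pot,sigma} into $|k|\leq 1$ and $|k|>1$), hence it is an infinitesimally form-bounded perturbation of $-\Delta$. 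Kato's theorem for time-dependent Hamiltonians then produces a unitary propagator $U_\varphi(t,s)$ on $L^2(\mathbb{R}^3)$ that also preserves $H^1(\mathbb{R}^3)$, and $\widetilde\psi_t := U_\varphi(t,0)\psi_0$ solves the first equation. I would then show that the map $(\psi,\varphi) \mapsto (\widetilde\psi, \widetilde\varphi)$ is a contraction on a ball in $C([0,T_0]; H^1 \times L^2)$ for $T_0$ sufficiently small. The main analytic input is the estimate $\|\sigma_\psi\|_2 \leq C \|\psi\|_2^{3/2}\|\nabla\psi\|_2^{1/2}$, obtained by writing $\|\sigma_\psi\|_2^2$ as a Hardy-Littlewood-Sobolev integral $c\iint |\psi(x)|^2|\psi(y)|^2/|x-y|\, dx\, dy$ and applying Sobolev's embedding $H^1 \hookrightarrow L^{12/5}$.

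\textbf{Conservation laws.} Conservation of $\|\psi_t\|_2$ is immediate from self-adjointness of $h_{\varphi_t}$. Next I would show that
\begin{equation*}
\mathcal{E}(\psi_t, \varphi_t) := \langle \psi_t, h_{\varphi_t}\psi_t\rangle + \|\varphi_t\|_2^2
\end{equation*}
is conserved along the flow. The terms coming from $\partial_t\psi_t$ cancel by self-adjointness, leaving
\begin{equation*}
\frac{d}{dt}\mathcal{E}(\psi_t,\varphi_t) = \langle \psi_t, (\partial_t V_{\varphi_t})\psi_t\rangle + 2\Re \langle \varphi_t, \partial_t\varphi_t\rangle = 2\Re\langle \sigma_{\psi_t}, \partial_t\varphi_t\rangle + 2\Re\langle \varphi_t, \partial_t\varphi_t\rangle,
\end{equation*}
after recognising the first pairing via the Fourier representation of $V_\varphi$. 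Substituting the field equation $\partial_t\varphi_t = -i\alpha^{-2}(\varphi_t + \sigma_{\psi_t})$ into both terms shows that the contributions cancel exactly, so $\mathcal{E}(\psi_t,\varphi_t) = \mathcal{E}(\psi_0,\varphi_0)$.

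\textbf{A priori bounds and globalisation.} To extract $H^1\times L^2$ bounds I would use $\langle\psi,V_\varphi\psi\rangle = 2\Re\langle\sigma_\psi,\varphi\rangle$ together with Cauchy-Schwarz and the nonlinear estimate on $\|\sigma_\psi\|_2$ from Step 1. Young's inequality then gives, for any $\varepsilon>0$,
\begin{equation*}
\mathcal{E}(\psi_t,\varphi_t) \;\geq\; \|\nabla\psi_t\|_2^2 + \|\varphi_t\|_2^2 - \varepsilon\bigl(\|\nabla\psi_t\|_2^2 + \|\varphi_t\|_2^2\bigr) - C_\varepsilon \|\psi_t\|_2^6 .
\end{equation*}
Choosing $\varepsilon$ small and inserting conservation of $\|\psi_t\|_2 = \|\psi_0\|_2$ and of $\mathcal{E}$ gives uniform bounds on $\|\psi_t\|_{H^1}$ and $\|\varphi_t\|_2$, independent of $\alpha$ and $t$. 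Combining these with the local well-posedness result iteratively extends the solution to all of $\mathbb{R}$.

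\textbf{Main obstacle.} The most delicate point is the treatment of the time-dependent Schr\"odinger propagator: the potential $V_{\varphi_t}$ is only as regular as $\varphi_t$ allows (with the $|k|^{-1}$ factor making the low-frequency behaviour non-trivial), and the application of Kato's theory requires careful continuity-in-$t$ and uniform form-bound estimates that are preserved by the contraction iteration. Everything else is driven by the two algebraic identities underlying the conservation of $\|\psi_t\|_2$ and of $\mathcal{E}$.
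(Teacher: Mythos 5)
The paper does not prove this lemma; it is quoted verbatim from \cite{FG}, Lemma 2.1, and your argument is essentially the proof given there: Duhamel for the field equation, a fixed-point/propagator argument for the electron, conservation of $\| \psi_t \|_2$ and of the energy $\mathcal{E}(\psi,\varphi)=\langle \psi, h_\varphi \psi\rangle + \|\varphi\|_2^2$, and coercivity of $\mathcal{E}$ via $\|\sigma_\psi\|_2 \leq C \|\psi\|_2^{3/2}\|\nabla \psi\|_2^{1/2}$ to globalize. Your key identities ($\langle \psi, V_\varphi \psi\rangle = 2\Re\langle \sigma_\psi,\varphi\rangle$, the exact cancellation in $\tfrac{d}{dt}\mathcal{E}$ after inserting $\partial_t\varphi_t = -i\alpha^{-2}(\varphi_t+\sigma_{\psi_t})$, and the Young-inequality lower bound) are all correct, and the resulting bounds are indeed uniform in $\alpha$ because $\mathcal{E}$ and $\|\psi_0\|_2$ are $\alpha$-independent conserved quantities. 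The only point requiring real care is the one you flag yourself: propagating $H^1$ regularity for the linear Schr\"odinger flow with the time-dependent potential $V_{\varphi_t}\in L^2+L^\infty$ (a naive Duhamel iteration in $H^1$ fails since $(\nabla V_{\varphi})\psi$ need not be in $L^2$); this is handled in \cite{FG} by appealing to the appropriate time-dependent propagator theory, exactly as you propose.
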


The following lemma shows properties of the potential $V_\varphi$ and $\sigma_\psi$ defined in \eqref{def:pot,sigma} (see also \cite[Lemma III.2]{LRSS}).

\begin{lemma}\label{lemma:Potental} 
For  $V_\varphi$ defined in \eqref{def:pot,sigma}, there exists a constant $C>0$ such that for every $ \psi \in H^1( \mathbb{R}^3)$ and $ \varphi \in L^2( \mathbb{R}^3)$ 
\begin{align}
\| V_\varphi \|_6 \leq C  \| \varphi \|_2
 \hspace{0.3cm} 
 and \hspace{0.3cm} 
\| V_\varphi \psi \|_2 \leq  C  \| \varphi \|_2 \, \|\psi \|_{H^1( \mathbb{R}^3)} .
\end{align}
Moreover, let $\sigma_\psi$ be defined  in \eqref{def:pot,sigma}.  Then, there exists $C>0$ such that for all $\psi_1, \psi_2 \in H^1( \mathbb{R}^3) $
\begin{align}
\label{eq: bound sigma-psi}
\| \sigma_{\psi_1} \|_{2} \leq C \| \psi_1 \|_{H^1( \mathbb{R}^3)}^2 , \quad \| \sigma_{\psi_1} - \sigma_{\psi_2} \|_2 \leq C \left( \| \psi_1 \|_{H^1( \mathbb{R}^3)} + \| \psi_2 \|_{H^1( \mathbb{R}^3)} \right) \| \psi_1 - \psi_2 \|_2 .
\end{align}
\end{lemma}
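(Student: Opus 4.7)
The plan is to exploit the fact that both $V_\varphi$ and $\sigma_\psi$ are essentially of the form $(-\Delta)^{-1/2}$ acting on a suitable function, so that all four bounds reduce to well-known Sobolev/Hardy--Littlewood--Sobolev inequalities in three dimensions combined with Hölder's inequality. This is the same strategy used in \cite[Lemma III.2]{LRSS}.

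First I would treat the estimates for $V_\varphi$. Note that, up to a multiplicative constant, $V_\varphi(x) = 2\,\Re\int |k|^{-1} e^{ik\cdot x}\varphi(k)\,dk$ is (the real part of) the inverse Fourier transform of $\varphi(k)/|k|$. Writing this as $V_\varphi = c\,(-\Delta)^{-1/2} g$ with $g = \mathcal{F}^{-1}\varphi + \mathcal{F}^{-1}\overline{\varphi(-\,\cdot\,)}$ satisfying $\|g\|_2 \leq C\|\varphi\|_2$ by Plancherel, the bound $\|V_\varphi\|_6 \leq C\|\varphi\|_2$ follows from the endpoint Sobolev embedding $\dot H^1(\mathbb{R}^3) \hookrightarrow L^6(\mathbb{R}^3)$ (equivalently, from Hardy--Littlewood--Sobolev, which gives $(-\Delta)^{-1/2}\colon L^2 \to L^6$). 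For the second estimate I would apply Hölder with the split $1/2 = 1/6 + 1/3$ to get $\|V_\varphi \psi\|_2 \leq \|V_\varphi\|_6\|\psi\|_3$, and then use the Sobolev interpolation $\|\psi\|_3 \leq \|\psi\|_2^{1/2}\|\psi\|_6^{1/2} \leq C\|\psi\|_{H^1(\mathbb{R}^3)}$ together with the first bound.

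Next I would handle $\sigma_\psi$. By Plancherel,
\begin{equation*}
\|\sigma_\psi\|_2^2 = (2\pi)^3 \int \frac{|\widehat{|\psi|^2}(k)|^2}{|k|^2}\,dk
= C\,\big\||\psi|^2\big\|_{\dot H^{-1}(\mathbb{R}^3)}^{2},
\end{equation*}
and by duality with $\dot H^1 \hookrightarrow L^6$ we have $\|f\|_{\dot H^{-1}} \leq C\|f\|_{6/5}$. Applying this to $f = |\psi|^2$ gives $\|\sigma_\psi\|_2 \leq C\|\psi\|_{12/5}^{2}$, and the Sobolev interpolation $\|\psi\|_{12/5} \leq \|\psi\|_2^{3/4}\|\psi\|_6^{1/4} \leq C\|\psi\|_{H^1}$ yields the third bound. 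For the Lipschitz estimate, I would write $|\psi_1|^2 - |\psi_2|^2 = (\psi_1-\psi_2)\overline{\psi_1} + \psi_2\overline{(\psi_1-\psi_2)}$ and apply the same $L^{6/5}$ bound, followed by Hölder with $5/6 = 1/2 + 1/3$ so that
\begin{equation*}
\big\||\psi_1|^2 - |\psi_2|^2\big\|_{6/5} \leq \|\psi_1-\psi_2\|_2\big(\|\psi_1\|_3 + \|\psi_2\|_3\big),
\end{equation*}
and finish again with $\|\psi_i\|_3 \leq C\|\psi_i\|_{H^1}$.

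There is essentially no obstacle here: the argument is entirely routine harmonic analysis, and the only point requiring care is keeping track of the exponents so that the right Hölder split matches the $L^{6/5} \to L^2$ mapping of $(-\Delta)^{-1/2}$ and the available Sobolev embedding $H^1(\mathbb{R}^3) \hookrightarrow L^p(\mathbb{R}^3)$ for $p \in [2,6]$. No assumption beyond $\psi \in H^1$ and $\varphi \in L^2$ is needed, and the constant $C$ depends only on the Sobolev/HLS constants in three dimensions.
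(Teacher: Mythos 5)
Your proposal is correct and follows essentially the same route as the paper: for the Lipschitz bound on $\sigma_\psi$ the paper uses exactly your decomposition $|\psi_1|^2-|\psi_2|^2=(\psi_1-\psi_2)\overline{\psi_1}+\psi_2\overline{(\psi_1-\psi_2)}$ (written in Fourier form) together with Hardy--Littlewood--Sobolev, i.e.\ the $L^{6/5}\hookrightarrow \dot H^{-1}$ embedding you invoke, followed by H\"older and $H^1\hookrightarrow L^3$. The only difference is that the paper does not prove the first three inequalities at all but cites them from \cite[Lemma III.2]{LRSS}, whereas you supply the standard Sobolev/H\"older arguments for them; these are fine.
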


\begin{proof}
The first three inequalities follow from \cite[Lemma III.2]{LRSS}. For the last one, we write
\begin{align}
\sigma_{\psi_1}(k) - \sigma_{\psi_2}(k) & =  \frac{1}{|k|} \left( \langle \psi_1,\,  e^{- ik \cdot} \psi_1 \rangle_{L^2(\mathbb{R}^3)} - \langle \psi_2, \, e^{- ik \cdot} \psi_2 \rangle_{L^2(\mathbb{R}^3)} \right) \notag\\
& =  \frac{1}{|k|} \left( \langle \psi_1 - \psi_2, \, e^{-ik \cdot} \psi_1 \rangle_{L^2(\mathbb{R}^3)} + \langle \psi_2, e^{-ik \cdot} \left( \psi_1 - \psi_2 \right)\rangle_{L^2(\mathbb{R}^3)} \right) .
\end{align}
Thus, 
\begin{align}
\label{eq:sigma_estimate}
\| \sigma_{\psi_1} - \sigma_{\psi_2} \|_2^2 & \leq 2 \int \frac{dk}{|k|^2} \left(\vert \langle \psi_1 - \psi_2, \, e^{-ik \cdot} \psi_1 \rangle_{L^2(\mathbb{R}^3)} \vert^2 + \vert\langle \psi_2, e^{-ik \cdot} \left( \psi_1 - \psi_2 \right)\rangle_{L^2(\mathbb{R}^3)} \vert^2 \right) .
\end{align}
The Hardy--Littlewood--Sobolev and the Sobolev inequalities imply
\begin{align}
\int \frac{dk}{|k|^2} \vert \langle \psi_1 - \psi_2, \, e^{-ik \cdot} \psi_1 \rangle_{L^2(\mathbb{R}^3)} \vert^2& = C\int \frac {dxdy}{| x-y|}  \left( \psi_1 - \psi_2 \right) (x) \, \overline{\left( \psi_1 - \psi_2 \right) (y) } \, \overline{\psi_1 (x)} \, \psi_1 (y)
\notag \\
& \leq C \| \psi_1 \, \overline{\left( \psi_1 - \psi_2 \right) } \|_{6/5}^2\notag\\
& \leq C \| \psi_1 \|_3^2 \| \psi_1 - \psi_2 \|_2^2 \notag\\
& \leq  C \| \psi_1 \|^2_{H^1( \mathbb{R}^3)} \, \| \psi_1 - \psi_2 \|_2^2 .
\end{align}
The second term on the r.h.s. of \eqref{eq:sigma_estimate} can be bounded in the same way. Hence, the second inequality of \eqref{eq: bound sigma-psi} follows. 
\end{proof}
 
For the effective dynamics, the ground state $\psi_{\varphi_t}$ of the operator $h_{\varphi_t}$ plays an important role. The following lemma concerning its time evolution is proven in \cite{LRSS}.

\begin{lemma}[\cite{LRSS}, Lemma IV.1 and Remark III.1]
\label{lemma:minimizer}
Let $\varphi_0$ satisfy Assumption \ref{assumptions}. Then, there exists a unique positive and normalized ground state $\psi_{\varphi_0}$ of $h_{\varphi_0} = - \Delta + V_{\varphi_0}$. 
Moreover, let $( \psi_t, \varphi_t) \in H^1( \mathbb{R}^3) \times L^2( \mathbb{R}^3)$ denote the solution of the Landau--Pekar equations \eqref{eq:LP} with initial data  $( \psi_{\varphi_0}, \varphi_0 )$. There exists a constant $T >0$ such that for all $|t|  \leq T  \alpha^2$ the following properties hold: There exists a unique positive and normalized ground state $\psi_{\varphi_t}$ of $h_{\varphi_t} = - \Delta + V_{\varphi_t}$ with corresponding eigenvalue $e( \varphi_t) < 0$. It satisfies 
\begin{align}\label{partpsi}
\partial_t \psi_{\varphi_t} = \alpha^{-2} R_t  V_{i \varphi_t} \psi_{\varphi_t}
 \hspace{0.3cm} \mathrm{with} \hspace{0.3cm} R_t = q_t (h_{\varphi_t} - e(\varphi_t))^{-1} q_t ,
\end{align}
where $q_{t} = 1- | \psi_{\varphi_t} \rangle \langle \psi_{\varphi_t} |$ denotes the projection onto the subspace of $L^2 ( \mathbb{R}^3)$ orthogonal to the span of $\psi_{\varphi_t}$. Moreover, there exists $C>0$ such that
\begin{align}
\| \psi_{\varphi_t} \|_{H^1( \mathbb{R}^3)} \leq C.
\end{align}
\end{lemma}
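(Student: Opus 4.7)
The statement has four components I would tackle in order: (a) existence of a unique positive normalized ground state $\psi_{\varphi_0}$ separated by a gap $\Lambda_0 > 0$ from the rest of $\sigma(h_{\varphi_0})$; (b) persistence of this picture along the Landau--Pekar flow for $|t| \le T\alpha^2$; (c) the formula $\partial_t \psi_{\varphi_t} = \alpha^{-2} R_t V_{i\varphi_t}\psi_{\varphi_t}$; (d) the uniform $H^1$-bound. I expect (b) to be the analytic crux of the argument; the others rely on standard tools once (b) is in place.

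For (a), Lemma \ref{lemma:Potental} shows $V_{\varphi_0}$ is $(-\Delta)$-form-bounded (indeed $V_{\varphi_0}\in L^6$), so by Weyl's theorem $\sigma_{\mathrm{ess}}(h_{\varphi_0}) = [0,\infty)$. Since Assumption \ref{assumptions} gives $e(\varphi_0)<0$, any minimizing sequence for $\langle\psi,h_{\varphi_0}\psi\rangle$ with $\|\psi\|_2=1$ is uniformly bounded in $H^1$; a concentration-compactness argument, ruling out escape to infinity via the negativity of $e(\varphi_0)$, produces a minimizer $\psi_{\varphi_0}$. Uniqueness and strict positivity then follow from the Perron--Frobenius property of the semigroup $e^{-sh_{\varphi_0}}$ (positivity improving), so the lowest eigenvalue is simple and $\Lambda_0>0$ follows from its isolation below the essential spectrum.

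The heart of the proof is (b). From the second Landau--Pekar equation, together with Lemma \ref{lemma:LP} and the bound $\|\sigma_{\psi_t}\|_2 \le C \|\psi_t\|^2_{H^1}$ from Lemma \ref{lemma:Potental}, one gets $\|\partial_t\varphi_t\|_2 \le C\alpha^{-2}$, hence $\|\varphi_t - \varphi_0\|_2 \le C\alpha^{-2}|t| \le CT$ on $|t|\le T\alpha^2$. Combined with $\|(V_{\varphi_t} - V_{\varphi_0})\psi\|_2 \le C\|\varphi_t-\varphi_0\|_2 \|\psi\|_{H^1}$, this shows that $h_{\varphi_t}$ is a small $(-\Delta)$-bounded perturbation of $h_{\varphi_0}$ uniformly in $\alpha$, provided $T$ is small enough. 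Kato perturbation theory for isolated simple eigenvalues then delivers a simple isolated eigenvalue $e(\varphi_t) < 0$ with gap $\Lambda_t \ge \Lambda_0/2$; continuity in $t$ preserves positivity. The delicate point is the exact matching: the horizon scales as $\alpha^2$ precisely because $\|\partial_t\varphi_t\|_2 = O(\alpha^{-2})$, so $T$ must be chosen small enough that the accumulated $L^2$-deformation $CT$ of $\varphi_t$ does not allow the spectral gap to collapse — no further smallness from $\alpha$ is available.

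For (c), differentiate the eigenvalue equation $(h_{\varphi_t} - e(\varphi_t))\psi_{\varphi_t} = 0$ in $t$; taking the inner product with $\psi_{\varphi_t}$ yields $\partial_t e(\varphi_t) = \langle \psi_{\varphi_t}, (\partial_t V_{\varphi_t})\psi_{\varphi_t}\rangle$, and applying $q_t$ (which commutes with $h_{\varphi_t}$) together with the phase choice $\psi_{\varphi_t} > 0$ (so $\langle \psi_{\varphi_t}, \partial_t\psi_{\varphi_t}\rangle = 0$ and $q_t \partial_t \psi_{\varphi_t} = \partial_t \psi_{\varphi_t}$) gives $\partial_t \psi_{\varphi_t} = -R_t (\partial_t V_{\varphi_t})\psi_{\varphi_t}$. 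The Landau--Pekar equation and the $\mathbb{R}$-linearity of $\varphi \mapsto V_\varphi$ yield $\partial_t V_{\varphi_t} = -\alpha^{-2}\bigl( V_{i\varphi_t} + V_{i\sigma_{\psi_t}}\bigr)$; a short Fourier computation, using $\sigma_{\psi_t}(-k) = \overline{\sigma_{\psi_t}(k)}$ (because $\sigma_{\psi_t}$ is the Fourier transform of the real function $(2\pi)^{3/2}|\psi_t|^2/|k|$), shows $V_{i\sigma_{\psi_t}} = 0$, producing the stated identity. Finally for (d), the eigenvalue equation gives $\|\nabla\psi_{\varphi_t}\|_2^2 = e(\varphi_t) - \langle \psi_{\varphi_t}, V_{\varphi_t}\psi_{\varphi_t}\rangle$; controlling the right side by $\|V_{\varphi_t}\|_6 \le C$ from Lemma \ref{lemma:Potental} together with H\"older and Sobolev $H^1 \hookrightarrow L^{12/5}$, and absorbing the kinetic term, closes the estimate.
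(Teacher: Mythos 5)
Your argument is correct; note that the paper itself gives no proof of this lemma but imports it from \cite{LRSS} (Lemma IV.1 and Remark III.1), and your four-step reconstruction --- Perron--Frobenius for existence, uniqueness and positivity of the ground state, the $O(\alpha^{-2})$ bound on $\partial_t\varphi_t$ feeding Kato perturbation theory to preserve the gap on the time scale $|t|\leq T\alpha^2$, differentiation of the eigenvalue equation combined with the normalization/positivity phase convention, and absorption of the potential term for the uniform $H^1$ bound --- is exactly the standard route taken in the cited reference. In particular, the cancellation $V_{i\sigma_{\psi_t}}=0$ (a consequence of $\overline{\sigma_{\psi_t}(k)}=\sigma_{\psi_t}(-k)$), which is what makes \eqref{partpsi} come out with $V_{i\varphi_t}$ alone rather than $V_{i(\varphi_t+\sigma_{\psi_t})}$, is the one genuinely non-obvious computation and you have it right.
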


One can also show that under the assumptions of Lemma~\ref{lemma:minimizer} there exists a constant $T>0$  such that the spectral gap $\Lambda_t$ of the Hamiltonian $h_{\varphi_t}$, defined in \eqref{sp-gap}, remains of order one for all times $|t| \leq \CL \alpha^2$. In particular, this leads to bounds  on the resolvent $R_t $ and its time derivative, uniformly in $\alpha$. We summarize the relevant statements in the following lemma (compare with \cite[Lemmas II.1 and IV.2]{LRSS}). 

\begin{lemma}
\label{lemma:resolvent}
Let $\varphi_0$ satisfy Assumption \ref{assumptions} and let  $( \psi_t, \varphi_t) \in H^1( \mathbb{R}^3) \times L^2( \mathbb{R}^3)$ denote the solution of the Landau--Pekar equations \eqref{eq:LP} with initial data  $( \psi_{\varphi_0}, \varphi_0 )$. Then, for all $\Lambda$ with $0 < \Lambda < \Lambda_0$ there exists $\CL >0$ such that 
\begin{align}
\Lambda_t \geq \Lambda \quad \mathrm{for \, all } \quad \vert t \vert \leq \CL \alpha^2 .
\end{align}
With $p_t =  | \psi_{\varphi_t} \rangle \langle \psi_{\varphi_t} |$ we have\footnote{We use the shorthand notation $\partial_t f (t) = \dot{f} (t)$ for the time derivative.}
\begin{equation}\label{eq: derivative rho}
\alpha^2 \dot{R}_t  =-  p_t V_{i\varphi_t} R_t^2  -  R_t^2 V_{i \varphi_t} p_t +  R_t  \left( V_{i \varphi_t} - \langle \psi_{\varphi_t}, V_{i \varphi_t} \psi_{\varphi_t} \rangle \right) R_t .
\end{equation}
Moreover, the bounds 
\begin{align}\label{eq: bound on R}
\| R_t \|_{\textnormal{op}} \leq \Lambda^{-1},  \hspace{0.5cm} \| ( - \Delta +1)^{1/2} R_t^{1/2} \|_{\textnormal{op}} \leq C ( 1 + \Lambda^{-1})^{1/2} ,
\end{align} 
\begin{align}\label{eq: bound on dot R}
\| \left( - \Delta +1 \right)^{1/2} \dot{R}_t \left( - \Delta +1 \right)^{1/2}\|_{\textnormal{op}} \leq C \alpha^{-2} \left( 1 + \Lambda^{-1} \right)^2
\end{align}
and 
\begin{equation}\label{ps}
\| \partial_t \sigma_{\psi_{\varphi_t}} \|_{2} \leq C \alpha^{-2} \left( 1 + \Lambda^{-1} \right)^{1/2}
\end{equation}
hold for all $|t| \leq \CL \alpha^2$, with $\|\, \cdot\,\|_{\rm op}$ denoting the operator norm, and  $C >0$ depending only on $\varphi_0$. 
\end{lemma}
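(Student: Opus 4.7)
The plan is to prove the four statements (gap persistence, the identity \eqref{eq: derivative rho}, the bounds \eqref{eq: bound on R}--\eqref{eq: bound on dot R}, and the bound \eqref{ps}) in the order listed, using the previously established Lemmas \ref{lemma:LP}--\ref{lemma:minimizer}.

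For the spectral-gap persistence, I would first observe that the second Landau--Pekar equation together with Lemmas \ref{lemma:LP} and \ref{lemma:Potental} yields $\alpha^2\|\partial_s\varphi_s\|_2 \le \|\varphi_s\|_2 + \|\sigma_{\psi_s}\|_2 \le C$, so $\|\varphi_t-\varphi_0\|_2 \le C|t|\alpha^{-2}$. For $|t|\le\CL\alpha^2$ this can be made arbitrarily small by choosing $\CL$ small. Since $\|V_{\varphi_t-\varphi_0}\psi\|_2 \le C\|\varphi_t-\varphi_0\|_2\|\psi\|_{H^1}$ (Lemma \ref{lemma:Potental}), $V_{\varphi_t-\varphi_0}$ is a small form perturbation of $-\Delta$, and hence of $h_{\varphi_0}$; standard analytic perturbation theory of isolated eigenvalues (Kato--Rellich) then preserves the simple eigenvalue $e(\varphi_0)$ together with its isolation by any prescribed $\Lambda<\Lambda_0$ throughout the interval.

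For the identity \eqref{eq: derivative rho}, I differentiate the operator equation $(h_{\varphi_t}-e(\varphi_t))R_t=q_t$. Lemma \ref{lemma:minimizer} gives $\alpha^2\dot\psi_{\varphi_t}=R_tV_{i\varphi_t}\psi_{\varphi_t}$ and hence $\alpha^2\dot q_t=-(R_tV_{i\varphi_t}p_t+p_tV_{i\varphi_t}R_t)$. The Landau--Pekar field equation together with the direct observation that $V_{i\sigma_{\psi_t}}=0$, because $\sigma_{\psi_t}$ is the Fourier transform of a real function so the two terms in the definition of $V$ cancel under $\varphi\mapsto i\varphi$, implies $\alpha^2\dot h_{\varphi_t}=-V_{i\varphi_t}$; Feynman--Hellmann then yields $\alpha^2\dot e(\varphi_t)=-\langle\psi_{\varphi_t},V_{i\varphi_t}\psi_{\varphi_t}\rangle$. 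Rearranging the differentiated equation as $\alpha^2(h_{\varphi_t}-e(\varphi_t))\dot R_t=\alpha^2\dot q_t+(V_{i\varphi_t}-\langle\psi_{\varphi_t},V_{i\varphi_t}\psi_{\varphi_t}\rangle)R_t$, multiplying by $R_t$ from the left (using $R_t(h_{\varphi_t}-e(\varphi_t))=q_t$ and $q_t\dot R_t=\dot R_t-\dot q_tR_t$), and exploiting $p_tR_t=R_tp_t=0$ to discard the vanishing cross-terms, reproduces \eqref{eq: derivative rho} exactly.

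The bound $\|R_t\|_{\rm op}\le\Lambda^{-1}$ is immediate from the spectral gap. For the second bound in \eqref{eq: bound on R}, note $\|(-\Delta+1)^{1/2}R_t^{1/2}\|_{\rm op}^2=\|R_t^{1/2}(-\Delta+1)R_t^{1/2}\|_{\rm op}$; writing $-\Delta+1=(h_{\varphi_t}-e(\varphi_t))-V_{\varphi_t}+(1+e(\varphi_t))$, using $R_t^{1/2}(h_{\varphi_t}-e(\varphi_t))R_t^{1/2}=q_t$, the sign $e(\varphi_t)<0$, and the infinitesimal form-boundedness of $V_{\varphi_t}$ w.r.t.\ $-\Delta$ (Lemma \ref{lemma:Potental}) yields the claim. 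The bound \eqref{eq: bound on dot R} follows termwise from \eqref{eq: derivative rho}, combining \eqref{eq: bound on R} with $\|(-\Delta+1)^{1/2}p_t\|_{\rm op}=\|\psi_{\varphi_t}\|_{H^1}\le C$ (Lemma \ref{lemma:minimizer}) and $\|V_{i\varphi_t}(-\Delta+1)^{-1/2}\|_{\rm op}\le C$ (Lemma \ref{lemma:Potental}). Finally, for \eqref{ps}, differentiating $\sigma_{\psi_{\varphi_t}}(k)=|k|^{-1}\langle\psi_{\varphi_t},e^{-ik\cdot}\psi_{\varphi_t}\rangle$ and applying the Hardy--Littlewood--Sobolev estimate from the proof of Lemma \ref{lemma:Potental} reduces matters to bounding $\|\dot\psi_{\varphi_t}\|_2=\alpha^{-2}\|R_tV_{i\varphi_t}\psi_{\varphi_t}\|_2$; splitting $R_t=R_t^{1/2}R_t^{1/2}$ and combining $\|R_t^{1/2}(-\Delta+1)^{1/2}\|_{\rm op}\le C(1+\Lambda^{-1})^{1/2}$ with $\|(-\Delta+1)^{-1/2}V_{i\varphi_t}\psi_{\varphi_t}\|_2\le C$ (which holds since $\|V_{i\varphi_t}\psi_{\varphi_t}\|_2\le C$ by Lemma \ref{lemma:Potental}) yields the claimed estimate.

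The main obstacle I anticipate is the first step: rigorously establishing that the gap persists for all $|t|\le\CL\alpha^2$ requires combining the slow $\alpha^{-2}$-variation of $\varphi_t$ with sharp perturbation estimates for the discrete spectrum of $h_{\varphi_0}$, and it is this step that ultimately fixes the admissible range of time. The derivation of \eqref{eq: derivative rho} is an algebraic computation requiring some bookkeeping with the projection algebra; the remaining bounds are standard functional-inequality estimates once the lemmas and the formula are in hand.
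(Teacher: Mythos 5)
Your proposal is correct and follows essentially the same route as the paper: the paper simply cites \cite[Lemmas II.1 and IV.2]{LRSS} for the gap persistence, the identity \eqref{eq: derivative rho} and the bounds \eqref{eq: bound on R}, and then derives \eqref{eq: bound on dot R} and \eqref{ps} exactly as you do (via \eqref{cor:zsf_prelim} and the Hardy--Littlewood--Sobolev argument from Lemma \ref{lemma:Potental}). The arguments you supply for the cited parts --- the $O(|t|\alpha^{-2})$ drift of $\varphi_t$ plus perturbation of the isolated eigenvalue, differentiating $(h_{\varphi_t}-e(\varphi_t))R_t=q_t$ together with $V_{i\sigma_{\psi_t}}=0$, and the form-boundedness absorption for $\|(-\Delta+1)^{1/2}R_t^{1/2}\|_{\rm op}$ --- are precisely the standard ones underlying the cited reference, so the only substantive difference is that your write-up is self-contained where the paper defers to \cite{LRSS}.
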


\begin{proof}
The  identity \eqref{eq: derivative rho} as well as the bounds \eqref{eq: bound on R} follow from \cite[Lemma IV.2]{LRSS}. Inequality \eqref{eq: bound on dot R} follows immediately from  \eqref{eq: derivative rho} and \eqref{eq: bound on R} in combination with Lemmas \ref{lemma:Potental} and \ref{lemma:minimizer}, since the latter imply that 
\begin{equation}\label{cor:zsf_prelim}
\| R_t^{1/2} V_{\varphi_t} \|_{\textnormal{op}} \leq C( 1 + \Lambda^{-1})^{1/2} , \quad \| V_{\varphi_t} \psi_{\varphi_t} \|_2  \leq C \| \psi_{\varphi_t} \|_{H^1( \mathbb{R}^3)} \leq C
\end{equation}
for all $|t| \leq \CL \alpha^2$. 
 To prove the last inequality we note that an application of 
 Lemma \ref{lemma:minimizer} yields 
\begin{equation}
\partial_t \sigma_{\psi_{\varphi_t}}(k)
= 2 \alpha^{-2} \frac{1}{\abs{k}} \,  \Re \langle R_t V_{i \varphi_t} \psi_{\varphi_t}, e^{-ik \, \cdot\, } \psi_{\varphi_t} \rangle_{L^2(\mathbb{R}^3)} .
\end{equation}
We can then proceed as in the proof of Lemma~\ref{lemma:Potental}, using the Hardy--Littlewood--Sobolev inequality as well as \eqref{cor:zsf_prelim}, to arrive at \eqref{ps}.
\end{proof}  

For the proof of Theorem \ref{thm:main} we shall also need the following adiabatic theorem proved in  \cite{LRSS}.

\begin{theorem}[\cite{LRSS}, Theorem II.1]
\label{thm:adiabatic}
Let $\varphi_0 $ satisfy Assumption \ref{assumptions}. Let $( \psi_t, \varphi_t) \in H^1( \mathbb{R}^3) \times L^2( \mathbb{R}^3)$ denote the solution of the Landau--Pekar equations\eqref{eq:LP} with initial data  $( \psi_{\varphi_0}, \varphi_0 )$. Then, there exist $\CL , C >0$ such that
\begin{align}
\| \psi_t - e^{- i \int_0^t du \, e( \varphi_u)} \psi_{\varphi_t} \|_2 \leq C  \alpha^{-2} 
\end{align}
for all $|t| \leq \CL \alpha^2$. 
\end{theorem}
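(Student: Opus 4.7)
The plan is a standard adiabatic argument exploiting the spectral gap to gain, via an integration by parts in time, an extra power of $\alpha^{-2}$ over the naive estimate. Set $\chi_t := \psi_t - e^{-i\int_0^t du\,e(\varphi_u)} \psi_{\varphi_t}$. Using $h_{\varphi_t}\psi_{\varphi_t} = e(\varphi_t)\psi_{\varphi_t}$, one computes
$$i\partial_t \chi_t = h_{\varphi_t} \chi_t - i e^{-i\int_0^t du\,e(\varphi_u)} \partial_t \psi_{\varphi_t}, \qquad \chi_0 = 0.$$
Let $U(t,s)$ denote the unitary propagator generated by the time-dependent Hamiltonian $h_{\varphi_t}$ on $L^2(\mathbb{R}^3)$. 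Duhamel's principle then yields
$$\chi_t = -\int_0^t U(t,s)\, e^{-i\int_0^s du\,e(\varphi_u)}\, \partial_s\psi_{\varphi_s}\, ds.$$
By \eqref{partpsi}, $\partial_s \psi_{\varphi_s} = \alpha^{-2} R_s V_{i\varphi_s} \psi_{\varphi_s}$ lies in the range of $q_s$, so we may invert $h_{\varphi_s} - e(\varphi_s)$ on this subspace and write $\partial_s\psi_{\varphi_s} = (h_{\varphi_s} - e(\varphi_s))\xi_s$ with $\xi_s := \alpha^{-2} R_s^2 V_{i\varphi_s} \psi_{\varphi_s}$.

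The key adiabatic observation is the identity $i\partial_s[U(t,s) e^{-i\int_0^s du\,e(\varphi_u)}] = -U(t,s) e^{-i\int_0^s du\,e(\varphi_u)}(h_{\varphi_s} - e(\varphi_s))$. An integration by parts in $s$ then converts the Duhamel integral into
$$\chi_t = i\, e^{-i\int_0^t du\,e(\varphi_u)}\xi_t - i U(t,0)\xi_0 - i\int_0^t U(t,s)\, e^{-i\int_0^s du\,e(\varphi_u)}\, \partial_s \xi_s\, ds.$$
Since $U(t,s)$ is unitary and the exponential factors have modulus one, estimating $\|\chi_t\|_2$ reduces to bounding $\|\xi_t\|_2$, $\|\xi_0\|_2$, and $\int_0^t \|\partial_s \xi_s\|_2\, ds$.

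The bound $\|R_s\|_\textnormal{op} \leq \Lambda^{-1}$ from \eqref{eq: bound on R}, combined with $\|V_{i\varphi_s}\psi_{\varphi_s}\|_2 \leq C$ (from Lemmas \ref{lemma:Potental} and \ref{lemma:minimizer}, using $\|\varphi_s\|_2 \leq C$), gives $\|\xi_s\|_2 \leq C\alpha^{-2}$, so the two boundary terms are $O(\alpha^{-2})$ uniformly in time. For the remaining integral, the Leibniz rule applied to $\xi_s = \alpha^{-2} R_s^2 V_{i\varphi_s} \psi_{\varphi_s}$ shows that each of the three factors contributes a time derivative of size $\alpha^{-2}$: \eqref{eq: bound on dot R} for $\dot R_s$, the Landau--Pekar equation \eqref{eq:LP} together with Lemma \ref{lemma:Potental} for $\partial_s V_{i\varphi_s} = V_{i\partial_s \varphi_s}$, and \eqref{partpsi} again for $\partial_s\psi_{\varphi_s}$. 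Together with the overall $\alpha^{-2}$ prefactor this yields $\|\partial_s \xi_s\|_2 \leq C\alpha^{-4}$, hence $\int_0^t \|\partial_s \xi_s\|_2\, ds \leq C\alpha^{-4}|t| \leq C\alpha^{-2}$ on the window $|t| \leq \CL\alpha^2$.

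The main obstacle is not algebraic but spectral: the entire argument relies on the uniform persistence of the gap, $\Lambda_s \geq \Lambda$, which is exactly what Lemma \ref{lemma:resolvent} provides on $|s| \leq \CL\alpha^2$. This is what makes $R_s$ and $\dot R_s$ controllable and is what restricts the admissible time window. Within this window the $\alpha^{-2}$ slow variation of $\varphi_s$ in the Landau--Pekar equations is then just sufficient to compensate the $\alpha^2$ integration range and deliver the stated $O(\alpha^{-2})$ estimate.
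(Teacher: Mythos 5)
The paper does not actually prove this statement: it is imported verbatim from \cite{LRSS} (Theorem II.1), so there is no internal proof to compare against. Your argument is correct and is exactly the standard adiabatic mechanism that the cited reference relies on: write the error as a Duhamel integral of $\partial_s\psi_{\varphi_s}$, use the spectral gap to solve $(h_{\varphi_s}-e(\varphi_s))\xi_s=\partial_s\psi_{\varphi_s}$ with $\xi_s=\alpha^{-2}R_s^2V_{i\varphi_s}\psi_{\varphi_s}=O(\alpha^{-2})$, and integrate by parts so that the surviving integrand $\partial_s\xi_s$ is $O(\alpha^{-4})$ and therefore harmless over the window $|t|\le \CL\alpha^2$. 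All the inputs you invoke are indeed supplied by Lemmas \ref{lemma:minimizer} and \ref{lemma:resolvent} (gap persistence, \eqref{partpsi}, \eqref{eq: bound on R}, \eqref{eq: bound on dot R}), and the initial condition $\chi_0=0$ holds because the theorem takes $\psi_0=\psi_{\varphi_0}$. The only step worth spelling out is the last Leibniz term $\alpha^{-2}R_s^2V_{i\varphi_s}\partial_s\psi_{\varphi_s}$: since $V_{i\varphi_s}$ is unbounded on $L^2$, one should either factor $R_s^2V_{i\varphi_s}=R_s^{3/2}\bigl(R_s^{1/2}V_{i\varphi_s}\bigr)$ and use $\|R_s^{1/2}V_{i\varphi_s}\|_{\mathrm{op}}\le C$ from \eqref{cor:zsf_prelim}, or note that $\|\partial_s\psi_{\varphi_s}\|_{H^1(\mathbb{R}^3)}\le C\alpha^{-2}$; either way the claimed $O(\alpha^{-4})$ bound holds, and the proof closes as you describe.
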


\subsection{Bounds on creation and annihilation operators}

For the proof of Theorem~\ref{thm:main} we shall need the following bounds. 

\begin{lemma}
\label{lemma:G}
Let $G_x(k) = \frac{1}{|k|} e^{- ik \cdot x}$. There exists $C>0$ such that 
\begin{align}\label{eq: bounds lemma:G}
\| a \left( G_{\, \cdot\, } \right) ( - \Delta +1)^{-1/2} \Psi \|  & \leq C \| \mathcal{N}^{1/2} \Psi \| \notag \\
\| ( - \Delta +1 )^{-1/2} a^* \left( G_{\, \cdot \, } \right) \Psi \| & \leq C \| ( \mathcal{N} +\alpha^{-2} )^{1/2} \Psi \| 
\end{align}
for all $\Psi \in L^2( \mathbb{R}^3) \otimes \mathcal{F}$.
\end{lemma}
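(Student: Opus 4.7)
The difficulty is that $G_x(k) = |k|^{-1} e^{-ik \cdot x}$ is not in $L^2(\mathbb{R}^3_k)$ because of the slow decay $|G_x(k)|^2 = |k|^{-2}$ at infinity, so the bounds in \eqref{eq: bounds for the creation and annihilation operators} cannot be applied directly with $f = G_x$. The electron resolvent $(-\Delta_x + 1)^{-1/2}$ supplies exactly the missing regularization, and both inequalities reduce to the single uniform estimate
\begin{equation*}
I(p) := \int \frac{dq}{|p-q|^2 \, (q^2+1)} \leq C, \qquad p \in \mathbb{R}^3,
\end{equation*}
verified by splitting the $q$-integration into a neighborhood of $q = 0$ (where $(q^2+1)^{-1}$ is bounded), a neighborhood of $q = p$ (where $|p-q|^{-2}$ is integrable in three dimensions), and the complementary tail (where the integrand decays like $|q|^{-4}$).

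For the first inequality I take the Fourier transform in the electron variable $x$. Since $a_k$ commutes with the electron resolvent,
\begin{equation*}
\big[a(G_\cdot) (-\Delta+1)^{-1/2}\Psi\big](x) = \int dk \, \frac{e^{ik\cdot x}}{|k|} \big[(-\Delta+1)^{-1/2} a_k \Psi\big](x) ,
\end{equation*}
and multiplication by $e^{ik\cdot x}$ acts as a shift in $x$-momentum space, so substituting $q = p-k$ yields
\begin{equation*}
\widehat{a(G_\cdot)(-\Delta+1)^{-1/2}\Psi}(p) = \int dq \, \frac{1}{|p-q|\,(q^2+1)^{1/2}}\, \widehat{a_{p-q}\Psi}(q) .
\end{equation*}
Cauchy--Schwarz in $q$ bounds the squared $\mathcal F$-norm of this by $I(p) \int dq \,\|\widehat{a_{p-q}\Psi}(q)\|_{\mathcal F}^2$; integrating in $p$, interchanging the order of integration via $k = p-q$, and invoking Plancherel together with the identity $\int dk \, \|a_k\Psi\|^2 = \|\mathcal N^{1/2}\Psi\|^2$ yields the first bound.

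For the second inequality I expand $\|(-\Delta+1)^{-1/2}a^*(G_\cdot)\Psi\|^2 = \langle\Psi,\, a(G_\cdot)(-\Delta+1)^{-1}a^*(G_\cdot)\Psi\rangle$ and normal-order using the CCR \eqref{eq: commutation relations}. Writing $M_k := e^{ik\cdot x}/|k|$, this splits into a Wick-ordered term $\int dk\,dk' \, M_k(-\Delta+1)^{-1}M_{k'}^* a_{k'}^*a_k$ plus a commutator contribution $\alpha^{-2}\int dk\, M_k(-\Delta+1)^{-1}M_k^*$. Conjugation of the Laplacian by $e^{ik\cdot x}$ identifies the latter with $\alpha^{-2}\int dk\,|k|^{-2}\big((-i\nabla - k)^2+1\big)^{-1}$, which is a bounded operator on $L^2(\mathbb R^3_x)$ of norm $\leq C\alpha^{-2}$ by the same uniform estimate as for $I(p)$, and therefore contributes $C\alpha^{-2}\|\Psi\|^2$. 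For the Wick-ordered term, setting $\Psi_k := a_k\Psi$, its expectation equals $\int dk\,dk'\,\langle(-\Delta+1)^{-1/2}M_k^*\Psi_{k'},\,(-\Delta+1)^{-1/2}M_{k'}^*\Psi_k\rangle$; a double Cauchy--Schwarz in $(k,k')$ reduces the estimate to controlling $\int dk\,dk' \|(-\Delta+1)^{-1/2}M_k^*\Psi_{k'}\|^2$, and the Fourier calculation of the first bound gives $\int dk\, \|(-\Delta+1)^{-1/2}M_k^*\Psi_{k'}\|^2 \leq C\|\Psi_{k'}\|^2$. A final integration in $k'$ produces $C\|\mathcal N^{1/2}\Psi\|^2$, so the two contributions combine to $\|(-\Delta+1)^{-1/2}a^*(G_\cdot)\Psi\|^2\leq C\|(\mathcal N+\alpha^{-2})^{1/2}\Psi\|^2$.

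The only genuine technical obstacle is the uniform control of $I(p)$; once that estimate is available, the rest of the argument is routine use of the CCR and Plancherel. The bound is independent of the Landau--Pekar dynamics and of any spectral gap, since it is a purely kinematic statement about the polaron coupling function $G_x$.
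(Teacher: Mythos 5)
Your proof is correct and follows the same route as the paper, which simply cites \cite[Lemma~10]{FS} for the first bound (that cited lemma is proved by exactly your Fourier-transform/Cauchy--Schwarz argument, resting on $\sup_p\int dq\,|p-q|^{-2}(1+q^2)^{-1}<\infty$ --- the same integral the paper uses in the proof of Lemma \ref{lemma:bogo}) and obtains the second bound from the CCR by normal ordering, as you do. The only cosmetic point is that your three-region splitting for $I(p)$ is phrased for bounded $p$; for a clean uniform bound one can note e.g.\ that both factors are symmetric decreasing, so $I(p)\leq I(0)=2\pi^2$ by the Riesz rearrangement inequality.
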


\begin{proof}
The first inequality follows from  \cite[Lemma~10]{FS}. The second one is an immediate consequence, using the CCR. 
\end{proof}

Note that whenever $\Psi$ is not in the domain of $\mathcal N^{1/2}$, the right side in \eqref{eq: bounds lemma:G} is infinite and thus the inequality holds trivially. We use this convention throughout this section.

As a consequence of Lemmas \ref{lemma:minimizer}, \ref{lemma:resolvent} and \ref{lemma:G} we obtain the following corollary. 

\begin{corollary}
\label{cor:zsf_G}
\textit{Let $G_x (k)= \frac{1}{|k|} e^{ - ik \cdot x}$ and $\Upsilon \in \mathcal{F}$. Moreover, let $\varphi_0$ satisfy Assumption \ref{assumptions} and let $\left( \psi_t, \varphi_t \right) \in H^1( \mathbb{R}^3) \times L^2( \mathbb{R}^3)$ denote the solution of the Landau--Pekar equations with initial data  $( \psi_{\varphi_0}, \varphi_0 ) \in H^1( \mathbb{R}^3) \times L^2( \mathbb{R}^3 )$. Then, there exist $C, \CL >0$ such that 
\begin{equation}
\|  a \left( G_{\, \cdot \, } \right) \psi_{\varphi_t} \otimes \Upsilon \| 
\leq  C \| \psi_{\varphi_t} \|_{H^1( \mathbb{R}^3)}  \| \left( \mathcal{N} + \alpha^{-2}\right)^{1/2} \Upsilon\|_{\mathcal{F}}  \leq  C    \| \left( \mathcal{N} + \alpha^{-2}\right)^{1/2} \Upsilon \|_{\mathcal{F}}
\end{equation}
and
\begin{equation}
\| R_t^{1/2} a^* \left( G_{\, \cdot \,} \right) \psi_{\varphi_t} \otimes \Upsilon \|
\leq C \| \psi_{\varphi_t} \|_{L^2( \mathbb{R}^3)}  \| \left( \mathcal{N} + \alpha^{-2}\right)^{1/2} \Upsilon\|_{\mathcal{F}} 
\leq  C    \| \left( \mathcal{N} + \alpha^{-2}\right)^{1/2} \Upsilon \|_{\mathcal{F}}
\end{equation}
for all $|t| \leq \CL \alpha^2$. }
\end{corollary}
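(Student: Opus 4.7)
The plan is to reduce both bounds to Lemma \ref{lemma:G} by inserting factors of $(-\Delta+1)^{\pm 1/2}$ acting on the electron variable, and to control the resulting $H^1$-norm of $\psi_{\varphi_t}$ via Lemma \ref{lemma:minimizer} and the operator $R_t^{1/2}(-\Delta+1)^{1/2}$ via Lemma \ref{lemma:resolvent}. Since $\mathcal{N}$ acts on $\mathcal{F}$ while $(-\Delta+1)^{1/2}$ acts on $L^2(\mathbb{R}^3)$, these two operators commute on the tensor product, and evaluating them on a product vector factorizes the norm.

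For the first inequality, I would write
\begin{equation*}
\| a(G_{\,\cdot\,}) \, \psi_{\varphi_t}\otimes \Upsilon \| = \| a(G_{\,\cdot\,})\,(-\Delta+1)^{-1/2}\, \big((-\Delta+1)^{1/2}\psi_{\varphi_t}\big)\otimes \Upsilon \|
\end{equation*}
and apply the first estimate of Lemma \ref{lemma:G} with $\Psi=(-\Delta+1)^{1/2}\psi_{\varphi_t}\otimes\Upsilon$. This yields a bound by $C\,\|(-\Delta+1)^{1/2}\psi_{\varphi_t}\|_{L^2}\,\|\mathcal{N}^{1/2}\Upsilon\|_\mathcal{F}$, which equals $C\|\psi_{\varphi_t}\|_{H^1(\mathbb{R}^3)}\|\mathcal{N}^{1/2}\Upsilon\|_\mathcal{F}$ and is further dominated by the claimed expression involving $\|(\mathcal{N}+\alpha^{-2})^{1/2}\Upsilon\|_\mathcal{F}$. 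The final bound $\|\psi_{\varphi_t}\|_{H^1(\mathbb{R}^3)}\le C$ uniformly in $|t|\le \CL\alpha^2$ is supplied by Lemma \ref{lemma:minimizer}.

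For the second inequality, the key point is to turn $R_t^{1/2}$ into a bounded operator when multiplied by $(-\Delta+1)^{1/2}$. By Lemma \ref{lemma:resolvent}, $\|(-\Delta+1)^{1/2} R_t^{1/2}\|_{\rm op}\le C(1+\Lambda^{-1})^{1/2}$ uniformly in $|t|\le \CL\alpha^2$, and since both factors are self-adjoint this also gives $\|R_t^{1/2}(-\Delta+1)^{1/2}\|_{\rm op}\le C(1+\Lambda^{-1})^{1/2}$ by duality. Therefore
\begin{equation*}
\| R_t^{1/2} a^*(G_{\,\cdot\,})\,\psi_{\varphi_t}\otimes\Upsilon \| \le C(1+\Lambda^{-1})^{1/2} \| (-\Delta+1)^{-1/2} a^*(G_{\,\cdot\,})\,\psi_{\varphi_t}\otimes\Upsilon \|,
\end{equation*}
and the second estimate in Lemma \ref{lemma:G} bounds the right-hand side by $C(1+\Lambda^{-1})^{1/2}\|\psi_{\varphi_t}\|_{L^2}\|(\mathcal{N}+\alpha^{-2})^{1/2}\Upsilon\|_\mathcal{F}$, which is the claimed bound after absorbing the $\Lambda$-dependent factor into the constant (the existence of such a uniform gap for $|t|\le \CL\alpha^2$ being precisely what Lemma \ref{lemma:resolvent} guarantees).

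There is no real obstacle here: the statement is a direct packaging of Lemma \ref{lemma:G} with the $H^1$-bound from Lemma \ref{lemma:minimizer} and the resolvent regularization from Lemma \ref{lemma:resolvent}. The only small point to mind is that the norm on the left-hand side is the one on $\mathcal{H}=L^2(\mathbb{R}^3)\otimes\mathcal{F}$, so that evaluating $\|\mathcal{N}^{1/2}(f\otimes\Upsilon)\|_\mathcal{H}=\|f\|_{L^2}\|\mathcal{N}^{1/2}\Upsilon\|_\mathcal{F}$ is legitimate, which is exactly what allows the electron and phonon factors to be separated in the final estimates.
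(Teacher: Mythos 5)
Your proposal is correct and is precisely the argument the paper intends: the corollary is stated as a direct consequence of Lemmas \ref{lemma:G}, \ref{lemma:minimizer} and \ref{lemma:resolvent}, obtained by inserting $(-\Delta+1)^{\pm1/2}$ exactly as you do and absorbing the uniform bounds on $\|\psi_{\varphi_t}\|_{H^1(\mathbb{R}^3)}$ and $\|(-\Delta+1)^{1/2}R_t^{1/2}\|_{\rm op}$ into the constant. No gaps.
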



\subsection{Bogoliubov dynamics}

In this section we shall provide  bounds for the operator $\mathcal{A}_t$ that will be useful in the proof, and in particular allow us to prove the well-posedness of the Bogoliubov dynamics \eqref{eq: Bogoliubov dynamics}.
\begin{lemma}
Let $\varphi_0$ satisfy Assumption \ref{assumptions}. Let $( \psi_t, \varphi_t) \in H^1( \mathbb{R}^3) \times L^2( \mathbb{R}^3)$ denote the solution of the Landau--Pekar equations \eqref{eq:LP} with initial data  $( \psi_{\varphi_0}, \varphi_0 )$. For $\mathcal{A}_t$ as in Definition \ref{def:bogo}, there exist $C, \CL >0$ such that 
\label{lemma:bogo}
\begin{align}
\label{eq: norm-bound for A}
\| \mathcal{A}_t  \Psi \|_{\mathcal{F}} &\leq C  \| ( \mathcal{N} + \alpha^{-2} ) \Psi \|_{\mathcal{F}},
\\
\label{eq: norm-bound for commutator of A with N}
\norm{\left[ \mathcal{N}, \mathcal{A}_t \right] \Psi}_{\mathcal{F}} &\leq C \alpha^{-2}  \norm{\left( \mathcal{N} + \alpha^{-2} \right) \Psi}_{\mathcal{F}} ,
\\
\label{eq: norm-bound for double of commutator A with N}
\norm{ \left[ \mathcal{N} ,\left[ \mathcal{N}, \mathcal{A}_t \right] \right] \Psi}_{\mathcal{F}} &\leq C \alpha^{-4}  \norm{\left( \mathcal{N} + \alpha^{-2} \right) \Psi}_{\mathcal{F}},
\\
\label{eq: norm-bound for triple of commutator A with N}
\norm{\left[ \mathcal{N} , \left[ \mathcal{N} ,\left[ \mathcal{N}, \mathcal{A}_t \right] \right] \right] \Psi}_{\mathcal{F}} &\leq C \alpha^{-6}  \norm{\left( \mathcal{N} + \alpha^{-2} \right) \Psi}_{\mathcal{F}},
\\
\label{eq: form bound derivative of A}
\| \dot{\mathcal{A}}_t \Psi \|_{\mathcal{F}}
&\leq C \alpha^{-2}  \norm{\left( \mathcal{N} + \alpha^{-2} \right) \Psi}_{\mathcal{F}} 
\end{align}
for all $| t | \leq \CL \alpha^2$ and $\Psi \in \mathcal{F}$.
\end{lemma}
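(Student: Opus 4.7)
My plan is to prove all five estimates in parallel by first decomposing $\mathcal{A}_t = \sum_{\sigma_1,\sigma_2\in\{a,a^*\}}\mathcal{A}_t^{\sigma_1\sigma_2}$ into four pieces, where
\[\mathcal{A}_t^{\sigma_1\sigma_2} = \langle\psi_{\varphi_t},\,\sigma_1(G_{\,\cdot\,})\,R_t\,\sigma_2(G_{\,\cdot\,})\,\psi_{\varphi_t}\rangle_{L^2(\mathbb{R}^3)},\]
obtained from $\phi=a+a^*$. For each piece I would rewrite $\langle\Phi,\mathcal{A}_t^{\sigma_1\sigma_2}\Psi\rangle_{\mathcal{F}}$ by moving the left field operator across the $L^2(\mathbb{R}^3)\otimes\mathcal F$--inner product and applying Cauchy--Schwarz,
\[|\langle\Phi,\mathcal A_t^{\sigma_1\sigma_2}\Psi\rangle_{\mathcal F}| \le \|R_t^{1/2}\sigma_1^*(G_{\,\cdot\,})(\psi_{\varphi_t}\otimes\Phi)\|\,\|R_t^{1/2}\sigma_2(G_{\,\cdot\,})(\psi_{\varphi_t}\otimes\Psi)\|,\]
where $\sigma^*$ denotes the adjoint (so $a^*$ if $\sigma=a$ and vice versa). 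Each of the four resulting norms can then be bounded using Corollary~\ref{cor:zsf_G}: whenever the field operator is of creation type the $R_t^{1/2}\,a^*(G_{\,\cdot\,})$--estimate is applied directly, whereas for annihilation type one combines $\|R_t\|_{\mathrm{op}}\le\Lambda^{-1}$ from Lemma~\ref{lemma:resolvent} with the $a(G_{\,\cdot\,})$--estimate of the corollary. This yields the uniform form bound
\[|\langle\Phi,\mathcal{A}_t^{\sigma_1\sigma_2}\Psi\rangle_{\mathcal F}| \le C\,\|(\mathcal N+\alpha^{-2})^{1/2}\Phi\|_{\mathcal F}\,\|(\mathcal N+\alpha^{-2})^{1/2}\Psi\|_{\mathcal F}\]
for all four pairs $(\sigma_1,\sigma_2)$.

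To upgrade this form bound to the norm bound \eqref{eq: norm-bound for A}, I would exploit that each $\mathcal{A}_t^{\sigma_1\sigma_2}$ shifts the phonon number by a fixed amount $s\in\{-2,0,0,+2\}$, mapping the sector $\mathcal{F}^{(n)}$ (on which $\mathcal N$ acts as multiplication by $n\alpha^{-2}$) into $\mathcal{F}^{(n+s)}$. Evaluating the form bound on unit-norm $\Phi\in\mathcal{F}^{(n+s)}$ and $\Psi\in\mathcal{F}^{(n)}$ gives the sectorwise operator norm bound $\|\mathcal{A}_t^{\sigma_1\sigma_2}|_{\mathcal{F}^{(n)}}\|_{\mathrm{op}}\le C(n+|s|+1)\alpha^{-2}$, and summing the squares over $n$ produces $\|\mathcal{A}_t\Psi\|_{\mathcal F}\le C\|(\mathcal N+\alpha^{-2})\Psi\|_{\mathcal F}$, proving \eqref{eq: norm-bound for A}. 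The commutator bounds \eqref{eq: norm-bound for commutator of A with N}--\eqref{eq: norm-bound for triple of commutator A with N} then follow immediately: the CCR yield $[\mathcal N, a^{\#}(f)] = \pm\alpha^{-2}a^{\#}(f)$, hence $[\mathcal N,\mathcal{A}_t^{\sigma_1\sigma_2}] = s\alpha^{-2}\mathcal{A}_t^{\sigma_1\sigma_2}$. In particular $[\mathcal N,\mathcal{A}_t] = 2\alpha^{-2}(\mathcal{A}_t^{a^*a^*}-\mathcal{A}_t^{aa})$, and every further commutator with $\mathcal N$ contributes an additional factor $\alpha^{-2}$ while leaving the operator structure intact, so the desired estimates reduce to \eqref{eq: norm-bound for A} applied to the individual pieces.

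For the derivative bound \eqref{eq: form bound derivative of A}, I would differentiate the defining formula for $\mathcal A_t$, producing three families of terms---two coming from $\partial_t\psi_{\varphi_t}$ acting on one of the two $\psi_{\varphi_t}$ factors, and one from $\dot R_t$ replacing $R_t$. The identity $\partial_t\psi_{\varphi_t}=\alpha^{-2}R_tV_{i\varphi_t}\psi_{\varphi_t}$ from Lemma~\ref{lemma:minimizer}, combined with Lemmas~\ref{lemma:Potental} and~\ref{lemma:resolvent}, gives $\|\partial_t\psi_{\varphi_t}\|_{H^1(\mathbb{R}^3)}\le C\alpha^{-2}$, while the estimate \eqref{eq: bound on dot R} provides $\|(-\Delta+1)^{1/2}\dot R_t(-\Delta+1)^{1/2}\|_{\mathrm{op}}\le C\alpha^{-2}$. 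Repeating the argument of the first two paragraphs with these replacements---the proof of Corollary~\ref{cor:zsf_G} depends only on the $H^1$-bound of $\psi_{\varphi_t}$ together with the resolvent estimates from Lemma~\ref{lemma:resolvent}, both of which remain valid once $\psi_{\varphi_t}$ or $R_t$ is replaced by its time derivative---each of the three new contributions acquires an extra factor $\alpha^{-2}$ compared to $\mathcal A_t$, yielding \eqref{eq: form bound derivative of A}. The main step to verify carefully is the form-to-norm passage via the sector decomposition, which however is straightforward thanks to the block structure of $\mathcal{A}_t$ with respect to $\mathcal N$; the more delicate point is the handling of the $\dot R_t$--term, which I would bound by inserting $(-\Delta+1)^{\pm 1/2}$ factors on either side of $\dot R_t$ before applying \eqref{eq: bound on dot R} together with the Lemma~\ref{lemma:G} bounds on $a^\#(G_{\,\cdot\,})(-\Delta+1)^{-1/2}$.
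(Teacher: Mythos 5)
Your argument is correct, and it reaches the five bounds by a route that differs from the paper's in its operator-theoretic packaging while resting on the same analytic core. The paper normal-orders $\mathcal{A}_t$, writes it in terms of the explicit kernel $F_t(k,l)=|k|^{-1}|l|^{-1}\langle\psi_{\varphi_t},e^{-ik\cdot}R_te^{il\cdot}\psi_{\varphi_t}\rangle$, verifies $\|F_t\|_{L^2}\leq C$ and $\int dk\,|F_t(k,k)|\leq C$ via $\int |k|^{-2}\|R_t^{1/2}e^{ik\cdot}\psi_{\varphi_t}\|_2^2\,dk<\infty$, and then invokes the standard quadratic-operator estimate $\|\int f(k,l)a_k^{\sharp_1}a_l^{\sharp_2}\Psi\|\leq\sqrt{2}\,\|(\mathcal{N}+\alpha^{-2})\Psi\|(\|f\|_{L^2}+\int|f(k,k)|)$ from \cite[Lemma 2.1]{BS}; the commutators and $\dot{\mathcal{A}}_t$ are then handled by observing they have the same structure with modified kernels. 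You never write down the kernel: you split $R_t=R_t^{1/2}R_t^{1/2}$, obtain a form bound by Cauchy--Schwarz and Corollary \ref{cor:zsf_G}, and convert it to a norm bound by hand using the fixed number-shift of each of the four pieces and the orthogonality of the sectors $\mathcal{F}^{(n)}$ --- in effect you reprove the relevant special case of the \cite{BS} lemma, which is legitimate and elementary (on $\mathcal{F}^{(n)}$ one gets $\|\mathcal{A}_t^{\sigma_1\sigma_2}|_{\mathcal{F}^{(n)}}\|_{\mathrm{op}}\leq C(n+3)\alpha^{-2}$, and summing squares over sectors gives \eqref{eq: norm-bound for A}). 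Your identity $[\mathcal{N},\mathcal{A}_t^{\sigma_1\sigma_2}]=s\alpha^{-2}\mathcal{A}_t^{\sigma_1\sigma_2}$ with $s\in\{\pm2,0\}$ reproduces exactly the paper's \eqref{eq: commutator of A with number operator}, and your treatment of $\dot{\mathcal{A}}_t$ (differentiating, using $\partial_t\psi_{\varphi_t}=\alpha^{-2}R_tV_{i\varphi_t}\psi_{\varphi_t}$ for two of the terms and sandwiching $\dot R_t$ between $(-\Delta+1)^{\mp1/2}$ to exploit \eqref{eq: bound on dot R} and Lemma \ref{lemma:G} for the third) is essentially identical to the paper's computation of the kernel $I_t(k,l)$. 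What the paper's approach buys is brevity, by outsourcing the form-to-norm passage to a quoted lemma; what yours buys is self-containedness and a transparent reason why each extra commutator with $\mathcal{N}$ costs exactly one factor of $\alpha^{-2}$.
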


\begin{proof}
From \eqref{eq:At} and the CCR we can write
\begin{align}
\label{eq: normal-order A}
\mathcal{A}_t = \int dk dl \,  F_t(k,  l) \, \left( a_k^*a_{-l}^* + a_k^* a_{l} +  a^*_{-l} a_{-k} + a_{-k} a_{l} \right) + \alpha^{-2} \int dk \, F_t(k,k)
\end{align}
with
\begin{equation}\label{def:F}
F_t(k,l) =  |k|^{-1} |l|^{-1}\scp{\psi_{\varphi_t}}{e^{ - ik \,\cdot\,} R_t e^{ i l \, \cdot\, } \psi_{\varphi_t}}_{L^2(\mathbb{R}^3)}.
\end{equation}
We have  
\begin{align}
\int dk \,  F_t ( k, k )   &= \int \frac{dk}{\abs{k}^2} \norm{R_t^{1/2} e^{ i k \cdot} \psi_{\varphi_t}}_2^2  \notag\\
&\leq    \norm{R_t^{1/2} \left( 1 - \Delta \right)^{1/2}}_{\op}^2
\int \frac{dk}{\abs{k}^2} \norm{\left( 1 - \Delta \right)^{-1/2} e^{ i k \cdot} \psi_{\varphi_t}}_2^2  
\nonumber \\
&=  \norm{R_t^{1/2} \left( 1 - \Delta \right)^{1/2}}_{\op}^2
\scp{\psi_{\varphi_t}}{\int \frac{dk}{\abs{k}^2} \left( 1 + \abs{ i \nabla + k}^2 \right)^{-1} \psi_{\varphi_t}}_{L^2(\mathbb{R}^3)} .
\end{align}
Since
\begin{align}
\norm{\int \frac{dk}{\abs{k}^2} \left( 1 + \abs{ i \nabla + k}^2 \right)^{-1}}_{\op}
&= \sup_{p \in \mathbb{R}}
\int \frac{dk}{\abs{k}^2 \left( 1 + \abs{p + k}^2 \right)}  < \infty,
\end{align}
we conclude with Lemma \ref{lemma:resolvent} that 
\begin{align}
\label{eq:estimate_F}
\int dk \, \vert F_t ( k, k ) \vert  &\leq C 
\end{align}
for all $| t | \leq \CL \alpha^2$. Similarly, we find 
\begin{align}
\| F_t \|_{L^2( \mathbb{R}^3 \times \mathbb{R}^3)} &\leq \int \frac{dk}{\abs{k}^2} \norm{R_t^{1/2} e^{ i k \cdot} \psi_{\varphi_t}}_2^2  \leq C 
\end{align}
for all $| t | \leq \CL \alpha^2$. Using the bound 
\begin{align}
\norm{\int dk \, dl \, f(k,l) a^{\sharp_1}_k a^{\sharp_2}_l  \Psi}_{\mathcal{F}}
&\leq \sqrt{2} \norm{\left( \mathcal{N} + \alpha^{-2} \right) \Psi}_{\mathcal{F}} \left( \norm{f}_{L^2( \mathbb{R}^3 \times \mathbb{R}^3)} + \int dk \, \abs{f(k,k)} \right)
\end{align}
(which easily follows from the usual estimates of the creation and annihilation operators, see \cite[Lemma~2.1]{BS}) with $\sharp_1, \sharp_2 \in \{\cdot, * \}$ and $\Psi \in \mathcal{F}$, we obtain \eqref{eq: norm-bound for A}.

Since
\begin{align} 
\int dk  \, dl \, F_t(k,- l) \left( a_{-k} a_{-l} - a_k^* a_l^* \right) 
&= - \frac{1}{2} \alpha^2 \left[ \mathcal{N}, \mathcal{A}_t  \right]
= - \frac{1}{8} \alpha^6 \left[ \mathcal{N} , \left[ \mathcal{N} , \left[ \mathcal{N}, \mathcal{A}_t \right] \right] \right] ,
\notag \\
\label{eq: commutator of A with number operator}
 \int dk \, dl \, F_t(k, - l)
\left( a_{-k} a_{-l} + a_k^* a_l^* \right)
&=  \frac{1}{4} \alpha^4  \left[ \mathcal{N} , \left[ \mathcal{N}, \mathcal{A}_t \right] \right]
\end{align}
have a similar structure as $\mathcal{A}_t$, one obtains \eqref{eq: norm-bound for commutator of A with N}--\eqref{eq: norm-bound for triple of commutator A with N} in the same way.
Using Lemmas~\ref{lemma:minimizer} and~\ref{lemma:resolvent}  as well as \eqref{cor:zsf_prelim}
the operator
\begin{equation}
\alpha^2 \dot{\mathcal{A}}_t
= \int dk dl \,   I_t(k,  l)   \left( a_k^*a_{-l}^* + a_k^* a_{l} +  a^*_{-l} a_{-k}
 + a_{-k} a_{l} \right) 
 + \alpha^{-2} \int dk \,  I_t(k, k)   
\end{equation}
where 
\begin{align}
I_t(k,l)
&= \abs{k}^{-1} \abs{l}^{-1} \big(
\scp{\psi_{\varphi_t}}{e^{ - ik \cdot} R_t e^{ i l \cdot} R_t V_{i \varphi_t} \psi_{\varphi_t}}_{L^2(\mathbb{R}^3)}
+  \scp{ R_t V_{i \varphi_t} \psi_{\varphi_t}}{e^{ - ik \cdot} R_t e^{ i l \cdot} \psi_{\varphi_t}}_{L^2(\mathbb{R}^3)} \big) 
\notag \\
& \quad + \alpha^2 \abs{k}^{-1} \abs{l}^{-1}
\scp{\psi_{\varphi_t}}{e^{ - ik \cdot} \dot{R}_t e^{ i l \cdot} \psi_{\varphi_t}}_{L^2(\mathbb{R}^3)}
\end{align}
can be bounded by similar arguments.
\end{proof}

\begin{lemma}
\label{lemma:bogo2}
With the same assumptions as in Lemma~\ref{lemma:bogo}, there exists 
for every $\alpha >0$ and state $\Upsilon$ in the quadratic form domain $\mathcal Q \left( \mathcal{N} \right)$ with $\norm{\Upsilon}_{\mathcal{F}}= 1$ a unique solution of 
\begin{align}
\begin{cases}
i \partial_t \upt &= \left( \mathcal{N} - \mathcal{A}_t \right) \upt  \\
\Upsilon_0 &= \Upsilon
\end{cases}
\end{align}
such that $ \Upsilon_{\cdot} \in C^0 \left([0, \CL \alpha^2), \mathcal{F} \right) \cap L^{\infty} \left( [0, \CL \alpha^2), Q(\mathcal{N}) \right)$ and  $\norm{\upt}_{\mathcal{F}} = 1$. Moreover, for $0\leq j\leq 5$
\begin{equation} \label{eq: bound on the number operator}
\alpha^{j}  \| \left( \mathcal{N} + \alpha^{-2}\right)^{j/2}  \upt \|_{\mathcal{F}} \leq C \alpha^{5}  \| \left( \mathcal{N} + \alpha^{-2}\right)^{5/2}  \Upsilon \|_{\mathcal{F}} 
\end{equation}
holds for all $|t | \leq \CL \alpha^2$. 
\end{lemma}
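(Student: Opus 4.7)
The plan is to split the argument into two steps: (i) existence and uniqueness of $\upt$, and (ii) the moment bound \eqref{eq: bound on the number operator}.

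For step (i), the generator $H_t := \mathcal{N} - \mathcal{A}_t$ is symmetric on $\mathcal{D}(\mathcal{N})$ (since $\mathcal{A}_t$ is symmetric by construction) and relatively $\mathcal{N}$-bounded by \eqref{eq: norm-bound for A}. Together with the continuously differentiable time dependence ensured by \eqref{eq: form bound derivative of A}, this allows one to apply Kato's theory of time-dependent self-adjoint Cauchy problems (see, e.g., Reed--Simon, Thm.~X.70) to obtain a unique unitary propagator on $\mathcal{F}$ preserving $\mathcal{Q}(\mathcal{N})$, producing the claimed regularity and the conservation $\|\upt\|_{\mathcal F}=1$.

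For step (ii), my key observation is that the case $j=5$ of the stated bound already implies the general case. Since $\mathcal{N}+\alpha^{-2}\ge \alpha^{-2}$ as an operator on $\mathcal{F}$, the spectral theorem gives
\[ (\mathcal{N}+\alpha^{-2})^{j/2} \le \alpha^{5-j} (\mathcal{N}+\alpha^{-2})^{5/2} \quad \text{for all } 0\le j\le 5, \]
hence $\alpha^j\|(\mathcal{N}+\alpha^{-2})^{j/2}\upt\|_{\mathcal F}\le \alpha^5\|(\mathcal{N}+\alpha^{-2})^{5/2}\upt\|_{\mathcal F}$. So it suffices to show $\|N\upt\|_{\mathcal F}\le C\|N\Upsilon\|_{\mathcal F}$ for $|t|\le \CL\alpha^2$, where I set $N:=(\mathcal{N}+\alpha^{-2})^{5/2}$. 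Writing $M(t):=\|N\upt\|^2$, the evolution equation together with the self-adjointness of $\mathcal{N}-\mathcal{A}_t$ and the commutativity $[N,\mathcal N]=0$ yield
\[ \dot M(t) = -2\, \Im\, \langle N\upt,\,[N,\mathcal{A}_t]\upt\rangle_{\mathcal F}. \]
By Cauchy--Schwarz the task reduces to proving the commutator bound $\|[N,\mathcal A_t]\Psi\|_{\mathcal F}\le C\alpha^{-2}\|N\Psi\|_{\mathcal F}$; Grönwall on $|t|\le \CL\alpha^2$ then delivers the desired estimate.

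To establish this commutator bound, I would decompose $\mathcal{A}_t = A_+^t + A_0^t + A_-^t + c_t$ according to the $\mathcal{N}$-sector shift in the normal-ordered form \eqref{eq: normal-order A}: $A_+^t := \int F_t(k,l)\,a_k^*a_{-l}^*\,dk\,dl$ raises the particle number by two, $A_-^t = (A_+^t)^*$ lowers it by two, $A_0^t$ preserves it, and $c_t$ is the scalar $\alpha^{-2}\int F_t(k,k)\,dk$. Clearly $[N,A_0^t+c_t]=0$. For the off-diagonal pieces, the shift identity $\mathcal{N} A_+^t = A_+^t (\mathcal{N}+2\alpha^{-2})$ gives
\[ [N,A_+^t] = A_+^t\, \big((\mathcal{N}+3\alpha^{-2})^{5/2} - (\mathcal{N}+\alpha^{-2})^{5/2}\big), \]
and the mean-value theorem applied to $x^{5/2}$ on $[\alpha^{-2},\infty)$ yields the operator inequality $0\le (\mathcal{N}+3\alpha^{-2})^{5/2}-(\mathcal{N}+\alpha^{-2})^{5/2}\le C\alpha^{-2}(\mathcal{N}+\alpha^{-2})^{3/2}$. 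Combining this with the form bound $\|A_+^t\Phi\|_{\mathcal F}\le C\|(\mathcal{N}+\alpha^{-2})\Phi\|_{\mathcal F}$ (implicit in the proof of \eqref{eq: norm-bound for A}, since $\|F_t\|_{L^2}$ is bounded there) and the commutativity of functions of $\mathcal{N}$, I obtain $\|[N,A_+^t]\Psi\|_{\mathcal F}\le C\alpha^{-2}\|N\Psi\|_{\mathcal F}$. The same argument works for $A_-^t$, after noting that $A_-^t$ annihilates the low $\mathcal{N}$-sectors on which $(\mathcal{N}-\alpha^{-2})^{5/2}$ would otherwise be ambiguous.

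The main obstacle is precisely in extracting the $\alpha^{-2}$ gain in this commutator: without it, Grönwall would blow up on the time scale $\CL\alpha^2$. The gain reflects the same semiclassical mechanism as the iterated commutator estimates \eqref{eq: norm-bound for commutator of A with N}--\eqref{eq: norm-bound for triple of commutator A with N}, namely that each nontrivial commutation of $\mathcal{N}$ with $\mathcal{A}_t$ produces a factor of $\alpha^{-2}$. A standard spectral-cutoff approximation in $N$ would justify the formal differentiation of $M(t)$.
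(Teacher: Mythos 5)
Your proposal is correct and, for the moment bound \eqref{eq: bound on the number operator}, follows the same overall strategy as the paper: differentiate $\|(\mathcal N+\alpha^{-2})^{5/2}\Upsilon_t\|^2$, reduce to a commutator of a power of $\mathcal N$ with $\mathcal A_t$, show that this commutator gains a factor $\alpha^{-2}$, and close with Gr\"onwall on the time scale $\CL\alpha^2$; your reduction of general $j\le 5$ to $j=5$ via $\alpha^{2j}(\mathcal N+\alpha^{-2})^{j}\le\alpha^{10}(\mathcal N+\alpha^{-2})^{5}$ is also exactly the paper's. Where you genuinely differ is in the execution of the commutator estimate: the paper expands the commutator of $\mathcal N^5$ with $\mathcal A_t$ into the iterated commutators $[\mathcal N,\mathcal A_t]$, $[\mathcal N,[\mathcal N,\mathcal A_t]]$ and $[\mathcal N,[\mathcal N,[\mathcal N,\mathcal A_t]]]$ sandwiched between integer powers of $\mathcal N$, and then invokes \eqref{eq: norm-bound for commutator of A with N}--\eqref{eq: norm-bound for triple of commutator A with N} from Lemma~\ref{lemma:bogo}, whereas you work directly with the fractional power $(\mathcal N+\alpha^{-2})^{5/2}$ via the sector-shift identity $f(\mathcal N)A_+^t=A_+^t f(\mathcal N+2\alpha^{-2})$ together with the mean value theorem. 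Both routes encode the same mechanism (each commutation with $\mathcal N$ costs one factor of $\alpha^{-2}$, by the CCR \eqref{eq: commutation relations}); yours is somewhat more self-contained, the paper's recycles bounds it has already established. One caveat concerns your well-posedness step: the relative bound \eqref{eq: norm-bound for A} comes with a constant that is not small, so Kato--Rellich does not directly yield self-adjointness of $\mathcal N-\mathcal A_t$ on $D(\mathcal N)$, and the hypotheses of the Reed--Simon propagator theorem are therefore not immediately verified. This can be repaired, e.g.\ by a commutator theorem of Nelson type using \eqref{eq: norm-bound for commutator of A with N}, but the paper sidesteps the issue entirely by appealing to the quadratic-form well-posedness result of Lewin--Nam--Schlein (their Theorem~8), with $A=B=(1+\alpha^2)\mathcal N+1$ after a rescaling of time; you should either do the same or supply the missing self-adjointness argument.
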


\begin{proof}
The first claim follows from \cite[Theorem 8]{LNS}, rescaling the time variable and setting  $H(t) = \CL \alpha^2 (\mathcal{N} - \mathcal A_{ \CL \alpha^2 t})$ and $A = B = (1 + \alpha^2) \mathcal{N} + 1 \geq 1$,  utilizing  Lemma \ref{lemma:bogo}.
To show \eqref{eq: bound on the number operator}, we estimate
\begin{align}
\frac{d}{dt} \norm{ \left( \mathcal{N} + \alpha^{-2} \right)^{5/2} \,  \upt}_{\mathcal{F}}^2
&= -i \scp{ \upt}{\left[ \mathcal{N}^5  , \mathcal{A}_t \right] \upt}_{\mathcal{F}}
\nonumber \\
&\leq C \abs{\scp{\mathcal{N}^2  \upt}{ \left[ \mathcal{N}, \mathcal{A}_t \right]  \mathcal{N}^2  \upt}_{\mathcal{F}}}
\nonumber \\
&\quad + 
C \abs{\scp{\mathcal{N}^2  \upt}{\left[ \mathcal{N} , \left[ \mathcal{N}, \mathcal{A}_t \right] \right] \mathcal{N}  \upt}_{\mathcal{F}}}
\nonumber \\
&\quad + C \abs{\scp{\mathcal{N}^2  \upt}{\left[ \mathcal{N} , \left[ \mathcal{N} , \left[ \mathcal{N}, \mathcal{A}_t \right] \right] \right] \upt}_{\mathcal{F}}} 
\nonumber \\
&\leq 
C \alpha^{-2} \norm{\left( \mathcal{N} + \alpha^{-2} \right)^{5/2}   \upt}_{\mathcal{F}}^2 
\end{align}
again with the aid of Lemma \ref{lemma:bogo}.
Gr\"onwall's lemma thus gives
\begin{align}
\norm{\left( \mathcal{N} + \alpha^{-2} \right)^{5/2}  \upt}_{\mathcal{F}}^2
&\leq  e^{C \alpha^{-2} t}  \norm{\left( \mathcal{N} + \alpha^{-2} \right)^{5/2}  \Upsilon}_{\mathcal{F}}^2
\end{align}
which implies \eqref{eq: bound on the number operator} for $j=5$. Since $1\leq \alpha^{2j} (\mathcal{N}+\alpha^{-2})^j \leq \alpha^{10} (\mathcal{N}+\alpha^{-2})^5$ for $j\leq 5$, the general case follows.  
\end{proof}


\section{Proofs}

We shall restrict our attention to times $\vert t \vert \le \CL \alpha^2$,  where $T>0$ is chosen small enough such that the bounds in the previous section hold for some $0<\Lambda<\Lambda_0$. Note that $T$ can be chosen independent of $\alpha$. We shall also assume, without loss of generality, that $\alpha \geq \alpha_0$ for some $\alpha_0>0$; since the left side of \eqref{eq:thm1} is bounded by $2$, Theorem~\ref{thm:main} makes no claim for small $\alpha$. 
 
\subsection{Proof of Theorem \ref{thm:main}}
\label{subsection: proof main theorem}

We start by using Theorem \ref{thm:adiabatic} to estimate the error in replacing $\psi_t$ by $\psi_{\varphi_t}$ as 
\begin{align}
\label{eq: proof main theorem first step}
&  \| e^{-iHt}  \left( \psi_{\varphi_0} \otimes W( \alpha^2 \varphi_0)  \Upsilon \right)  - e^{-i \int_0^t ds \, \omega (s)} \psi_t  \otimes W( \alpha^2 \varphi_t ) \upt \| \notag \\[1.5mm]
 & \leq  \| e^{i \int_0^t ds \, \omega (s)} e^{-iHt}  \left( \psi_{\varphi_0} \otimes W( \alpha^2 \varphi_0) \Upsilon \right)  - e^{-i \int_0^t ds \, e( \varphi_s ) } \psi_{\varphi_t} \otimes W( \alpha^2 \varphi_t ) \upt \| + C \alpha^{-2} ,
\end{align} 
where $ e( \varphi)$ denotes the ground state energy of $h_{\varphi}$. Hence, our goal is to estimate the norm difference 
\begin{align}
\| e^{i \int_0^t ds \,  \omega (s)}&  e^{-iHt}  \left( \psi_{\varphi_0} \otimes  W( \alpha^2 \varphi_0 ) \Upsilon \right)  -  e^{-i \int_0^t ds \, e( \varphi_s ) } \psi_{\varphi_t}\otimes W( \alpha^2 \varphi_t ) \upt \| \notag\\[1mm]
& \quad =\| \xi_{t} - \psi_{\varphi_t} \otimes \upt \|,
\end{align}
where the fluctuation vector 
\begin{align}
\label{def:xi}
\xi_{t} =  e^{i \int_0^t ds \, \left( \omega (s) + e( \varphi_s) \right) } W^*( \alpha^2 \varphi_t) e^{-iHt} W( \alpha^2 \varphi_0) \left( \psi_{\varphi_0} \otimes \Upsilon \right)
\end{align}
satisfies
\begin{equation}\label{partL}
i \partial_t \xi_{t} = \mathcal{L}_t \xi_{t}
\end{equation}
with
\begin{equation}
\mathcal{L}_t  =  W^*\!\left( \alpha^2 \varphi_t \right) H W\!\left( \alpha^2 \varphi_t \right) + \left( i \partial_t W^*\!\left( \alpha^2 \varphi_t \right) \right) W\!\left( \alpha^2 \varphi_t \right) - \omega (t) - e( \varphi_t) .
\end{equation}
Since
\begin{align}
\label{eq:prop_Weyl}
W^*\!\left( f \right) a_k W\!\left( f \right) = a_k + \alpha^{-2} f( k), \quad W^*\!\left( f \right) a^*_k W\!\left( f \right) = a^*_k + \alpha^{-2}\overline{ f( k)}
\end{align}
and
\begin{align}
\label{eq:deriv_Weyl}
\left(i \partial_t W^*\!\left( \alpha^2 \varphi_t \right) \right) W\!\left( \alpha^2 \varphi_t \right) =  \alpha^2 \Im \langle \varphi_t, \partial_t \varphi_t \rangle - \phi \left( i \alpha^2 \partial_t \varphi_t \right)
\end{align}
(see, e.g., \cite[Lemma A.3]{FG}) the Landau--Pekar equations \eqref{eq:LP} imply that
\begin{equation}\label{def:L}
\mathcal{L}_t = h_{\varphi_t} - e( \varphi_t) + \mathcal{N} + \phi\left( \delta_t G_x\right) ,
\end{equation}
where we denote $\delta_t G_x = G_x - \sigma_{\psi_t}$. By the fundamental theorem of calculus, we obtain using Lemma \ref{lemma:minimizer}
\begin{align}
 \| \xi_{t} &- \psi_{\varphi_t} \otimes \upt \|^2  =  2 \Im \int_0^t ds \, \langle \xi_{s}, \, \left[ \mathcal{L}_s  - \mathcal{N} + \mathcal{A}_s  - i\alpha^{-2} R_s V_{i\varphi_s} \right] \psi_{\varphi_s} \otimes \ups \rangle .
\end{align}
Since $\left(h_{\varphi_s} - e( \varphi_s) \right) \psi_{\varphi_s}=0 $, this simplifies to  
\begin{align}
 \| \xi_{t} &- \psi_{\varphi_t} \otimes \upt \|^2  =  2 \Im \int_0^t ds \, \langle \xi_{s}, \, \left[ \phi \left( \delta_s G_x \right) + \mathcal{A}_s -  i\alpha^{-2} R_s V_{i\varphi_s} \right] \psi_{\varphi_s} \otimes \ups \rangle.
\end{align}
We insert the identity $1= p_s + q_s$, where $p_s = \vert \psi_{\varphi_s} \rangle \langle \psi_{\varphi_s} \vert$, to obtain
\begin{subequations}
\begin{align}
 \| \xi_{t} - \psi_{\varphi_t} \otimes \upt \|^2 
& =  2 \Im \int_0^t ds \, \langle \xi_{s}, \,p_s \phi \left( \delta_s G_x \right) \psi_{\varphi_s} \otimes \ups \rangle \label{eq:p}  \\
& \quad + 2 \Im \int_0^t ds \, \langle \xi_{s}, \, q_s  \left[  \phi \left(  G_x \right) - i \alpha^{-2} R_s V_{i\varphi_s} \right] \psi_{\varphi_s} \otimes \ups \rangle \label{eq:gap0}\\
& \quad + 2 \Im \int_0^t ds \, \langle\xi_{s}, \mathcal{A}_s \psi_{\varphi_s} \otimes \ups \rangle \label{eq:gapA}.
\end{align}
\end{subequations}
%
where we used that $q_s \, \phi \left( \sigma_{\psi_s} \right) \psi_{\varphi_s} =0$ implies  
\begin{equation}\label{eq: deltaphi between q and p}
q_s \, \phi \left( \delta_s G_x \right)\,  \psi_{\varphi_s} = q_s \,  \phi \left( G_x \right) \,  \psi_{\varphi_s} .
\end{equation}
 For the first term \eqref{eq:p}, we observe that
\begin{align}
p_s \phi \left( \delta_s G_x \right) \psi_{\varphi_s}&= \phi \left( \sigma_{\psi_s}- \sigma_{\psi_{\varphi_s}} \right) \psi_{\varphi_s} .
\end{align}
Since $\phi \left( \sigma_{\psi_s}- \sigma_{\psi_{\varphi_s}} \right)$ is a symmetric operator, we find 
\begin{align}
 \eqref{eq:p} & = 2 \int_0^t ds \, \Im \langle \xi_{s} - \psi_{\varphi_s} \otimes \ups, \, \phi \left(\sigma_{\psi_s} - \sigma_{\psi_{\varphi_s}} \right) \psi_{\varphi_s} \otimes \ups \rangle .
\end{align}
By using \eqref{eq: bounds for the creation and annihilation operators} this implies that 
\begin{equation}
\vert \eqref{eq:p} \vert \leq 4  \int_0^t ds \, \| \sigma_{\psi_s} -\sigma_{\psi_{\varphi_s}}\|_2 \|\left( \mathcal{N} + \alpha^{-2} \right)^{1/2} \ups \|  
\|   \xi_{s} - \psi_{\varphi_s} \otimes \ups \| .
\end{equation}
Since $\sigma_\psi$ depends only on $|\psi|^2$ and is independent of the phase of $\psi$, we can use 
Lemma \ref{lemma:Potental}  together with Lemmas \ref{lemma:LP} and \ref{lemma:minimizer}   and Theorem \ref{thm:adiabatic} to further bound 
\begin{align}
\label{eq:estimate_simga_diff}
\vert \eqref{eq:p} \vert & \leq  C \int_0^t ds \,\left( \| \psi_s \|_{H^1( \mathbb{R}^3)} + \| \psi_{\varphi_s} \|_{H^1( \mathbb{R}^3 )} \right) \| \psi_s - e^{-i \int_0^t ds \, e( \varphi_s )} \psi_{\varphi_s} \|_2 \notag\\
& \hspace{3cm}  \times \|\left( \mathcal{N} + \alpha^{-2} \right)^{1/2} \ups \|
\|   \xi_{s} - \psi_{\varphi_s} \otimes \ups \|
\notag  \\
& \leq  C \alpha^{-2} \int_0^t ds \, \| \left( \mathcal{N} + \alpha^{-2} \right)^{1/2} \ups \|
\|   \xi_{s} - \psi_{\varphi_s} \otimes \ups \|.
\end{align}
Applying in addition  Lemma \ref{lemma:bogo2} leads to the conclusion 
\begin{equation}
\vert \eqref{eq:p} \vert \leq  C \alpha^{-3} 
\int_0^t ds \, \|   \xi_{s} - \psi_{\varphi_s} \otimes \ups \|.
\end{equation}

In order to bound \eqref{eq:gap0}  we observe that \eqref{partL} and \eqref{def:L} imply that 
\begin{align}
\label{eq:gap_id}
q_s \xi_s = R_s \left[ h_{\varphi_s} - e( \varphi_s ) \right]  \xi_s = R_s \left[ i \partial_s - \mathcal{N} - \phi \left( \delta_s G_x \right) \right] \xi_s .
\end{align}
Hence we have
\begin{align}
\eqref{eq:gap0}   & =  2 \Im \int_0^t ds \, \langle  i \partial_s \xi_s, \, R_s   \left[ \phi \left( G_x \right) - i \alpha^{-2} R_s V_{i \varphi_s} \right] \psi_{\varphi_s} \otimes \ups \rangle \notag\\
& \quad - 2 \Im \int_0^t ds \, \langle \xi_s, \,  \left[\mathcal{N} + \phi \left( \delta_s G_x \right) \right]R_s \left[ \phi\left( G_x \right) - i \alpha^{-2} R_s V_{i \varphi_s} \right]  \psi_{\varphi_s} \otimes \ups \rangle    .
\end{align}
Integrating by parts in the first  line and recalling Definition \ref{def:bogo} we conclude that 
\begin{subequations}
\begin{align}
& \hspace{-4mm} \eqref{eq:gap0} + \eqref{eq:gapA} \notag \\
& = - 2 \Im \int_0^t ds \, \langle \xi_{s}, \left( \phi \left( \delta_s G_x \right) R_s \, \phi \left( G_x \right) - \mathcal{A}_s \right) \psi_{\varphi_s} \otimes \ups \rangle \label{eq:gap1,0}\\
&\quad  + 2 \alpha^{-2}  \Re \int_0^t ds \, \langle \xi_{s}, \left(  \phi  \left( \delta_s G_x \right)  R_s^2 \, V_{i\varphi_s} 
+ \alpha^2  \dot{R}_s\phi  \left( G_x \right)  \right) \psi_{\varphi_s} \otimes \ups \rangle  \label{eq:gap1,2b} \\
& \quad - 2 \Im \int_0^t ds \, \langle \xi_{s},R_s  \left( \mathcal{N}  \, \phi  \left( G_x \right) - \phi \left( G_x \right)\left( \mathcal{N}- \mathcal{A}_s\right) \right) \psi_{\varphi_s} \otimes \ups \rangle \label{eq:gap1,1} \\
& \quad + 2 \alpha^{-2}  \Re \int_0^t ds \, \langle \xi_{s},   R_s^2 \, V_{i\varphi_s}  \psi_{\varphi_s} \otimes \mathcal{A}_s\ups \rangle  \label{eq:gap1,2a} \\
& \quad + 2 \Re \int_0^t ds \, \langle \xi_{s}, \,  R_s \phi ( G_x) ( \partial_s \psi_{\varphi_s} ) \otimes \ups \rangle \label{eq:gap1,3} \\
& \quad + 2 \alpha^{-2} \Im \int_0^t ds \, \langle \xi_{s}, \,\left(  \left( \partial_s R_s^2 \right) V_{i\varphi_s} \psi_{\varphi_s} + R_s^2 V_{i \dot{\varphi}_s} \psi_{\varphi_s} + R_s^2 V_{i\varphi_s} (\partial_s \psi_{\varphi_s} ) \right) \otimes \ups \rangle\label{eq:gap1,4} \\[2mm]
 & \quad  - 2 \Re \langle \xi_{t}, R_t  \left[ \phi\left( G_x \right) - i \alpha^{-2} R_t V_{i \varphi_t} \right]  \psi_{\varphi_t} \otimes \upt \rangle  . \label{eq:gap1,5}
\end{align}
\end{subequations}
Here, we used that $R_0 \xi_0 = R_0 \psi_{\varphi_0} \otimes W( \alpha^2 \varphi_0 ) \Upsilon = 0$, hence the boundary terms at $t=0$ vanish. 

In the following, we shall bound the various terms on the right hand side of the previous equation. We start with \eqref{eq:gap1,1}. 
Using the CCR and $R_s = q_s R_s$, we find 
\begin{align}
 \eqref{eq:gap1,1}   & =  2  \Im \int_0^t ds \, \langle q_s \xi_{s}, \,   R_s \, a \left( G_x \right) \left( \alpha^{-2} + \mathcal{A}_s\right) \psi_{\varphi_s} \otimes \ups \rangle\notag \\
& \quad - 2  \Im \int_0^t ds \, \langle q_s  \xi_{s}  ,  \, R_s \, a^* \left( G_x \right) \left(\alpha^{-2} - \mathcal{A}_s \right) \psi_{\varphi_s} \otimes \ups \rangle ,
\end{align}
leading with Corollary \ref{cor:zsf_G} and Lemmas \ref{lemma:resolvent} and \ref{lemma:bogo} to
\begin{align}
\vert \eqref{eq:gap1,1} \vert &\leq  C \int_0^t ds \, \| \left( \mathcal{N} + \alpha^{-2}\right)^{3/2} \ups \|  \|q_s  \xi_s \| .
\end{align}
Since 
\begin{align}
\label{eq:qsxis}
\| q_s \xi_s \| \leq \| \xi_s - \psi_{\varphi_s} \otimes \ups \|
\end{align}
we obtain with Lemma \ref{lemma:bogo2} 
\begin{align}
\vert \eqref{eq:gap1,1} \vert
&\leq C \alpha^{-3} 
\int_0^t ds \, \| \xi_s - \psi_{\varphi_s} \otimes \ups \|.
\end{align}

With the same arguments and \eqref{cor:zsf_prelim}, we also conclude that 
\begin{align}
\vert \eqref{eq:gap1,2a}  \vert
&  \leq C \alpha^{-2} \int_0^t ds  \, \| \left( \mathcal{N} + \alpha^{-2} \right) \ups \| 
\| q_s \xi_{s}  \|
 \notag \\
& \leq C \alpha^{-4} \int_0^t ds \, \| \xi_s - \psi_{\varphi_s} \otimes \ups \| .
\end{align}
Similarly, by combining again \eqref{eq:qsxis}, Corollary \ref{cor:zsf_G} and Lemma \ref{lemma:bogo2} with Lemmas~\ref{lemma:minimizer} and~\ref{lemma:resolvent} and \eqref{cor:zsf_prelim}, we have
\begin{align}
\vert \eqref{eq:gap1,3} \vert & \leq C \alpha^{-2} \int_0^t ds \,   \| \psi_{\varphi_s} \|_{H^1 ( \mathbb{R}^3)} \| \left( \mathcal{N} +\alpha^{-2} \right)^{1/2} \ups \|
\| q_s \xi_s \|
\notag \\
& \leq C \alpha^{-3} \int_0^t ds \, \| \xi_s - \psi_{\varphi_s} \otimes \ups \|.
\end{align}
Since $\partial_s R_s^2 = \dot{R}_s R_s + R_s \dot{R}_s$ and $V_{i\dot{\varphi}_s} = \alpha^{-2} V_{ \varphi_s + \sigma_{\psi_s}}$ by \eqref{eq:LP}, we find with Lemmas \ref{lemma:Potental}, \ref{lemma:minimizer} and \ref{lemma:resolvent} that
\begin{align}
\vert \eqref{eq:gap1,4} \vert \leq C \alpha^{-4} |t| \le C   \alpha^{-2},
\end{align}
and similarly with \eqref{eq:qsxis}
\begin{align}
\vert \eqref{eq:gap1,5} \vert 
&\leq  2 \norm{q_t \xi_t } 
\Big(
\| R_t \phi ( G_x) \psi_{\varphi_t} \otimes \upt \|
+ \alpha^{-2} 
\|
R_t^2 V_{i\varphi_t} \psi_{\varphi_t} \otimes \upt
\| \Big)
\notag \\
&\leq \frac{1}{2} \norm{\xi_t - \psi_{\varphi_t} \otimes \upt}^2   + C \alpha^{-2} .
\end{align}

In order to estimate the first term \eqref{eq:gap1,0}, we insert again the decomposition  $1 = p_s + q_s$ and observe that
\begin{align}
p_s \phi \left( \delta_s G_x \right) R_s \, \phi  \left( G_x \right) \psi_{\varphi_s} \otimes \ups
&= \langle  \psi_{\varphi_s},  \phi \left(  G_{\,\cdot\,} \right) R_s \, \phi  \left( G_{\,\cdot\,} \right)  \psi_{\varphi_s} \rangle_{L^2(\mathbb{R}^3)} \,
\psi_{\varphi_s} \otimes \ups
\nonumber \\
&= \mathcal{A}_s \psi_{\varphi_s} \otimes \ups .
\end{align}
The Bogoliubov dynamics was in fact introduced in order to cancel this term. 
Hence
\begin{equation}
\eqref{eq:gap1,0}   = 2 \Im \int_0^t ds \, \langle \xi_{s}, \, q_s \phi \left( \delta_s G_x \right) R_s \phi  \left( G_x \right) \, \psi_{\varphi_s} \otimes \ups  \rangle .
\end{equation}
Recall that $\delta_s G_x = G_x - \sigma_{\psi_s}$. In the following, it will be convenient to replace $\psi_s$ by $\psi_{\varphi_s}$ in this expression, since the time derivative of the latter involves explicitly a factor $\alpha^{-2}$, see \eqref{partpsi}, leading to the bound \eqref{ps}. Hence  we shall write $\delta_s G_x = \widetilde{\delta}_s G_x + (\sigma_{\psi_{\varphi_s}} - \sigma_{\psi_s})$ with  $\widetilde{\delta}_s G_x = G_x - \sigma_{\psi_{\varphi_s}}$. For the second term, we use Lemma \ref{lemma:Potental}, Theorem \ref{thm:adiabatic} and the CCR  to bound  
\begin{align} \nonumber 
& \left| 2 \Im \int_0^t ds \, \langle \xi_{s}, \, q_s \phi \left( \sigma_{\psi_{\varphi_s}} - \sigma_{\psi_s} \right) R_s \phi  \left( G_x \right) \, \psi_{\varphi_s} \otimes \ups  \rangle \right|   \\ 
&\leq C  \int_0^t ds \, \| \sigma_{\psi_s} - \sigma_{\psi_{\varphi_s}} \|_2 \| \left( \mathcal{N} +\alpha^{-2} \right)^{1/2}  R_s \phi (G_x)\,  \psi_{\varphi_s} \otimes \ups \| \notag \\
&\leq  C \alpha^{-2} \int_0^t ds \,  \| R_s a(G_x)  \mathcal{N}^{1/2}  \psi_{\varphi_s} \otimes \, \ups \| \notag \\
& \quad + C \alpha^{-2} \int_0^t ds \,  \| R_s a^*(G_x) \left( \mathcal{N} + 2\alpha^{-2} \right)^{1/2} \,  \psi_{\varphi_s} \otimes \ups \|.
\end{align}
Corollary \ref{cor:zsf_G} and Lemma \ref{lemma:bogo2} thus imply that this term is bounded by $C\alpha^{-4} |t| \leq C \alpha^{-2}$. 

For the first term, we use once more \eqref{eq:gap_id} to obtain via integration by parts
\begin{subequations}
\begin{align} \nonumber
& 2 \Im \int_0^t ds \, \langle \xi_{s}, \, q_s \phi  \big( \widetilde{\delta}_s G_x \big) R_s \phi  \left( G_x \right) \, \psi_{\varphi_s} \otimes \ups \rangle 
\\ & =   -2 \Im \int_0^t ds \, \langle \xi_{s}, \, \mathcal{N} R_s  \phi  \left( \widetilde{\delta}_s G_x \right) R_s \phi \left( G_x \right) \, \psi_{\varphi_s} \otimes \ups \rangle \label{eq:gap2,1}
 \\
& \quad - 2 \Im \int_0^t ds \, \langle \xi_{s}, \,  \phi (\delta_s G_x)  R_s  \phi  \left( \widetilde{\delta}_s G_x \right) R_s \phi \left( G_x \right) \, \psi_{\varphi_s} \otimes \ups \rangle \label{eq:gap2,5}
\\
&\quad + 2 \Im \int_0^t ds \, \langle \xi_{s}, \, R_s \phi  \left( \widetilde{\delta}_s G_x \right) R_s \phi  \left( G_x \right) \left( \mathcal{N} -\mathcal{A}_s \right)\, \psi_{\varphi_s} \otimes \ups \rangle\label{eq:gap2,2} \\
& \quad + 2 \Re \int_0^t ds \, \langle\xi_{s}, \,  \left[ \partial_s \left( {R}_s \phi \left( \widetilde{\delta}_s G_x \right) R_s \phi  (G_x ) \psi_{\varphi_s}\right)\right] \otimes \ups  \rangle \label{eq:gap2,3}
\\
& \quad - 2 \Re \langle \xi_{t}, \, R_t \phi  \left( \widetilde{\delta}_t G_x \right) R_t \phi  \left( G_x \right) \psi_{\varphi_t} \otimes \upt \rangle \label{eq:gap2,4} .
\end{align}
\end{subequations}
With the aid of the CCR, the sum of the first and the third term can be rewritten as
\begin{align}
\eqref{eq:gap2,1} +  \eqref{eq:gap2,2} & =  4 \alpha^{-2} \Im \int_0^t ds \, \langle \xi_{s}, \,   R_s  a \left( \widetilde{\delta}_s G_x \right) R_s a\left( G_x \right)  \, \psi_{\varphi_s} \otimes \ups \rangle \notag
\\ & \quad  - 4 \alpha^{-2} \Im \int_0^t ds \, \langle \xi_{s}, \,   R_s  a^* \left( \widetilde{\delta}_s G_x \right) R_s a^*\left( G_x \right)  \, \psi_{\varphi_s} \otimes \ups \rangle \notag \\
& \quad - 2 \Im \int_0^t ds \, \langle \xi_{s}, \,  R_s  \phi  \left( \widetilde{\delta}_s G_x \right) R_s \phi \left( G_x \right) \mathcal{A}_s\, \psi_{\varphi_s} \otimes \ups \rangle .
\end{align}
Corollary \ref{cor:zsf_G}, Lemmas \ref{lemma:bogo} and \ref{lemma:bogo2} and a further application of the CCR thus imply that these terms are all bounded, in absolute value, by $C\alpha^{-4} |t| \leq C \alpha^{-2}$.  
For the forth term, we can use Lemma \ref{lemma:resolvent} to evaluate the derivative, which leads with the same arguments to the bound
\begin{align*}
\vert \eqref{eq:gap2,3} \vert \leq \alpha^{-4} |t| \le C \alpha^{-2}.
\end{align*}
Similarly, we also obtain
\begin{align}
\vert \eqref{eq:gap2,4} \vert \leq  C \alpha^{-2} .
\end{align}
We are thus left with giving a bound on \eqref{eq:gap2,5}. Note that since there is no resolvent to the left of $\phi(\delta_s G_x)$ in this term, we cannot proceed in the same way as with the other terms. Instead, we again insert the decomposition $1 = p_s + q_s$, use \eqref{eq: deltaphi between q and p} and the fact  that $\phi ( G_x)  R_s  \phi  ( \widetilde{\delta}_s G_x ) R_s \phi \left( G_x \right)$ is a symmetric operator in order to rewrite this term as 
\begin{subequations}
\begin{align}
\eqref{eq:gap2,5}
& =  
- 2 \Im \int_0^t ds \, \langle \xi_{s}, \, q_s  \phi (\delta_s G_x)  R_s  \phi  \left( \widetilde{\delta}_s G_x \right) R_s \phi \left( G_x \right) \, \psi_{\varphi_s} \otimes \ups \rangle 
\label{eq:gap2,5a}
\\
& \quad -
2 \Im \int_0^t ds \, \langle \xi_{s} - \psi_{\varphi_s} \otimes \ups, \, p_s  \phi ( G_x)  R_s  \phi  \left( \widetilde{\delta}_s G_x \right) R_s \phi \left( G_x \right) \, \psi_{\varphi_s} \otimes \ups \rangle .
\label{eq:gap2,5b}
\end{align}
\end{subequations}
Using Corollary \ref{cor:zsf_G} and Lemmas \ref{lemma:Potental}, \ref{lemma:minimizer}, \ref{lemma:resolvent} and \ref{lemma:bogo2}, we see that the second line is bounded by
\begin{align}
\abs{\eqref{eq:gap2,5b}}
&\leq C \int_0^t ds \, \| \psi_{\varphi_s} \|_{H^1( \mathbb{R}^3)}^2 \, \| \left( \mathcal{N} + \alpha^{-2}\right)^{3/2} \ups \|  
\|  \xi_s - \psi_{\varphi_s} \otimes \ups  \|
\notag \\
&\leq C \alpha^{-3} 
\int_0^t ds \, \|  \xi_s - \psi_{\varphi_s} \otimes \ups  \| .
\end{align}
For the first line, we introduce the shorthand notation
$\mathcal{C}_s = \phi ( \widetilde{\delta}_s G_x)  R_s  \phi  ( \widetilde{\delta}_s G_x ) R_s \phi \left( G_x \right)$. We again use \eqref{eq:gap_id} and integration by parts  to obtain
\begin{align}
\eqref{eq:gap2,5a}
&= - 2 \Im \int_0^t ds \, \langle \xi_{s}, \, q_s  \phi (\sigma_{\psi_{\varphi_s}} - \sigma_{\psi_s})  R_s  \phi  \left( \widetilde{\delta}_s G_x \right) R_s \phi \left( G_x \right) \, \psi_{\varphi_s} \otimes \ups \rangle 
\notag \\
&\quad
+ 2 \Im \int_0^t ds \,
\langle \xi_s ,  \phi \left( \delta_s G_x \right)  R_s \mathcal{C}_s \psi_{\varphi_s} \otimes \ups  \rangle
\notag \\
&\quad
+ 2 \Im \int_0^t ds \,
\langle \xi_s , \left[  \mathcal{N}  R_s \mathcal{C}_s - R_s \mathcal{C}_s \left( \mathcal{N} - \mathcal{A}_s \right)\right] \psi_{\varphi_s} \otimes \ups  \rangle
\notag \\
&\quad
- 2 \Re \int_0^t ds \,
\langle \xi_s , \left[  \dot{R}_s \mathcal{C}_s \psi_{\varphi_s}
+ R_s \dot{\mathcal{C}}_s   \psi_{\varphi_s}
+ R_s \mathcal{C}_s ( \partial_s \psi_{\varphi_s} ) \right] \otimes \ups  \rangle
\notag \\[2.5mm]
&\quad +
2 \Re
\langle\xi_t, R_t \mathcal{C}_t \psi_{\varphi_t} \otimes \upt \rangle.
\end{align}
All terms except the one in the second line can be dealt with in the same way as before, leading to a bound of the order $\alpha^{-3}$. For the second line, we shall use energy conversation to argue that $\langle \xi_s , (1-\Delta) \xi_s\rangle$ is uniformly bounded. In fact, this follows because
\begin{align}
\label{eq:Econs}
-\Delta \leq  C (H + C)  \leq C (- \Delta + \mathcal{N} + 1)
\end{align}
uniformly in $\alpha \geq \alpha_0$  (see \cite{liebthomas,liebyamazaki} or  \cite[Lemma~A.3]{LRSS}) and since $\langle \Psi_0 , (- \Delta + \mathcal{N} + 1) \Psi_0 \rangle \leq C$ by the assumptions of Theorem \ref{thm:main}.
We can thus bound
\begin{equation}
\left| \langle \xi_s ,  \phi \left( \delta_s G_x \right)  R_s \mathcal{C}_s \psi_{\varphi_s} \otimes \ups  \rangle \right| \leq C \left\| (1-\Delta)^{-1/2}  \phi \left( \delta_s G_x \right)  R_s \mathcal{C}_s \psi_{\varphi_s} \otimes \ups\right\| \leq C \alpha^{-4} ,
\end{equation}
where we used again Corollary~\ref{cor:zsf_G} in the last step. In combination, we thus have
\begin{equation}
\abs{\eqref{eq:gap2,5a}} \leq C \alpha^{-3}  + C \alpha^{-4} \abs{t}  \leq C \alpha^{-2} .
\end{equation}
This completes the derivation of the bound for \eqref{eq:gap1,0}, which reads
\begin{align}
\vert \eqref{eq:gap1,0} \vert  \le C \alpha^{-3} 
\int_0^t ds \, \|  \xi_s - \psi_{\varphi_s} \otimes \ups  \|  + C \alpha^{-2}.
\end{align}

It remains to bound \eqref{eq:gap1,2b}. Using  the expression \eqref{eq: derivative rho} for $\dot{R}_s$, it is given by 
\begin{align}
\eqref{eq:gap1,2b} &= - 2 \alpha^{-2} \int_0^t ds \,
\Re \langle \xi_s , 
\Big( R_s^2 V_{i \varphi_s} p_s  
- R_s \left( V_{i \varphi_s} - \langle \psi_{\varphi_s}, V_{i \varphi_s} \psi_{\varphi_s} \rangle \right) R_s \Big) \phi(G_x) \psi_{\varphi_s} \otimes \ups \rangle
\notag \\
& \quad + 2 \alpha^{-2} \int_0^t ds \,
\Re \langle \xi_s , \Big( \phi (\delta_s G_x ) R_s^2 V_{i \varphi_s} - p_s V_{i \varphi_s} R_s^2 \phi(G_x) \Big) \psi_{\varphi_s} \otimes \ups \rangle  .
\end{align}
The first line can be bounded in the same way as before, by $C \alpha^{-3} \int_0^t ds \| \xi_s - \psi_{\varphi_s}\otimes\ups \|$. 
In the second line, we insert  the decomposition $1 = p_s + q_s$ and arrive at the terms
\begin{subequations}
\begin{align}\label{eq:gap1,2b1}
&  2 \alpha^{-2} \int_0^t ds \,
\Re \langle \xi_s - \psi_{\varphi_s} \otimes \ups, p_s \left( \phi (\delta_s G_x ) R_s^2 V_{i \varphi_s} - V_{i \varphi_s} R_s^2 \phi(G_x) \right) \psi_{\varphi_s} \otimes \ups \rangle
\\
\label{eq:gap1,2b3}
&  + 2 \alpha^{-2} \int_0^t ds \,
\Re \langle \xi_s  , q_s \phi (\sigma_{\psi_{\varphi_s}} - \sigma_{\psi_s} ) R_s^2 V_{i \varphi_s}  \psi_{\varphi_s} \otimes \ups \rangle
\\
\label{eq:gap1,2b4}
&  + 2 \alpha^{-2} \int_0^t ds \,
\Re \langle \xi_s , q_s \phi (\widetilde{\delta}_s G_x ) R_s^2 V_{i \varphi_s}  \psi_{\varphi_s} \otimes \ups \rangle .
\end{align}
\end{subequations}
In the first term, we used  \eqref{eq: deltaphi between q and p}  and the fact that the expectation value of $ \phi ( G_x ) R_s^2 V_{i \varphi_s} - V_{i \varphi_s} R_s^2 \phi(G_x)$ in the state $\psi_{\varphi_s} \otimes \ups $ is purely imaginary in order to replace $\xi_s$ by $\xi_s- \psi_{\varphi_s} \otimes \ups$. In the last term, we use again the notation  $\widetilde{\delta}_s G_x = G_x - \sigma_{\psi_{\varphi_s}} = \delta_s G_x + \sigma_{\psi_s}- \sigma_{\psi_{\varphi_s}}$.
Analogous estimates as before 
show that the first two lines can be bounded by
\begin{align}
\abs{\eqref{eq:gap1,2b1}}  
+ \abs{\eqref{eq:gap1,2b3}}
&\leq C \alpha^{-3} \int_0^t ds \, 
\|  \xi_s - \psi_{\varphi_s} \otimes \ups \|.
\end{align}
For the last term, we use once more \eqref{eq:gap_id}  and integration by parts to obtain
\begin{align}
\eqref{eq:gap1,2b4} & =   -
2 \alpha^{-2}
\int_0^t ds \, 
\Re \langle \xi_s, R_s \left( \mathcal{N} \phi(\widetilde{\delta}_s G_x) - \phi (\widetilde{\delta}_s G_x) (\mathcal{N} - \mathcal{A}_s) \right) R_s^2 V_{i \varphi_s} \psi_{\varphi_s} \otimes \ups \rangle 
\notag \\
&\quad -
2 \alpha^{-2} \int_0^t ds \, \Re
\langle \xi_s,  \phi(\delta_s G_x )  R_s \phi(\widetilde{\delta}_s G_x ) R_s^2 V_{i \varphi_s} \psi_{\varphi_s} \otimes \ups \rangle
\notag \\
&\quad
- 2 \alpha^{-2} \int_0^t ds \,
\Im \langle \xi_s, \left[ \partial_s \left( R_s \phi(\widetilde{\delta}_s G_x ) R_s^2 V_{i \varphi_s} \psi_{\varphi_s} \right) \right] \otimes \ups \rangle  
\notag \\[2.5mm]
& \quad  - 2 \alpha^{-2} \Im 
\langle \xi_t, R_t \phi (\widetilde{\delta}_t G_x) R_t^2 V_{i \varphi_t} \psi_{\varphi_t} \otimes \upt \rangle .
\end{align}
In the same way as before, using \eqref{eq:Econs} for the second term, 
we obtain
\begin{align}
\abs{\eqref{eq:gap1,2b4}}
&\leq C \alpha^{-3} +  C \alpha^{-5} \abs{t} \le C \alpha^{-3},
\end{align}
and thus 
\begin{align}
\vert \eqref{eq:gap1,2b} \vert \le C   \alpha^{-3} \int_0^t ds \, 
\|  \xi_s - \psi_{\varphi_s} \otimes \ups \| + C \alpha^{-2}.
\end{align}

Collecting all the bounds, we have thus proved that  
\begin{align}
\| \xi_{t} - \psi_{\varphi_t} \otimes \upt \|^2
&\leq C \alpha^{-2} 
+ C \alpha^{-3} \int_0^t ds \, \| \xi_{s} - \psi_{\varphi_s} \otimes \ups \|.
\end{align}
Gr\"onwall's inequality then leads to
\begin{align}
\| \xi_{t} - \psi_{\varphi_t} \otimes \upt \|^2
\leq C  \alpha^{-2} \left( 1 + \alpha^{-4} \abs{t}^2 \right) \le C\alpha^{-2}.
\end{align}
In combination with \eqref{eq: proof main theorem first step} this completes the proof of Theorem \ref{thm:main}.

\subsection{Proof of Theorem \ref{corollary:reduced densities} }
\label{subsection: proof corollary reduced densities}

The bound \eqref{eq: bdelectron density} for the electron reduced density matrix follows from
\begin{align}
&\left \| \gamma^\text{el}_t - \ket{\psi_t} \bra{\psi_t} \right\|_{\rm tr}
\nonumber \\
&\quad =  \textnormal{Tr}_{L^2({\mathbb{R}^3)}}
\left|  \textnormal{Tr}_{\mathcal{F}} \left(
\ket{e^{-i H t} \Psi_0} \bra{e^{- i H t} \Psi_0}
- \ket{\psi_t \otimes W \left (\alpha^2 \varphi_t \right) \upt}
\bra{\psi_t \otimes W \left (\alpha^2 \varphi_t \right) \upt} \right)
\right|
\notag \\
&\quad \leq 
\textnormal{Tr}_{\mathcal{H}}
\left| 
\ket{e^{-i H t} \Psi_0} \bra{e^{- i H t} \Psi_0}
- \ket{\psi_t \otimes W \left (\alpha^2 \varphi_t \right) \upt}
\bra{\psi_t \otimes W \left (\alpha^2 \varphi_t \right) \upt} \,
\right|
\notag \\
&\quad \leq  2
\norm{e^{-i H t} \Psi_0 - e^{- i \int_0^t ds \, \omega(s)} \psi_t \otimes W(\alpha^2 \varphi_t) \upt}
\end{align}
in combination with Theorem \ref{thm:main}.
In order to
prove \eqref{eq: bdphonon density}, we start by noting that 
\begin{align}
\left( \gamma_t^{\rm ph} - \ket{\varphi_t} \bra{\varphi_t} \right)(k,l)
&= \scp{W^*\!\left( \alpha^2 \varphi_t \right) e^{- i H t} \Psi_0}{a_l^* a_k W^*\!\left( \alpha^2 \varphi_t \right) e^{- i H t} \Psi_0}
\nonumber \\
&\quad 
+ \varphi_t(k) \scp{W^*\!\left( \alpha^2 \varphi_t \right) e^{- i H t} \Psi_0}{a_l^* W^*\!\left( \alpha^2 \varphi_t \right) e^{- i H t} \Psi_0}
\nonumber \\[2mm]
&\quad
+ \overline{\varphi_t(l)}
\scp{W^*\!\left( \alpha^2 \varphi_t \right) e^{- i H t} \Psi_0}{a_k W^*\!\left( \alpha^2 \varphi_t \right) e^{- i H t} \Psi_0}
\end{align}
using  \eqref{eq:prop_Weyl} and unitarity of the Weyl operators. The first term defines a positive operator, hence its trace norm equals its trace, given by $\| \mathcal{N}^{1/2} W^*( \alpha^2 \varphi_t) e^{- i H t} \Psi_0\|^2$. The other two terms define operators of rank one. Using the triangle inequality for the trace norm, as well as the Cauchy--Schwarz inequality to bound the rank one terms, we conclude that
\begin{equation}
 \left\|  \gamma_t^{\rm ph} - \ket{\varphi_t} \bra{\varphi_t} \right\|_{\rm tr} \leq  
 \left\| \mathcal{N}^{1/2} W^*\!\left( \alpha^2 \varphi_t \right) e^{- i H t} \Psi_0  \right\|^2 + 2 \| \varphi_t \|_2
\left\| \mathcal{N}^{1/2} W^*\!\left( \alpha^2 \varphi_t \right) e^{- i H t} \Psi_0  \right\| .
\end{equation}
Let $\alpha_0>0$. For small $\alpha\leq\alpha_0$, we use the form bound
\begin{align}
\label{eq:formbound_H}
\mathcal{N} \leq C \left( H + C \alpha^{-2} \right) \leq C \left( - \Delta+ \mathcal{N} + \alpha^{-2} \right) 
\end{align}
for whose proof we refer to the commutator method of Lieb and Yamazaki \cite{liebyamazaki}, see also \cite{liebthomas} or \cite[Lemma 7]{FS}. In fact, \eqref{eq:formbound_H} implies the trivial bound
\begin{align}
\left\|  \gamma_t^{\rm ph} - \ket{\varphi_t} \bra{\varphi_t} \right\|_{\rm tr} \leq C \left( 1 + \alpha^{-2} \right) ,
\end{align}
for all $\alpha \leq \alpha_0$. 
The bound \eqref{eq: bdphonon density} is then a consequence of \eqref{eq:bdN} for $\alpha>\alpha_0$, whose proof occupies the rest of this section. Hence, in the rest of this section, we restrict to $\alpha > \alpha_0$. 


We split 
\begin{equation}\label{eq: bound for split N in proof}
\| \mathcal{N}^{1/2} W^*( \alpha^2 \varphi_t ) e^{-iHt} \Psi_0   \|^2 = \| \mathcal{N}^{1/2}_\leq  W^*( \alpha^2 \varphi_t ) e^{-iHt} \Psi_0   \|^2 +  \| \mathcal{N}^{1/2}_> W^*( \alpha^2 \varphi_t ) e^{-iHt} \Psi_0   \|^2 
\end{equation}
where, for $K >0$, we write $\mathcal{N} = \mathcal{N}_\leq + \mathcal{N}_>$, with
\begin{align}
 \quad \mathcal{N}_{\leq} = \int_{|k| \leq K} dk \, a_k^*a_k .
\end{align}
To bound the right side of \eqref{eq: bound for split N in proof}, we make use of the following lemma, which is proven at the end of this section.

\begin{lemma}\label{lemma:Nsplit}
Let $\alpha_0 > 0$. Under the same assumptions as in Theorem \ref{thm:main}, there exist $C,T >0$ such that
\begin{align}
\| \mathcal{N}_\leq W^* (\alpha^2 \varphi_t ) e^{-iHt} \Psi_0 \| \leq C   \left( 1 + K^{1/2} \right) \label{eq:boundNleq} 
\end{align}
for all $K>0$, $\alpha \geq \alpha_0$ and $\vert t \vert \le \CL \alpha^2$. Moreover, under the additional assumption that $\varphi_0 \in L^2_{1/4} (\mathbb{R}^3 ) : = L^2 ( \mathbb{R}^3, \, ( 1+|k|^2)^{1/4} dk )$, we have
\begin{align}
\| \mathcal{N}_>^{1/2} W^* (\alpha^2 \varphi_t ) e^{-iHt}  \Psi_0 \|^2 \leq C\left(  K^{-1/2}   +  \alpha^{-2} \right)  \label{eq:boundNgeq}
\end{align}
for all $K>0$, $\alpha \geq \alpha_0$ and $\vert t \vert \le \CL \alpha^2$.
\end{lemma}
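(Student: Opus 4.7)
Let $\chi_t := W^*(\alpha^2 \varphi_t) e^{-iHt} \Psi_0$, so that by \eqref{partL}--\eqref{def:L} we have $i\partial_t \chi_t = \mathcal{L}_t \chi_t$ with $\mathcal{L}_t = h_{\varphi_t} - e(\varphi_t) + \mathcal{N} + \phi(\delta_t G_x)$, and initial datum $\chi_0 = \psi_{\varphi_0}\otimes\Upsilon$. Weyl unitarity gives $\|\chi_t\|=1$. The interpolated estimate $\|\mathcal{N}\Upsilon\|\leq C\alpha^{-2}$ (coming from $\langle\Upsilon,\mathcal{N}^5\Upsilon\rangle\leq c\alpha^{-10}$ and $\|\Upsilon\|=1$), together with the energy inequality \eqref{eq:Econs} and the Weyl identity $W^*(\alpha^2\varphi_t)\mathcal{N}W(\alpha^2\varphi_t) = \mathcal{N} + \phi(\varphi_t) + \|\varphi_t\|_2^2$, then yields the $\alpha$-uniform a priori bounds $\|\mathcal{N}^{1/2}\chi_t\|\leq C$ and $\|\nabla_x \chi_t\| = \|\nabla_x \Psi(t)\|\leq C$ for all $|t|\leq T\alpha^2$ (using that $[\nabla_x, W(\alpha^2\varphi_t)]=0$). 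These estimates will be the energy-type inputs in the subsequent Gr\"onwall arguments.

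\textbf{Proof of \eqref{eq:boundNleq}.} Since $h_{\varphi_t}$ acts only on the electron and $\mathcal{N}$ commutes with $\mathcal{N}_\leq$, only the field operator contributes to the commutator:
\[
[\mathcal{L}_t,\mathcal{N}_\leq] \;=\; \alpha^{-2}\bigl(a((\delta_t G_x)_\leq) - a^*((\delta_t G_x)_\leq)\bigr),
\]
where the subscript $\leq$ denotes restriction to $\{|k|\leq K\}$. Plugging this into $\partial_t\|\mathcal{N}_\leq\chi_t\|^2 = -2\,\mathrm{Im}\,\langle \mathcal{N}_\leq\chi_t, [\mathcal{L}_t,\mathcal{N}_\leq]\chi_t\rangle$ and applying Cauchy--Schwarz together with the creation/annihilation bounds \eqref{eq: bounds for the creation and annihilation operators}, the identity $\sup_x\|(G_x)_\leq\|_{L^2_k}^2 = \int_{|k|\leq K}|k|^{-2}dk = 4\pi K$, the bound $\|\sigma_{\psi_t}\|_2\leq C$ from Lemma~\ref{lemma:Potental}, and the a priori control $\|\mathcal{N}^{1/2}\chi_t\|\leq C$, one obtains $\partial_t\|\mathcal{N}_\leq\chi_t\|\leq C\alpha^{-2}(1+K^{1/2})$. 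Integrating over $[0,t]$ and using $\|\mathcal{N}_\leq\chi_0\|\leq\|\mathcal{N}\Upsilon\|\leq C\alpha^{-2}$ yields \eqref{eq:boundNleq} for $|t|\leq T\alpha^2$.

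\textbf{Proof of \eqref{eq:boundNgeq} and main obstacle.} The same scheme applied to $\mathcal{N}_>$ gives $\partial_t\langle\chi_t,\mathcal{N}_>\chi_t\rangle = 2\alpha^{-2}\,\mathrm{Im}\,\langle\chi_t, a^*((\delta_t G_x)_>)\chi_t\rangle$, but here the naive creation/annihilation bound breaks down because $\|(G_x)_>\|_{L^2_k}=\infty$. The key manoeuvre is a Lieb--Yamazaki-type integration by parts in the electron coordinate, based on the identity
\[
\frac{e^{-ikx}}{|k|} \;=\; i\,\nabla_x\!\cdot\!\Big(\frac{k\,e^{-ikx}}{|k|^3}\Big),
\]
which, after integrating by parts in $x$, replaces $(G_x)_>$ by the square-integrable (vector-valued) form factor $\vec h_x(k) = k|k|^{-3}e^{-ikx}$ with $\sup_x\|\vec h_x\mathbf{1}_{|k|>K}\|_{L^2_k}^2 = \int_{|k|>K}|k|^{-4}dk = 4\pi K^{-1}$, at the price of a derivative acting on $\chi_t$. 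The a priori bound on $\|\nabla_x\chi_t\|$ together with \eqref{eq: bounds for the creation and annihilation operators} then gives $|\langle\chi_t, a^*((G_x)_>)\chi_t\rangle|\leq CK^{-1/2}$. The $\sigma_{\psi_t}$ contribution is estimated directly via $\|\sigma_{\psi_t}\mathbf{1}_{|k|>K}\|_2^2\leq K^{-2}\|\psi_t\|_4^4\leq CK^{-2}$ (using Lemma~\ref{lemma:LP}) and contributes only $O(K^{-1})$. Altogether $|\partial_t\langle\chi_t,\mathcal{N}_>\chi_t\rangle|\leq C\alpha^{-2}K^{-1/2}$, and integrating over $|t|\leq T\alpha^2$ with the initial bound $\langle\chi_0,\mathcal{N}_>\chi_0\rangle\leq\|\mathcal{N}^{1/2}\Upsilon\|^2\leq C\alpha^{-2}$ yields \eqref{eq:boundNgeq}. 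The additional hypothesis $\varphi_0\in L^2_{1/4}$, which propagates to $\|\varphi_t\|_{L^2_{1/4}}\leq C$ under the Landau--Pekar flow via Lemma~\ref{lemma:LP} and \eqref{eq:LP}, would enter if one rephrases the bound through the alternative decomposition $W\mathcal{N}_>W^* = \mathcal{N}_> - \phi(\varphi_t\mathbf{1}_{|k|>K}) + \|\varphi_t\mathbf{1}_{|k|>K}\|_2^2$, since then the last term is controlled by $\|\varphi_t\mathbf{1}_{|k|>K}\|_2^2\leq CK^{-1/2}\|\varphi_t\|_{L^2_{1/4}}^2$. The main technical obstacle is the high-momentum estimate: turning the non-$L^2$ singularity of $G_x\mathbf{1}_{|k|>K}$ into a convergent integral of order $K^{-1}$ via the Lieb--Yamazaki trick, while maintaining uniform-in-$\alpha$ control of the electron derivative that this trick extracts onto $\chi_t$.
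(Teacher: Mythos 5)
Your proposal is correct, and part \eqref{eq:boundNleq} coincides with the paper's argument: a Gr\"onwall bound on $\|\mathcal{N}_\leq \xi_t\|$ via the commutator $[\mathcal{L}_s,\mathcal{N}_\leq]$, the bound $\sup_x\|G_x\chi(|k|\leq K)\|_2^2=4\pi K$, and the a priori estimate $\|(\mathcal{N}+\alpha^{-2})^{1/2}\xi_s\|\leq C$ (which the paper gets from Lemma~\ref{lemma:N} and you get more directly from \eqref{eq:formbound_H}, energy conservation and the Weyl conjugation of $\mathcal{N}$ --- both are fine).

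For \eqref{eq:boundNgeq} you take a genuinely different route. The paper conjugates back to the original Fr\"ohlich dynamics via \eqref{eq:WeylN}, so that the relevant commutator is $[H,\mathcal{N}_>]$, to which the Lieb--Yamazaki form bound $i[H,\mathcal{N}_>]\leq C\alpha^{-2}K^{-1/2}(H+C)$ and energy conservation apply cleanly; the price of this conjugation is the term $\|\chi(|\cdot|\geq K)\varphi_t\|_2^2$, which is exactly where the hypothesis $\varphi_0\in L^2_{1/4}$ and its propagation \eqref{eq:bound_phi_weight} along the Landau--Pekar flow are needed, both at time $t$ and at time $0$ (for $\|\mathcal{N}_>^{1/2}\Psi_0\|$). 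You instead run the commutator argument directly on the fluctuation dynamics generated by $\mathcal{L}_t$, apply the Lieb--Yamazaki integration by parts to the high-momentum tail $G_x\chi(|k|>K)$ appearing in $\phi(\delta_tG_x)$ (using $\int_{|k|>K}|k|^{-4}dk=4\pi K^{-1}$ and the uniform bound on $\|\nabla_x\xi_t\|$, which is legitimate since the Weyl operator commutes with $\nabla_x$), and control the $\sigma_{\psi_t}$ tail by $\|\sigma_{\psi_t}\chi(|k|>K)\|_2\leq CK^{-1}\|\psi_t\|_{H^1}^2$. This avoids the term $\phi(\varphi_t\chi(|\cdot|\geq K))$ altogether, so your argument appears not to need the additional assumption $\varphi_0\in L^2_{1/4}$ at all --- a slight strengthening of the statement. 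What the paper's route buys in exchange is that it only ever uses the clean operator inequality for $[H,\mathcal{N}_>]$ together with conservation of $\langle\Psi_0,H\Psi_0\rangle$, without having to justify the electron integration by parts against the time-dependent fluctuation vector; but your version of that step is the standard proof of Lemma~\ref{lemma:G} and is unproblematic given the $H^1$ a priori bound from \eqref{eq:Econs}.
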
 

Thus, writing
\begin{align}\nonumber
& \| \mathcal{N}^{1/2}_\leq  W^*( \alpha^2 \varphi_t ) e^{-iHt} \Psi_0 \|^2  
\\
& = \left\langle  \left( W^*( \alpha^2 \varphi_t ) e^{-iHt} \Psi_0  - e^{-i\int_0^t ds \, \omega(s)} \psi_t \otimes \upt \right) , \mathcal{N}_\leq W^*( \alpha^2 \varphi_t ) e^{-iHt} \Psi_0  \right\rangle\notag \\[1mm]
&    \quad +  e^{i\int_0^t ds \, \omega(s)} \langle \psi_t \otimes \upt , \, \mathcal{N}_\leq \,  W^*( \alpha^2 \varphi_t ) e^{-iHt} \Psi_0  \rangle 
\label{eq:N12-1}
\end{align}
implies that
\begin{align}\nonumber
& \| \mathcal{N}^{1/2}_\leq W^*( \alpha^2 \varphi_t ) e^{-iHt} \Psi_0   \|^2 \\
&  \leq \| (W^*( \alpha^2 \varphi_t ) e^{-iHt} \Psi_0 - e^{-i\int_0^t ds \, \omega(s)} \psi_t \otimes \upt \| \,  \| \mathcal{N}_\leq W^* ( \alpha^{2} \varphi_t ) e^{-iHt} \Psi_0  \|+ \| \mathcal{N}_\leq \psi_t \otimes \upt \|  ,\label{eq:N12-2}
\end{align}
which leads with Theorem \ref{thm:main} and Lemma \ref{lemma:Nsplit} for the first term,  and Lemma \ref{lemma:bogo2} for the second term, to
\begin{align}
\| \mathcal{N}^{1/2}_\leq W^*( \alpha^2 \varphi_t ) e^{-iHt} \Psi_0  \|^2 \leq C \alpha^{-1} \left( 1 + K^{1/2}  \right) +  C \alpha^{-2}.
\end{align}
In combination with \eqref{eq:boundNgeq}, we thus have
\begin{align}
\| \mathcal{N}^{1/2} W^* (\alpha^2 \varphi_t ) e^{-iHt}   \Psi_0  \|^2 \leq C \left( \alpha^{-2} + \alpha^{-1} ( 1 + K^{1/2}  )+ K^{-1/2}  \right) .
\end{align}
The choice $K = \alpha $ leads to \eqref{eq:bdN} and hence 
completes the proof of Theorem \ref{corollary:reduced densities}.
\hfill$\square$

\bigskip

For the proof of Lemma \ref{lemma:Nsplit} we need the following statement.

\begin{lemma}
\label{lemma:N}
Let $\alpha_0 >0$. Under the same assumptions as in Theorem \ref{thm:main}, there exist $C,\CL >0$ such that
\begin{align}
\| \mathcal{N}^{1/2} W^* (\alpha^2 \varphi_t ) e^{-iHt}   \Psi_0  \| \leq C .
\end{align}
for all $|t| \leq \CL \alpha^2$ and $\alpha \geq \alpha_0$. 
\end{lemma}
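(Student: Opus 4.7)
The plan is to avoid any Gr\"onwall argument entirely and instead extract the bound directly from energy conservation, combined with the algebraic identity for how the number operator transforms under the Weyl operators.

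First, I would use \eqref{eq:prop_Weyl} to compute
\begin{equation}
W(\alpha^2 \varphi_t)\, \mathcal{N}\, W^*(\alpha^2 \varphi_t) = \mathcal{N} - \phi(\varphi_t) + \|\varphi_t\|_2^2, \nonumber
\end{equation}
so that
\begin{equation}
\| \mathcal{N}^{1/2} W^*(\alpha^2 \varphi_t) e^{-iHt} \Psi_0 \|^2 = \langle e^{-iHt}\Psi_0, \left(\mathcal{N} - \phi(\varphi_t) + \|\varphi_t\|_2^2\right) e^{-iHt}\Psi_0 \rangle. \nonumber
\end{equation}
Since $\|\varphi_t\|_2 \leq C$ by Lemma \ref{lemma:LP}, and \eqref{eq: bounds for the creation and annihilation operators} gives $|\langle \Phi, \phi(\varphi_t) \Phi\rangle| \leq 2\|\varphi_t\|_2\, \|(\mathcal{N}+\alpha^{-2})^{1/2}\Phi\|\,\|\Phi\|$, it suffices to prove that $\langle e^{-iHt}\Psi_0, \mathcal{N} e^{-iHt}\Psi_0\rangle$ is bounded uniformly in $t$ and $\alpha \geq \alpha_0$.

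For this, I would invoke the form bound \eqref{eq:formbound_H}, namely $\mathcal{N} \leq C(H + C\alpha^{-2})$, together with energy conservation $\langle e^{-iHt}\Psi_0, H e^{-iHt}\Psi_0\rangle = \langle \Psi_0, H \Psi_0\rangle$. It then remains to estimate the initial energy. Since $\Psi_0 = \psi_{\varphi_0} \otimes W(\alpha^2 \varphi_0)\Upsilon$, the kinetic term $\langle \Psi_0, -\Delta \Psi_0\rangle = \|\nabla \psi_{\varphi_0}\|_2^2 \leq C$ by Lemma \ref{lemma:minimizer}, and
\begin{equation}
\langle \Psi_0, \mathcal{N} \Psi_0\rangle = \langle \Upsilon, (\mathcal{N} + \phi(\varphi_0) + \|\varphi_0\|_2^2)\Upsilon\rangle \leq C, \nonumber
\end{equation}
where I use the assumption $\langle \Upsilon, \mathcal{N}^5 \Upsilon\rangle \leq c\alpha^{-10}$ to get $\|\mathcal{N}^{1/2}\Upsilon\| \leq C\alpha^{-1}$ and hence bound the $\phi(\varphi_0)$ contribution. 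For the remaining coupling $\langle \Psi_0, \phi(G_x) \Psi_0\rangle$, I would use the duality $|\langle \Psi_0, \phi(G_x)\Psi_0\rangle| \leq \|(-\Delta+1)^{1/2}\Psi_0\|\cdot \|(-\Delta+1)^{-1/2}a^*(G_\cdot)\Psi_0\| + \text{c.c.}$ together with Lemma \ref{lemma:G} to bound it by $C\|(-\Delta+1)^{1/2}\Psi_0\|\cdot\|(\mathcal{N}+\alpha^{-2})^{1/2}\Psi_0\| \leq C$.

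Combining these yields $\langle \Psi_0, H \Psi_0\rangle \leq C$, hence (via \eqref{eq:formbound_H} and energy conservation) $\langle e^{-iHt}\Psi_0, \mathcal{N} e^{-iHt}\Psi_0\rangle \leq C$ for all $t$, and finally $\| \mathcal{N}^{1/2} W^*(\alpha^2 \varphi_t) e^{-iHt} \Psi_0 \|^2 \leq C$. The only subtle step is the initial energy estimate, where one must handle the $G_x$-coupling (which is not in $L^2$) by the resolvent trick via Lemma \ref{lemma:G}; everything else is a direct consequence of the Weyl conjugation identity and conservation of energy.
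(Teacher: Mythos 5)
Your proof is correct, but it takes a genuinely different route from the paper's. The paper works with the fluctuation vector $\xi_t$ from \eqref{def:xi}, computes $\partial_t\|\mathcal{N}^{1/2}\xi_t\|^2$ via the commutator $[\mathcal{L}_t,\mathcal{N}]=\alpha^{-2}\left(a(\delta_t G_x)-a^*(\delta_t G_x)\right)$, bounds the integrand by $C\alpha^{-2}\|(\mathcal{N}+\alpha^{-2})^{1/2}\xi_s\|$ using Lemma \ref{lemma:G} together with the energy-conservation bound \eqref{eq:Econs} on $\|(-\Delta+1)^{1/2}\xi_s\|$, and closes with Gr\"onwall over the time interval $|t|\leq T\alpha^2$. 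You instead avoid the dynamical argument altogether: the Weyl conjugation identity reduces everything to $\langle e^{-iHt}\Psi_0,\mathcal{N}e^{-iHt}\Psi_0\rangle$ plus a $\phi(\varphi_t)$ cross term controlled by the same quantity, and then $\mathcal{N}\leq C(H+C\alpha^{-2})$ together with conservation of energy and a one-time estimate of $\langle\Psi_0,H\Psi_0\rangle$ finishes the job. Your initial-energy estimate is sound (and could be shortened by invoking the upper bound $H+C\leq C(-\Delta+\mathcal{N}+1)$ from \eqref{eq:Econs} applied to the product state, as the paper does, rather than estimating the $\phi(G_x)$ term by hand via Lemma \ref{lemma:G} --- though your direct route is equally valid). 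Your argument is static and requires no time restriction beyond what is needed for $\|\varphi_t\|_2\leq C$, which holds globally by Lemma \ref{lemma:LP}; what it gives up is the sharper short-time information contained in the paper's Gr\"onwall bound \eqref{eq:GW_sqrtN}, namely $\|(\mathcal{N}+\alpha^{-2})^{1/2}\xi_t\|\leq C(\alpha^{-2}|t|+\alpha^{-1})$, which is $O(\alpha^{-1})$ for $|t|\ll\alpha$. Since the paper only ever uses the $O(1)$ form of the lemma, your proof is a fully adequate, and arguably cleaner, substitute; it is also consistent in spirit with how the paper itself handles the high-momentum part $\mathcal{N}_>$ in Lemma \ref{lemma:Nsplit}.
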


\begin{proof}[Proof of Lemma \ref{lemma:N}] Recall the definition of the fluctuation vector $\xi_t$ in \eqref{def:xi} 
satisfying $i \partial_t \xi_{t}  =\mathcal{L}_t \xi_{t}$ with $\mathcal{L}_t$  given in \eqref{def:L}.
We have
\begin{align}
\| \mathcal{N}^{1/2} W^* (\alpha^2 \varphi_t ) e^{-iHt}    \Psi_0  \| = \| \mathcal{N}^{1/2} \xi_t \|,
\end{align}
and using the CCR we compute
\begin{align}
\| \mathcal{N}^{1/2} \xi_t \|^2  - \| \mathcal{N}^{1/2} \xi_0 \|^2 & = i \int_0^t ds \, \langle \xi_s, \, \left[ \mathcal{L}_s, \mathcal{N} \right]\xi_s \rangle \notag \\
&  = i \alpha^{-2} \int_0^t ds \, \langle\xi_s , \, \left[ a( \delta_s G_x) - a^*(\delta_s G_x) \right]\xi_s \rangle ,
\end{align}
where $\delta_s G_x = G_x - \sigma_{\psi_s}$. 
Thus,
\begin{align}
\| \mathcal{N}^{1/2} \xi_t \|_2^2  - \| \mathcal{N}^{1/2} \xi_0 \|^2 & \leq 2 \alpha^{-2} \int_0^t ds \, \| (- \Delta +1	)^{-1/2} a^*( G_x ) \xi_s \| \, \| (- \Delta +1)^{1/2} \xi_s \|  \notag \\
& \quad + 2 \alpha^{-2} \int_0^t ds \, \| \sigma_{\psi_s} \|_2 \| \mathcal{N}^{1/2} \xi_s \| \, \|\xi_s \| .
\end{align}
Using \eqref{eq:Econs},  Lemmas \ref{lemma:Potental} and \ref{lemma:G}  we find 
\begin{align}
\| \mathcal{N}^{1/2} \xi_t \|_2^2  - \| \mathcal{N}^{1/2} \xi_0 \|^2 
\leq  C \alpha^{-2} \int_0^t ds \, \| \left( \mathcal{N} + \alpha^{-2} \right)^{1/2} \xi_s \|  .
\end{align}
Since $\| \mathcal{N}^{1/2} \xi_0 \|  = \| \mathcal{N}^{1/2} \Upsilon  \|_{\mathcal F}  \le C \alpha^{-1}$ by assumption, we conclude with Gr\"onwall's inequality that
\begin{align}
\| (\mathcal{N}  +  \alpha^{-2} )^{1/2} \xi_t \| \leq C \left( \alpha^{-2}  |t| +  \alpha^{-1} \right)  \le C  .\label{eq:GW_sqrtN}
\end{align}
\end{proof}



\begin{proof}[Proof of Lemma \ref{lemma:Nsplit}] We use again the notation introduced in \eqref{def:xi} and \eqref{def:L}, and compute 
\begin{align}
\| \mathcal{N}_\leq \xi_t \|^2 - \| \mathcal{N}_\leq \xi_0 \|^2 & = i \int_0^t ds \,  \langle \xi_s, \left[ \mathcal{L}_s, \, \mathcal{N}_\leq^2 \right] \xi_s \rangle .
\end{align}
Since
\begin{align}
\left[ \mathcal{L}_s, \mathcal{N}_\leq^2 \right] = \mathcal{N}_\leq \left[ \mathcal{L}_s,\,  \mathcal{N}_\leq \right] + \left[ \mathcal{L}_s,\,  \mathcal{N}_\leq \right] \mathcal{N}_\leq 
\end{align}
and
\begin{equation}
\left[ \mathcal{L}_s, \, \mathcal{N}_\leq \right]   = \alpha^{-2} \int_{|k| \leq K} dk \left[ \left( |k|^{-1} e^{ik \cdot x} - \overline{ \sigma_{\psi_s}(k)} \right)  a_k - \left( |k|^{-1} e^{-ik \cdot x} - \sigma_{\psi_s}(k) \right) a_k^*   \right] ,
\end{equation}
we have 
\begin{equation}
\|  \mathcal{N}_\leq \xi_t \|^2  - \| \mathcal{N}_\leq \xi_0 \|^2 \leq C \alpha^{-2} \int_0^t ds \, \left(  K^{1/2} +  \| \sigma_{\psi_s} \|_2 \right) \, \| \left( \mathcal{N}_\leq + \alpha^{-2} \right)^{1/2} \xi_s \| \, \| \mathcal{N}_\leq  \xi_s \|.
\end{equation}
Lemma \ref{lemma:N} implies that $ \| ( \mathcal{N}_\leq + \alpha^{-2})^{1/2} \xi_s \| \leq C$ for all $|s| \leq \CL \alpha^2$. Hence we obtain with Lemma \ref{lemma:Potental} and Gr\"onwall's inequality
\begin{align}
\| \mathcal{N}_\leq \xi_t \| \leq \| \mathcal{N}_\leq \xi_0 \| + C \alpha^{-2} \left( 1 + K^{1/2} \right) |t| \leq \| \mathcal{N}_\leq \xi_0 \| + C  \left( 1 + K^{1/2} \right)   ,
\end{align}
leading with the assumption $\| \mathcal{N}_\leq \xi_0 \| = \| \mathcal{N}_\leq  \Upsilon \|_{\mathcal F}   \leq C \alpha^{-2}$ to 
\begin{align}
\| \mathcal{N}_\leq \xi_t \| \leq C   \left(  1 + K^{1/2} \right) . 
\end{align}

In order to prove \eqref{eq:boundNgeq}, we first derive a bound on the $L^2_{1/4} ( \mathbb{R}^3)$-norm of $\varphi_t$. 
For that purpose, let $\eta_t (k) = |k|^{1/4} \varphi_t (k)$. The Landau--Pekar equations \eqref{eq:LP} imply 
\begin{equation}
\alpha^2\partial_t \| \eta_t \|_2^2  =  2 \Im \int dk \, |k|^{1/4} \eta_{t}(k ) \, \overline{\sigma_{\psi_t} (k) }.
\end{equation}
With the aid of the Cauchy--Schwarz inequality we thus obtain
\begin{align}
\| \eta_t \|_2^2 - \| \eta_0 \|_2^2 \leq  \alpha^{-2} \int_0^t ds \, \left( \| \eta_s \|_2^2 +  \| |\, \cdot\, |^{1/4} \sigma_{\psi_s} \|_{2} ^2 \right) .
\end{align}
Since for arbitrary $\kappa >0$
\begin{equation}
\| |\, \cdot \,|^{1/4} \sigma_{\psi_s} \|_{2}^2   \leq   \kappa^{1/2}  \| \sigma_{\psi_s} \|_2^2 + \kappa^{-3/2}   \|  | \cdot |   \sigma_{\psi_s} \|_2^2 , 
\end{equation}
%
and by the Plancherel identity and Sobolev's inequality,
\begin{align}
\| \vert \cdot \vert \sigma_{\psi_s} \|_2 = C \| \vert \psi_s \vert^2 \|_2 = C \| \psi_s \|_4^4 \leq C \| \psi_s \|_{H^1( \mathbb{R}^3)}^4,
\end{align}
we conclude with Lemmas~\ref{lemma:LP} and~\ref{lemma:Potental}  and the assumption $\varphi_0 \in L_{1/4}^2 \left( \mathbb{R}^3 \right)$  that
\begin{align}
\label{eq:bound_phi_weight}
\| \varphi_t \|_{L^2_{1/4} \left( \mathbb{R}^3\right)} \leq C 
\end{align}
for all $\vert t \vert \leq \CL \alpha^2$.

Using \eqref{eq:prop_Weyl} we compute
\begin{align}
W\!\left( \alpha^2 \varphi_t \right) \mathcal{N}_> W^*\!\left( \alpha^2 \varphi_t \right)  = \mathcal{N}_> - \phi \left( \chi \left( \vert \cdot \vert \geq K \right) \varphi_t \right) + \| \chi \left( \vert \cdot \vert \geq K \right) \varphi_t \|_2^2 . \label{eq:WeylN}
\end{align}
Thus, writing $\widetilde{\xi}_t = W\!\left( \alpha^2 \varphi_t \right) \xi_t = e^{i \int_0^t ds (\omega(s) + e(\varphi_s)) } e^{-iHt} \, \Psi_0 $, we find 
\begin{align}
\| \mathcal{N}_>^{1/2} \xi_t \|^2 & = \langle \widetilde{\xi}_t , \, W\!\left( \alpha^2 \varphi_t \right) \mathcal{N}_> W^*\!\left( \alpha^2 \varphi_t \right)  \widetilde{\xi}_t \rangle \leq 2 \| \mathcal{N}_>^{1/2} \widetilde{\xi}_t\|^2 + 2 \| \chi \left( \vert \cdot \vert \geq K \right) \varphi_t \|_2^2 .
\end{align}
Since $\| \chi \left( \vert \cdot \vert \geq K \right) \varphi_t \|_2 \leq K^{-1/4} \| \varphi_t \|_{L^2_{1/4} \left( \mathbb{R}^3\right)}$ it follows from
%
\eqref{eq:bound_phi_weight} that
\begin{align}\label{eq: relation between ksi and tilde ksi}
\| \mathcal{N}_>^{1/2} \xi_t \|^2   \leq 2  \, \| \mathcal{N}_>^{1/2} \widetilde{\xi}_t\|^2 + C K^{-1/2} .
\end{align}
Note that $\widetilde{\xi}_0 = \Psi_0$. Hence, it remains to estimate
\begin{align}
\| \mathcal{N}_>^{1/2} & \widetilde{\xi}_t\|^2 - \| \mathcal{N}_>^{1/2}  \Psi_0 \|^2 = i \int_0^t ds \, \langle \widetilde{\xi}_s,  \left[ H, \, \mathcal{N}_> \right] \widetilde{\xi}_s\rangle \label{eq:boundNgeq1}
\end{align}
where
\begin{align}
\left[ H, \, \mathcal{N}_> \right] & = \alpha^{-2} \int_{|k| > K}   \frac{dk}{|k|}\left( e^{ik \cdot x} a_k - e^{-ik \cdot x} a_k^*  \right) .
\end{align}
It follows again from the commutator method of Lieb and Yamazaki \cite{liebyamazaki}, (resp. \cite{liebthomas} or \cite[Lemma 7]{FS}) that
\begin{equation}
i \left[ H, \, \mathcal{N}_> \right] \leq C \alpha^{-2} K^{-1/2} ( H + C ) \,.
\end{equation}

And we have with \eqref{eq:Econs} that $\langle \widetilde{\xi}_s, \, H \widetilde{\xi}_s \rangle = \langle \Psi_0 , \, H   \Psi_0 \rangle \leq C$ by assumption. 
%
We thus conclude that 
\begin{align}
\| \mathcal{N}_>^{1/2} & \widetilde \xi_t \|^2 - \| \mathcal{N}_>^{1/2} \Psi_0 \|^2\le C \vert t \vert \alpha^{-2} K^{-1/2} \le C   K^{-1/2} . 
\end{align}
Using again \eqref{eq:prop_Weyl}  we find similarly as in \eqref{eq:WeylN}
\begin{align}
\| \mathcal{N}_>^{1/2}  \Psi_0 \|^2 & = \| \mathcal{N}_>^{1/2}   W\!\left( \alpha^2 \varphi_0 \right) \Upsilon  \|^2_{\mathcal F}
\notag \\[1mm]
&  \leq 2 \| \mathcal{N}^{1/2}_> \Upsilon \|^2_{\mathcal F} + 2 K^{-1/2} \| \varphi_0 \|_{L^2_{1/4}( \mathbb{R}^3)}^2 \leq C\left( \alpha^{-2} + K^{-1/2} \right)
\end{align}
where we used the assumptions $\varphi_0 \in L^2_{1/4}(\mathbb R^3)$ and $\| \mathcal{N}^{1/2}_> \Upsilon \|^2_{\mathcal F}\le c \alpha^{-2}$. Hence we obtain 
\begin{align}
\| \mathcal{N}_>^{1/2}  \widetilde \xi_t \|^2 \leq C \left( K^{-1/2} + \alpha^{-2} \right)
\end{align}
and with \eqref{eq: relation between ksi and tilde ksi} therefore also the desired bound \eqref{eq:boundNgeq}. 
\end{proof}

\subsection{Proof of Remark \ref{rmk:B} }
\label{rmk:bogo}

\label{subsection: proof of remark bogo}

It suffices to show that for $\Upsilon_0 = \Omega$ and sufficiently small $\delta > 0$, there exists a constant $C_\delta > 0$ such that for $t= \delta \alpha^2$
\begin{align}
 \| \upt - \Omega \| \geq C_\delta
\end{align}
uniformly in $\alpha$.
To this end, we compute
\begin{align}
\upt - \Omega &= \int_0^t ds \, \frac{d}{ds} \ups = -i \int_0^t ds \, (\mathcal{N} - \mathcal{A}_s ) \ups \notag\\
 &=  i t \mathcal{A}_0 \Omega - i \int_0^t ds \int_0^s  d\tau \frac{d}{d\tau} (\mathcal{N} - \mathcal{A}_\tau ) \Upsilon_{\tau} \notag\\
&= i t \mathcal{A}_0 \Omega -  \int_0^t ds \int_0^s d\tau \, (\mathcal{N} - \mathcal{A}_\tau )^2 \Upsilon_{\tau} + i \int_0^t ds \int_0^s d\tau \, \dot{A}_\tau \Upsilon_{\tau} ,
\end{align}
and hence
\begin{align}
\langle \Omega, \upt \rangle - 1 = it \langle \Omega, \mathcal{A}_0 \Omega \rangle - \int_0^t ds \int_0^s d\tau \,  \langle \Omega, (\mathcal{N} - \mathcal{A}_\tau )^2 \Upsilon_{\tau} \rangle  + i \int_0^t ds \int_0^s d\tau \, \langle \Omega , \dot{A}_\tau \Upsilon_{\tau} \rangle  .
\end{align}
Lemma \ref{lemma:bogo} 
implies that 
\begin{align}
\left| \langle \Omega, (\mathcal{N} - \mathcal{A}_\tau )^2 \Upsilon_{\tau} \rangle  \right| \leq \| (\mathcal{N} - \mathcal{A}_\tau)^2 \Omega  \| \leq C \alpha^{-4}  \quad \text{and } \quad \left| \langle \Omega , \dot{A}_\tau \Upsilon_{\tau} \rangle \right| \leq C \alpha^{-4} ,
\end{align}
and with the notation introduced in \eqref{def:F}
\begin{align}
 \langle \Omega, \mathcal{A}_0 \Omega \rangle = \alpha^{-2} \int dk \, F_0 (k, k) = \alpha^{-2} \int dk \,  \frac{1}{|k|^2} \left\| R_0^{1/2} e^{ik \, \cdot \, } \psi_{\varphi_0} \right\|_2^2 =: c_0 \alpha^{-2} 
 \end{align}
for $c_0 > 0$. We thus conclude that 
\begin{align}
\| \upt - \Omega \| \geq \left| \langle \Omega, \upt - \Omega \rangle \right|  \geq c_0  \alpha^{-2} |t| - C \alpha^{-4} t^2 .
\end{align}
Hence, for $t = \delta \alpha^2$ and $\delta > 0$ small enough, there exists a constant $C_\delta > 0$ such that $\| \upt - \Omega \| \geq C_\delta$, uniformly in $\alpha$.

\section*{Acknowledgements}

Financial support by the European Union's Horizon 2020 research and innovation programme under  the Marie Sk\l{}odowska-Curie grant agreement No. 754411 (S.R.) and the European Research Council under grant agreement  No. 694227 (N.L. and R.S.), as well as by the SNSF Eccellenza project PCEFP2 181153 (N.L.), the NCCR SwissMAP (N.L. and B.S.) and by the \emph{Deutsche  Forschungsgemeinschaft (DFG)} through the Research Training Group 1838: \emph{Spectral Theory and Dynamics of Quantum Systems} (D.M.) is gratefully acknowledged. 
B.S. gratefully acknowledges financial support from the Swiss National Science Foundation 
through the Grant ``Dynamical and energetic properties of Bose-Einstein condensates'' and from the European Research 
Council through the ERC-AdG CLaQS (grant agreement No 834782).  
D.M. thanks Marcel Griesemer for helpful discussions.

\end{document}